\pgfplotsset{compat=1.14}
\DeclareDocumentCommand\abs{s m} {\IfBooleanTF{#1}{\left|#2\right|}{\left|#2\right|}}
\DeclareDocumentCommand\cont{o m o} {C\IfNoValueF{#1}{^{#1}}(#2\IfNoValueF{#3}{;#3})}
\DeclareDocumentCommand\contc{o m o} {C_c\IfNoValueF{#1}{^{#1}}(#2\IfNoValueF{#3}{;#3})}
\DeclareDocumentCommand\sobolev{m m o} {H^{#1}(#2 \IfNoValueF{#3}{,#3})}
\DeclareDocumentCommand\lp{m m o} {L^{#1}\left(#2 \IfNoValueF{#3}{,#3}\right)}
\DeclareDocumentCommand\norm{s m o} {\IfBooleanTF{#1}{\|#2\|}{\left\|#2\right\|}\IfNoValueF{#3}{_{#3}}}
\DeclareDocumentCommand\seminorm{m o o} {\left|#1\right|\IfNoValueF{#2}{_{#2 \IfNoValueF{#3}{,#3}}}}
\DeclareDocumentCommand\ip{s m m o} {\IfBooleanTF{#1}{\langle #2,#3 \rangle}{\left\langle #2,#3 \right\rangle}\IfNoValueF{#4}{_{#4}}}
\DeclareDocumentCommand\dup{m m o} {\left\langle{#1,#2}\right\rangle\IfNoValueF{#3}{_{#3', #3}}}
\DeclareDocumentCommand\gaussian{O{0} O{I}} {g_{#1, #2}}
\DeclareDocumentCommand\littleo{s o m} {o\IfNoValueF{#2}{_{#2}}\IfBooleanTF{#1}{(#3)}{\left(#3\right)}}
\DeclareDocumentCommand\bigo{s o m} {\mathcal O\IfNoValueF{#2}{_{#2}}\IfBooleanTF{#1}{(#3)}{\left(#3\right)}}
\DeclareMathOperator{\sign}{sign}
\DeclareMathOperator{\e}{e}
\newcommand{\revision}[1]{\textcolor{blue}{#1}}
\renewcommand{\revision}[1]{#1}
\newcommand{\commut}[2]{[#1, #2]}
\newcommand{\dummy}{\,\cdot\,}
\newcommand{\expect}[0]{\mathbf{E}}
\newcommand{\iip}[2]{\left(\!\left(#1, #2\right)\!\right)}
\newcommand{\nat}{\mathbf N}
\newcommand{\real}{\mathbf R}
\newcommand{\integer}{\mathbf Z}
\newcommand{\torus}{\mathbf T}
\newcommand{\hess}{\nabla^2}
\newcommand{\vect}[1]{\boldsymbol{\mathbf #1}}
\newcommand{\mat}[1]{\vect #1}
\renewcommand{\d}{\mathrm d}
\renewcommand{\t}{\mathsf T}
\DeclareDocumentCommand \derivative{s m o m}{%
    \def\@der{\IfBooleanTF{#1}{\mathrm{d}}{\partial}}
    \def\@default{%
        \mathchoice{%
                \frac{%
                    \@der\ifnum\pdfstrcmp{#2}{1}=0\else^{#2}\fi {\IfNoValueTF{#3}{}{#3}}
                }{%
                    \@for\@token:={#4}\do{\@der \@token}
                }
            } {%
                \@for\@token:={#4}\do{\@der_\@token} \IfNoValueTF{#3}{}{#3}
            } {} {}
    }
    \IfBooleanTF{#1}{\IfNoValueTF{#3}{\@default}{%
                #3%
                \ifnum\pdfstrcmp{#2}{1}=0'\else%
                \ifnum\pdfstrcmp{#2}{2}=0''\else%
                \ifnum\pdfstrcmp{#2}{3}=0^{(3)}\else%
                \ifnum\pdfstrcmp{#2}{4}=0^{(4)}\else%
                \ifnum\pdfstrcmp{#2}{5}=0^{(5)}\else%
                ^{(#2)}\fi\fi\fi\fi\fi
            }
        }{\@default}
}
\definecolor{darkred}{rgb}{.5,0,0}
\definecolor{darkgreen}{rgb}{0,.5,0}
\definecolor{darkblue}{rgb}{0,0,.5}
\theoremstyle{plain}
\newtheorem{assumption}{Assumption}[section]
\newtheorem{lemma}{Lemma}[section]
\newtheorem{corollary}{Corollary}[section]
\newtheorem{theorem}{Theorem}[section]
\newtheorem{proposition}{Proposition}[section]
\newtheorem{result}{Result}[section]
\newtheorem{remark}{Remark}[section]
\numberwithin{equation}{section}
\newcounter{urbainCounter}
\crefname{equation}{}{}
\crefname{paragraph}{\S\!}{\S}
\newcommand{\email}[1]{\href{#1}{#1}}
\newcommand{\orcid}[1]{\href{https://orcid.org/#1}{\includegraphics[width=.4cm]{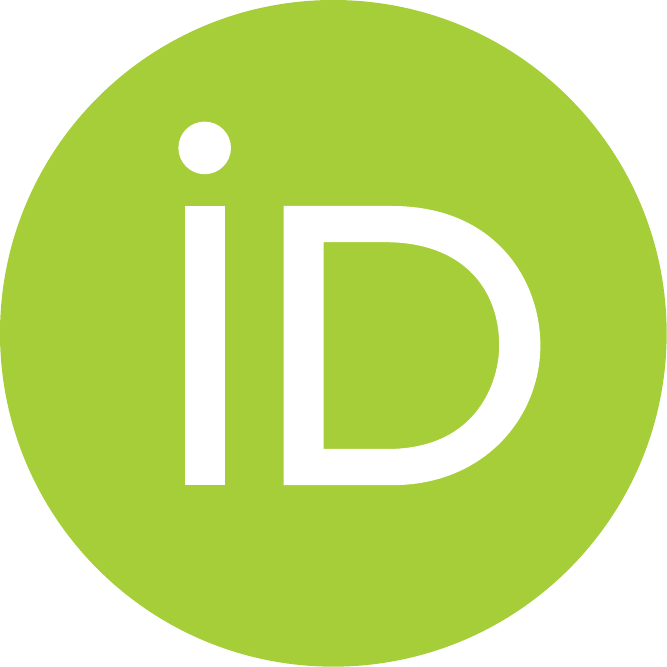}}}
\newcommand{\eps}{\varepsilon}
\newcommand{\dps}{\displaystyle}
\newcommand{\cX}{\mathcal{X}}
\newcommand{\ri}{\mathrm{i}}
\renewcommand{\leq}{\leqslant}
\renewcommand{\geq}{\geqslant}
\date{\today}
\title{Scaling limits for the generalized Langevin equation}
\author{%
  G.A. Pavliotis\thanks{Department of Mathematics, Imperial College London (\email{g.pavliotis@imperial.ac.uk})}%
  \hspace{2mm}\orcid{0000-0002-3468-9227}%
  \and G. Stoltz\thanks{CERMICS, \'Ecole des Ponts, France \& MATHERIALS, Inria Paris (\email{gabriel.stoltz@enpc.fr})}
  \hspace{2mm}\orcid{0000-0002-2797-5938}%
  \and U. Vaes\thanks{Department of Mathematics, Imperial College London (until October 2020) and MATHERIALS, Inria Paris (since November 2020) (\email{urbain.vaes@inria.fr})}%
  \hspace{2mm}\orcid{0000-0002-7629-7184}%
}
\begin{document}
\maketitle
\begin{abstract}
  In this paper, we study the diffusive limit of solutions to the \emph{generalized Langevin equation} (GLE) in a periodic potential.
  Under the assumption of quasi-Markovianity,
  we obtain sharp longtime equilibration estimates for the GLE using techniques from the theory of hypocoercivity.
  We then show asymptotic results for the effective diffusion coefficient in the small correlation time regime,
  as well as in the overdamped and underdamped limits.
  Finally,
  we employ a recently developed numerical method~\cite{roussel2018spectral} to calculate the effective diffusion coefficient for a wide range of (effective) friction coefficients,
  confirming our asymptotic results.
  \\[.1cm]

  \noindent \textbf{Keywords}:
      Generalized Langevin equation,
      Quasi-Markovian models,
      Longtime behavior, Hypocoercivity,
      Effective diffusion coefficient,
      Overdamped and underdamped limits,
      Fourier/Hermite spectral methods.\\[.1cm]

  \noindent \textbf{AMS subject classifications}: %
      35B40, 
      35Q84, 
      46N30, 
      82M22, 
      60H10. 
\end{abstract}

\newcommand{\n}{n}
\newcommand{\N}{N}

\section{Introduction}%
\label{sec:introduction}

The \emph{generalized Langevin \revision{equation}} (GLE) was originally proposed in the context of molecular dynamics and nonequilibrium statistical mechanics, in order to describe the motion of a particle interacting with a heat bath at equilibrium~\cite{mori1,mori1965continued,zwanzig1973nonlinear};
see also~\cite{KSTT02,MR2248987,pavliotis2011applied} for a rigorous derivation of the equation from a simple model of an open system,
consisting of a small Hamiltonian system coupled to an infinite-dimensional, Hamiltonian heat reservoir \revision{modeled by the linear wave equation}.
The GLE has applications in many areas of science and engineering,
ranging from atom/solid-surface scattering~\cite{doll1975generalized} to polymer dynamics~\cite{snook2006langevin},
sampling in molecular dynamics~\cite{PhysRevLett.102.020601,ceriotti2010colored,ceriotti2010novel},
and global optimization with simulated annealing~\cite{gidas1985global,2020arXiv200306448C}.

The GLE is closely related,
in a sense made precise below,
to the simpler \emph{Langevin} (also known as \emph{underdamped Langevin}) equation,
which itself reduces to the \emph{overdamped Langevin} equation in the large friction limit.
Arranged from the simplest to the most general,
and written in one dimension for simplicity,
these three standard models are the following:
\begin{subequations}%
  \label{eq:models}%
  \begin{align}
    \label{eq:model:overdamped}%
    \dot q &= - V'(q) + \sqrt{2 \, \beta^{-1}} \, \dot W, \\
    \label{eq:model:langevin}%
    \ddot q &= - V'(q) - \gamma \, \dot q + \sqrt{2 \, \gamma \, \beta^{-1}} \, \dot W, \\
    \label{eq:model:generalized}%
    \ddot q &= -V'(q) - \int_{0}^{t} \gamma(t-s) \, \dot q(s) \, \d s + F(t), \quad \text{where} \quad \expect(F(t_1) F(t_2)) = \beta^{-1} \, \gamma(|t_1-t_2|).
  \end{align}
\end{subequations}
Here $V$ is an external potential, $F$ is a \revision{stationary Gaussian} stochastic forcing,
$\beta$ is the inverse temperature,
the parameter $\gamma$ in~\eqref{eq:model:langevin} is the \emph{friction coefficient},
and the function $\gamma(\cdot)$ in~\eqref{eq:model:generalized} is a \emph{memory kernel}.
The constraint $\expect(F(t_1) F(t_2)) = \beta^{-1} \, \gamma(\revision{|t_1-t_2|})$ is known as the \emph{fluctuation/dissipation} relation,
and it guarantees that the canonical measure at temperature $\beta^{-1}$ is a stationary distribution of~\eqref{eq:model:generalized}; see \cref{sec:model_and_derivation_of_the_effective_diffusion}.
Since the force field in~\eqref{eq:model:generalized} is conservative
-- it derives from the potential~$V$ -- and the fluctuation/dissipation relation is assumed to hold,
equation~\eqref{eq:model:generalized} is sometimes called an \emph{equilibrium GLE}~\cite{MR3986068}.

The GLE~\eqref{eq:model:generalized} is a non-Markovian stochastic integro-differential equation which,
in general, is less amenable to analysis than the Langevin~\eqref{eq:model:langevin} and overdamped Langevin~\eqref{eq:model:overdamped} equations.
Instead of studying the GLE in its full generality,
we will restrict our attention to the case where the GLE is equivalent to a finite-dimensional system of Markovian stochastic differential equations (SDEs).
This assumption is known as the quasi-Markovian approximation,
and it is employed in many mathematical works on the GLE.
It is possible to show that it is verified when
the Laplace transform of the memory kernel is a finite continued fraction~\cite{mori1965continued} or,
relatedly, when the spectral density of the memory kernel is rational, in the sense of~\cite{MR1889227,MR2248987};
see also~\cite{pavliotis2011applied} for more details on the quasi-Markovian approximation.
In this paper, we will study two particular quasi-Markovian GLEs,
corresponding the cases where $\gamma(\cdot)$ is the autocorrelation function of scalar Ornstein--Uhlenbeck (OU) noise and harmonic noise;
see \cref{sec:model_and_derivation_of_the_effective_diffusion} for precise definitions.

For quasi-Markovian GLEs,
it is possible to \revision{rigorously} prove the passage to \cref{eq:model:langevin}
in the so-called \emph{white noise limit}.
This was done in~\cite{MR2793823} by leveraging recent developments in multiscale analysis~\cite{pavliotis2008multiscale}.
More precisely, in~\cite{MR2793823} the authors showed that,
with appropriate scalings,
the solution of the quasi-Markovian GLE with OU noise converges,
in the sense of weak convergence of probability measures on the space of continuous functions,
to that of the Langevin equation~\eqref{eq:model:langevin}
when the autocorrelation function of the noise converges to a Dirac delta measure.

Our objectives in this paper are twofold:
to study the longtime behavior of solutions to a simple quasi-Markovian GLE under quite general assumptions on the potential $V$ and,
based on this analysis,
to study scaling limits of the effective diffusion coefficient associated with the dynamics in the particular case where $V$ is periodic.

\paragraph{Longtime behavior.}%
\label{par:longtime_behavior_}
The longtime behavior of quasi-Markovian GLEs was studied in several settings in the literature.
The exponential convergence of the corresponding semigroup to equilibrium was proved in~\cite{MR1889227}.
In this paper,
which is part of a series of papers studying a model consisting of a chain of anharmonic oscillators coupled to Hamiltonian heat reservoirs~\cite{MR1685893,MR1705589,MR1764365},
the authors proved the convergence in an appropriately weighted $L^{\infty}$ norm,
by relying on Lyapunov-based techniques for Markov chains.
We also mention~\cite{MR1924935,MR1931266,rey-bellet,MR2857021,MR3509213} as useful references on the Lyapunov-based approach.
Later, in~\cite{MR2793823},
the exponential convergence to equilibrium was proved for the GLE driven by OU noise using Villani's hypocoercivity framework.
The authors showed the exponential convergence of the Markov semigroup both in relative entropy and in a weighted $H^1$ space.
More recently, exponential convergence results in an appropriately weighted $L^\infty$ norm were obtained in~\cite{MR3986068}
for a more general class of quasi-Markovian GLEs than had been considered previously,
allowing non-conservative forces and position-dependent noise.

Roughly speaking, the first aim of this paper is to obtain,
for quasi-Markovian GLEs, $H^1$ and $L^2$ convergence estimates similar to those of~\cite{MR2793823} but valid
\emph{uniformly over the space of parameters} that enter the equations, i.e. the parameters of the noise process driving the dynamics.
This turns out to be crucial for proving the validity of asymptotic expansions for the effective diffusion coefficient in several limits of interest -- our second goal.

\paragraph{Effective diffusion in a periodic potential.}%
\label{par:paragraph_name}
The behavior of a Brownian particle in a periodic potential has applications
\revision{in} many areas of science,
including electronics~\cite{MR0158437,strat2},
biology~\cite{MR1895137},
surface diffusion~\cite{gomer1990diffusion} and Josephson tunneling~\cite{barone1982physics}.
For the Langevin~\eqref{eq:model:langevin} and overdamped Langevin~\eqref{eq:model:overdamped} equations,
as well as for all finite-dimensional approximations of the GLE~\eqref{eq:model:generalized},
a functional central limit theorem (FCLT) holds under appropriate assumptions on the initial condition (e.g.\ stationarity, see~\cite[Theorem 2.5]{MR2793823}):
applying the diffusive rescaling,
the position process $q$ converges as $\varepsilon \to 0$,
in the sense of weak convergence of probability measures on $C([0, T], \real)$,
to a Brownian motion:
\begin{equation}
  \label{eq:functional_central_limit}
  \left\{ \varepsilon \, q(t/\varepsilon^2) \right\}_{t \in [0, T]} \Rightarrow \left\{ \sqrt{2 D} \, W(t) \right\}_{t \in [0, T]},
\end{equation}
where the \emph{effective diffusion coefficient} $D$ depends on the model and its parameters.
This is shown in, for example,~\cite{MR2427108} for the overdamped Langevin and Langevin dynamics,
and was proved more recently in~\cite[Theorem 2.5]{MR2793823} for finite-dimensional approximations of the GLE.

In \revision{spatial dimension one},
the behavior of the effective diffusion coefficient associated with Langevin dynamics~\eqref{eq:model:langevin} is well understood;
see, for example,~\cite{MR2394704} for a theoretical treatment
and~\cite{MR2427108} for numerical experiments.
The scaling of the effective diffusion coefficient with respect to the friction coefficient for Langevin dynamics has also been studied extensively in the physics literature.
Whereas in the large friction limit a universal bound scaling as $\frac{1}{\gamma}$ holds for the diffusion coefficient in arbitrary dimensions,
such a bound is true in the underdamped limit only in one dimension~\cite{MR2394704}.
Claims that underdamped Brownian motion in periodic and random potentials in dimensions higher than one can lead to anomalous diffusion
have been made~\cite{sancho_al04a,sancho_al04b} but \revision{seem} hard to justify rigorously.

The case of non-Markovian Brownian motion in a periodic potential has received less attention, even in one dimension.
Early quantitative results were obtained in~\cite{igarashi1988non} by means of numerical experiments
using the matrix-continued fraction method (see, e.g.,~\cite[Section~9.1.2]{MR987631}),
and verified in~\cite{igarashi1992velocity} by analog simulation.
In these papers,
the authors studied the dependence of the diffusion coefficient on the memory of the noise, and
they were also able to calculate the velocity autocorrelation function and to study
its dependence on the type of noise, i.e.\ OU or harmonic noise.
Given that few authors have investigated the problem quantitatively since then,
and in light of the increased computational power available today,
there is now scope for a more in-depth numerical study of the problem.

\paragraph{Our contributions.}%
\label{par:our_contributions}
Our contributions in this paper are the following:
\begin{itemize}
  \item
    We obtain sharp parameter-dependent estimates for the rate of convergence of the GLE to equilibrium
    in the particular cases of scalar OU and harmonic noises,
    thereby complementing previous results in~\cite{MR2793823}.
    Our approach is an explicit version of the standard hypocoercivity method~\cite{MR2562709,Herau07}
    and uses ideas from~\cite{OL15,MR3925138} for the definition of an appropriate auxiliary norm.
  \item
    We show rigorously that the diffusive and white noise limits commute for quasi--Markovian approximations of the GLE.
    In other words,
    assuming that the memory of the noise in the GLE is encoded by a small parameter $\nu$,
    and denoting by $D_{\gamma}$ and $D_{\gamma, \nu}$ the effective diffusion coefficients associated with~\eqref{eq:model:langevin} and~\eqref{eq:model:generalized},
    respectively, we prove that
    \[
      \lim_{\nu \to 0} D_{\gamma, \nu} = D_{\gamma}.
    \]
  \item
    For the case of OU noise,
    we study the influence on the effective diffusion coefficient of the friction coefficient
    that appears in the $\nu \to 0$ limiting Langevin equation,
    a coefficient that we will also refer to as the \emph{friction coefficient} by a slight abuse of terminology.
    We show in particular, both by rigorous asymptotics and by numerical experiments,
    that the diffusive limit commutes with the overdamped limit $\gamma \to \infty$.

  \item
    We corroborate most of our theoretical analysis by careful numerical experiments,
    thereby complementing the results of the early studies~\cite{igarashi1992velocity,igarashi1988non}.
    In these studies,
    because of the hardware limitations at the time,
    only about 15 basis functions per dimension could be used in 3 dimensions (3D)
    -- position, momentum, and one auxiliary variable --
    and very few simulations could be achieved in 4 dimensions (4D).
    With today's hardware and the availability of high-quality mathematical software libraries,
    we were able to run accurate simulations in both 3D and
    4D over a wide range of friction coefficients, \revision{including the underdamped limit~$\gamma\to 0$.}
\end{itemize}

The rest of the paper is organized as follows.
In \cref{sec:model_and_derivation_of_the_effective_diffusion},
we present the finite-dimensional Markovian models of the GLE that we focus on throughout the paper
and we summarize our main results.
In \cref{sec:convergence_of_the_gle_dynamics},
we obtain an explicit estimate for the rate of convergence to equilibrium of the solution to the GLE.
In \cref{sec:multiscale_analysis},
we carry out a multiscale analysis with respect to the correlation time of the noise,
and we also study the overdamped and underdamped limits of the effective diffusion \revision{coefficient of the} GLE.
\Cref{sec:conclusions} is reserved for conclusions and perspectives for future work.

In the appendices,
we present a few auxiliary results:
in \cref{sec:confirmation_of_the_rate_of_convergence_in_the_quadratic_case},
we assess the sharpness of the convergence rate found in~\cref{sec:convergence_of_the_gle_dynamics},
\revision{in the particular case of a quadratic potential};
in \cref{sec:longtime_behavior_for_model_gl2},
we present a convergence estimate for harmonic noise;
in \cref{sec:estimates_underdamped_limit},
we derive the technical results used in \cref{sub:gle:the_underdamped_limit}.

\section{Model and main results}%
\label{sec:model_and_derivation_of_the_effective_diffusion}

The model and the results we present are all stated in a one-dimensional setting.
This allows to simplify the presentation and \revision{reduces} the number of parameters to be considered:
the mass of the system is set to~1
(instead of considering a general symmetric positive definite mass matrix)
and the friction~$\gamma(t)$ is scalar valued (whereas in general it would be a function with values in the set of symmetric positive matrices).
The extension of our analysis to higher dimensional cases poses however no difficulties for most of the arguments
-- with the notable exception of the underdamped limit in \cref{sub:gle:the_underdamped_limit}.

\subsection{Model}%
\label{sub:model}

Throughout this paper,
we assume that $V$ is \revision{a smooth} one-dimensional potential
that is either confining (in particular, $\e^{-\beta V} \in L^1(\real)$) or periodic with period $2 \pi$.
The configuration of the system is described by its position~$q \in \cX$ and the associated momentum~$p \in \real$.
Positions are either in~$\mathcal{X} = \real$ for confining potentials,
or in the torus $\cX = \torus = \revision{\real / 2\pi \integer}$ for periodic potentials.

\paragraph{General structure of the colored noise.}
Let us first consider a memory kernel of the form
\begin{equation}
  \label{eq:memory_kernel_markovian_approximation}
  \gamma(t) = \mathbbm{1}_{t \geq 0} \, \vect\lambda^\t \e^{-t \mat A} \vect \lambda,
\end{equation}
for a (possibly nonsymmetric) matrix $\mat A \in \real^{\n \times \n}$ with eigenvalues with positive real parts,
and a vector $\vect \lambda \in \real^\n$.
It is well-known that the GLE associated with~\eqref{eq:memory_kernel_markovian_approximation} is quasi-Markovian~\cite[Proposition 8.1]{pavliotis2011applied}:
it is equivalent to a Markovian system of stochastic differential equations (SDEs),
\begin{subequations}
  \label{eq:markovian_approximation}
  \begin{align}
    \label{eq:markovian_approximation_q}
    & \d q_t = p_t \, \d t, \\
    \label{eq:markovian_approximation_p}
    & \d p_t = - V'(q_t) \, \d t + \vect \lambda^\t \vect z_t \, \d t, \\
    \label{eq:markovian_approximation_z}
    & \d \vect z_t = - p_t \, \vect \lambda \, \d t -  \mat A \vect z_t \, \d t + \mat \Sigma \, \d \vect W_t, \qquad \vect z_0 \sim \mathcal N(0, \beta^{-1} \mat I_n),
  \end{align}
\end{subequations}
where $\Sigma \in \real^{\n \times \n}$ is related to $\mat A$ by the fluctuation/dissipation theorem:
\begin{equation*}
  \mat \Sigma \mat \Sigma^\t = \beta^{-1} \, \left(\mat A + \mat A^\t\right).
\end{equation*}
The equivalence comes from the fact that~\eqref{eq:markovian_approximation_z} can be integrated as
\[
  \vect z_t = -\int_0^t \e^{-(t-s) \mat A} \vect \lambda \, p_s \, \d s + \mathcal{N}_t,
  \qquad \mathcal{N}_t = \e^{-t \mat A} \vect z_0 + \int_0^t \e^{-(t-s)\mat A} \mat \Sigma \, \d\vect W_s,
\]
with $\expect \left( \mathcal{N}_t \right) = 0$ and, by the It\^o isometry,
\[
  \begin{aligned}
    \expect \left( \mathcal{N}_{t_1}\mathcal{N}_{t_2}^\t \right)
    & = \e^{-t_1 \mat A} \expect \left(\vect z_0 \vect z_0^\t\right) \e^{-t_2 \mat A^\t}
        + \frac1\beta \int_0^{\min(t_1,t_2)} \e^{-(t_1-s) \mat A} \left( \mat A + \mat A^\t \right) \e^{-(t_2-s) \mat A^\t} \d s \\
    & = \e^{-t_1 \mat A} \left( \frac{1}{\beta} \, \mat I_n \right) \e^{-t_2 \mat A^\t}
    + \frac1\beta \, \e^{-t_1 \mat A} \left(\int_0^{\min(t_1,t_2)} \derivative*{1}{u}\left(\e^{u \mat A} \, \e^{u \mat A^\t}\right) \Big|_{u=s} \, \d s \right) \e^{-t_2 \, \mat A^\t}\\
    & = \frac1\beta \e^{-(t_1-\min(t_1,t_2)) \mat A} \e^{-(t_2-\min(t_1,t_2)) \mat A^\t},
  \end{aligned}
\]
\revision{so} $\expect \left[ \left(\vect \lambda^\t \mathcal{N}_{t_1}\right)\left(\vect \lambda^\t \mathcal{N}_{t_2}\right)\right] = \beta^{-1} \gamma\left(|t_1-t_2|\right)$.

The dynamics~\eqref{eq:markovian_approximation} is ergodic with respect to the probability measure
\begin{align}
  \label{eq:mu}
  \mu(\d q \, \d p \, \d \vect z) = Z^{-1} \exp \left( -\beta \left( H(q, p) + \frac{{\abs{\vect z}}^2}{2} \right) \right) \d q \, \d p \, \d \vect z,
  \qquad H(q, p) = V(q) + \frac{p^2}{2},
\end{align}
with $Z$ the normalization constant.
\revision{Note} that the invariant measure is independent of the parameters of the noise $\vect \lambda$ and $\mat A$.
The generator of the Markov semigroup associated with the dynamics is given by
\[
  \mathcal L = \mathcal L_{\textrm{anti}} + \mathcal L_{\textrm{sym}},
\]
where the symmetric part $\mathcal L_{\rm sym}$ of the generator, considered as an operator on~$L^2(\mu)$, is related to the fluctuation and dissipation terms in~\cref{eq:markovian_approximation_z}; while the antisymmetric part $\mathcal L_{\textrm{anti}}$ corresponds to the Hamiltonian part of the dynamics (with Hamiltonian subdynamics for the couples $(q,p)$ and
$(p,\vect \lambda^\t \vect z)$) and an additional evolution in the $\vect z$ degrees of freedom associated with the antisymmetric part of the matrix~$\mat A$:
\[
\begin{aligned}
  \mathcal L_{\textrm{anti}} &=  p \, \derivative{1}{q} - \derivative*{1}[V]{q}(q) \, \derivative{1}{p} + \vect \lambda^\t \vect z \, \derivative{1}{p}
    - p \, \vect \lambda^\t \nabla_{\vect z}
    + \vect z^\t \mat A_{\rm a} \nabla_{\vect z}, \\
  \mathcal L_{\textrm{sym}} &= -\vect z^\t  \mat A_{\rm s} \nabla_{\vect z} + \beta^{-1} \, \mat A_{\rm s} : \nabla_{\vect z}^2,
\end{aligned}
\]
with $\mat A_{\rm s} = (\mat A + \mat A^\t)/2$ and $\mat A_{\rm a} = \revision{(\mat A - \mat A^\t)}/2$ the symmetric and antisymmetric parts of $\mat A$, respectively, $\nabla_{\vect z}^2$ the Hessian operator and $:$ the Frobenius inner product.

\paragraph{Specific models for the colored noise.}
In this study, we consider the two following models for the process $\vect z$:
\begin{description}
  \item[GL1]
    The noise is modeled by a scalar OU process ($\n = 1$),
    so $\vect \lambda$, $\mat A$, $\mat \Sigma$ and $\vect z$ are scalar quantities.
    We employ the parametrization
    \[
      \vect \lambda = \frac{\sqrt{\gamma}}{\nu}, \qquad \mat A = \frac{1}{\nu^2},
    \]
    for two positive parameters $\nu$ and $\gamma$. The associated memory kernel is
    \[
      \gamma(t) = \frac{\gamma}{\nu^2} \, \e^{-t/\nu^2} \, \mathbbm{1}_{t \geq 0}.
    \]
    Note that
    \[
      \int_0^{+\infty} \gamma(t) \, \d t = \gamma,
    \]
    which motivates the abuse of notation between the constant~$\gamma$ and the function~$\gamma(t)$
    (the meaning of the object~$\gamma$ under consideration should however be clear from the context).
    Moreover,
    \begin{equation}
      \label{eq:Langevin_limit}
      \gamma(t) \xrightarrow[\nu \to 0]{} \gamma \delta_0,
    \end{equation}
    which corresponds to a memoryless, Markovian limit.

  \item[GL2]
    The noise is modeled by a generalized version of harmonic noise:
    \begin{equation*}
      \vect \lambda = \frac{1}{\nu} \, \begin{pmatrix} \sqrt{\gamma} \\ 0 \end{pmatrix},
      \qquad
      \mat A = \frac{1}{\nu^2} \begin{pmatrix} 0 & -\alpha \\ \alpha & \alpha^2 \end{pmatrix},
      \qquad
      \mat \Sigma = \sqrt{\frac{2 \alpha^2}{\beta \nu^2}} \, \begin{pmatrix} 0 & 0 \\ 0 & 1 \end{pmatrix}.
    \end{equation*}
    The associated memory kernel is given by $\dps \frac{\gamma}{\nu^2} \, \e^{-2t/\nu^2} \, \mathbbm{1}_{t \geq 0}$ when $\alpha = 2$ and otherwise by
    \begin{equation}
      \label{eq:harmonic:autocorrelation}
      \gamma(t) = \frac{\gamma}{\nu^2} \, \exp\left( -\frac{\alpha^2 t}{2 \, \nu^2}\right) \left[
        \frac{\alpha}{\sqrt{\abs{4 - \alpha^{2}}}} \, s_{\alpha} \left(\frac{\sqrt{\abs{4 - \alpha^{2}}}}{2} \, \frac{\alpha t}{\nu^2} \right)
        + c_{\alpha} \left({\frac{\sqrt{\abs{4 - \alpha^{2}}}}{2}} \, \frac{\alpha t}{\nu^2} \right) \right] \mathbbm{1}_{t \geq 0},
    \end{equation}
    where $(s_{\alpha}, c_{\alpha})$ are the functions
    $(\sin, \cos)$ when $\alpha < 2$ and $(\sinh, \cosh)$ when $\alpha > 2$.
    The latter expression can be found by computing the eigenvalues of~$\mat A$, writing the solution as a sum of exponentials of these eigenvalues multiplied by the time~$t$, and adjusting the coefficients in the linear combination so that $\gamma(0) = \vect\lambda^\t \vect\lambda = \gamma/\nu^2$ and $\gamma'(0^+) = -\vect \lambda^\t \mat A \vect \lambda$.

    In particular, since $\alpha^2 - \alpha \sqrt{\alpha^2 - 4}\to 2$ as $\alpha \to \infty$,
    we obtain that
    \[
      \gamma(t) \xrightarrow[\alpha \to \infty]{} \frac{\gamma}{\nu^2} \, \e^{-t/\nu^2} \, \mathbbm{1}_{t \geq 0}
    \]
    for \revision{any} $t \geq 0$, which is the autocorrelation function of the noise in the model GL1.
    The limit $\alpha \to \infty$ corresponds to an overdamped limit of the noise since~\eqref{eq:markovian_approximation_z} reads,
    in the absence of the forcing term~$-p_t \vect\lambda \, \d t$ and with the notation $\vect z = (z_1,z_2)$,
    \[
      \begin{aligned}
        \d z_{1,t} & = \frac{\alpha}{\nu^2} z_{2,t} \, \d t, \\
        \d z_{2,t} & = -\frac{\alpha}{\nu^2} z_{1,t} \, \d t - \frac{\alpha^2}{\nu^2} z_{2,t} \, \d t + \sqrt{\frac{2 \alpha^2}{\beta \nu^2}} \, \d W_t, \\
      \end{aligned}
    \]
    which, after rescaling time by a factor~$\alpha/\nu^2$, corresponds to a Langevin dynamics with friction~$\alpha$ for the $\vect z$ variable.
\end{description}

In both models~GL1 and~GL2,
the parameters $\gamma$ and $\nu$ (or, rather, $\nu^2$) enter as scalings in the autocorrelation function,
with $\nu$ being essentially the square root of the correlation time of the noise.
In \revision{the} model~GL2, the parameter $\alpha$ encodes the shape of the function.
Examples of memory kernels are illustrated in~\cref{fig:memory_kernel} for the two models under consideration and various values of~$\alpha$.

\begin{figure}[ht]
  \centering
  \includegraphics[width=0.8\linewidth]{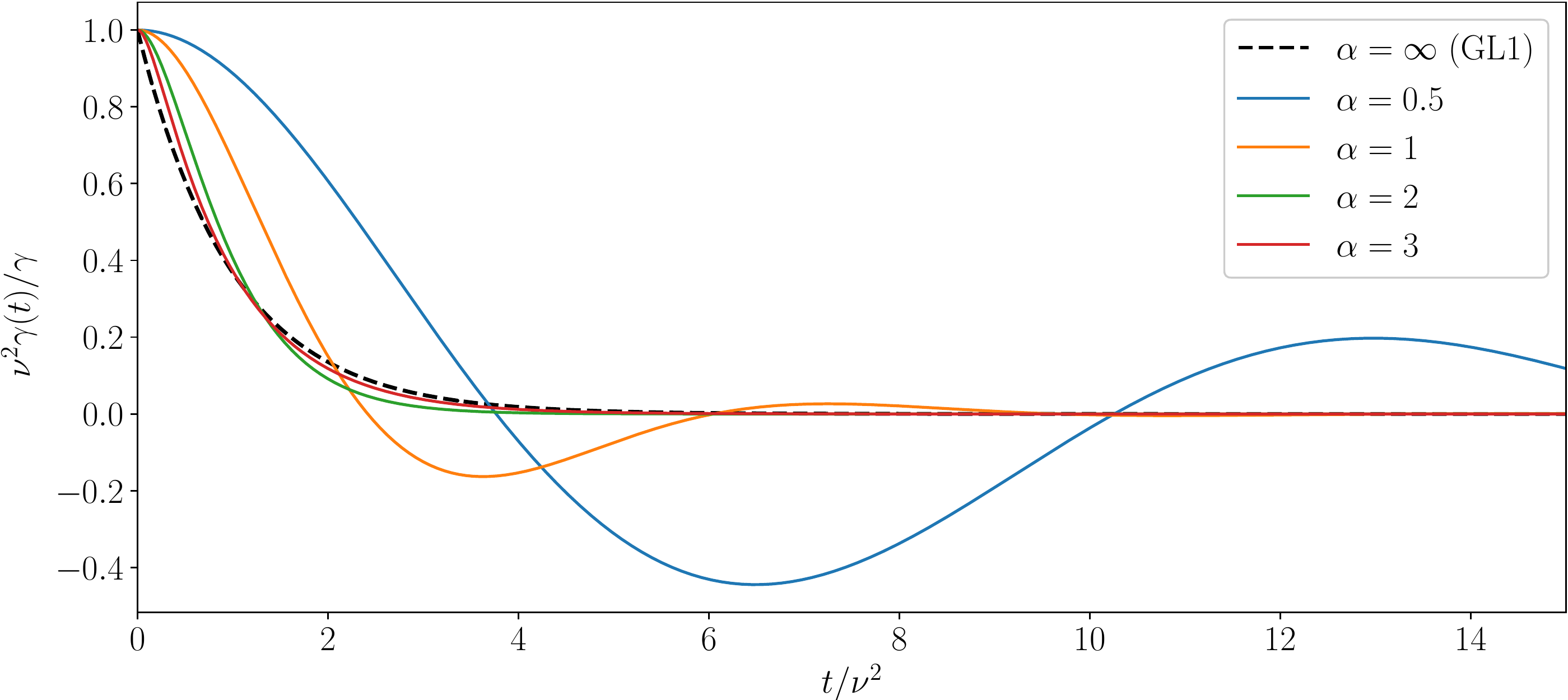}
  \caption{Memory kernels associated with models GL1 and GL2 (with time rescaled by a factor~$\nu^{-2}$ and memory kernel rescaled by a factor~$\nu^2/\gamma$ in order to have shapes independent of the choices of~$\gamma,\nu$).}
  \label{fig:memory_kernel}
\end{figure}

\paragraph{Effective diffusion coefficient.}
We consider here the case when $\cX = \torus$,
since the definition of an effective diffusion does not make sense for confining potentials.
The derivation of the effective diffusion coefficient for systems of SDEs of the type~\eqref{eq:markovian_approximation} in a periodic potential is well understood;
see for example~\cite{MR2427108, pavliotis2008multiscale} for a formal derivation and~\cite{MR2793823} for a rigorous proof for GLE. The diffusion coefficient can be expressed in terms of the solution to a Poisson equation:
\begin{subequations}
\begin{align}
  \label{eq:introduction:effective_diffusion}
  D &= \int_{\torus \times \real \times \real^\n}  \phi \, p \, \d \mu, \\
  \label{eq:introduction:poisson_equation}
  - \mathcal L \phi &= p.
\end{align}
\end{subequations}
The Poisson equation~\cref{eq:introduction:poisson_equation} is equipped with periodic boundary conditions in $[-\pi, \pi]$
and the condition that $\phi$ is square integrable with respect to the Gibbs measure,
\emph{i.e.} $\phi \in \lp{2}{\mu}$. In fact, in order to guarantee the uniqueness of the solution to~\eqref{eq:introduction:poisson_equation}, one should further assume that~$\phi$ has average~0 with respect to~$\mu$.  Since $\mathcal L_{\rm FD}$ and $\mathcal L_{\textrm{ham}}$ are symmetric and antisymmetric operators on $\lp{2}{\mu}$, respectively, and since $\mathcal L_{\rm FD} = \beta^{-1} \, \e^{\beta \abs{\vect z}^2} \, \nabla_{\vect z} \cdot (\e^{-\beta \abs{\vect z}^2} \mat A_{\rm s} \nabla_{\vect z})$,
the effective diffusion coefficient can also be rewritten as
\begin{equation}
  \label{eq:diffusion_coefficient_depending_on_grad_z}
  D = \beta^{-1} \, \int_{\torus \times \real \times \real^\n} \nabla_{\vect z}\phi^\t \mat A_{\rm s} \nabla_{\vect z} \phi\, \d \mu.
\end{equation}
Note finally that definitions similar to~\cref{eq:introduction:effective_diffusion,eq:introduction:poisson_equation} hold also for the Langevin dynamics, provided that $\mathcal L$ is defined as the corresponding generator (see~\eqref{eq:Dgamma_Lang}).

\subsection{Main results}%
\label{sub:main_results}
Before stating our main results, let us introduce some notation.
In this paper, $H^1(\mu)$ denotes the subspace of $\lp{2}{\mu}$ of functions whose gradient is in~$\lp{2}{\mu}$,
equipped with the usual weighted Sobolev inner product.
The spaces $L^2_0(\mu)$ and $H^1_0(\mu)$ are the subspaces of $\lp{2}{\mu}$ and $H^1(\mu)$ of functions with mean~0 with respect to~$\mu$, respectively,
and $\mathcal B(E)$ is the space of bounded linear operators on a Banach space~$E$,
equipped with the operator norm
\[
  \norm{\mathcal A}[\mathcal B(E)] = \sup_{f \in E\backslash \{0\} } \frac{\norm{\mathcal A f}[E]}{\norm{f}[E]}.
\]
\paragraph{Exponential decay and resolvent estimates.}%
\label{par:longtime_estimates}
Our first result concerns the convergence to equilibrium for the model GL1.
In line with the standard approach in molecular dynamics,
we state our results for the semigroup $\e^{t\mathcal L}$,
\revision{which describes the time evolution of average properties,}
but we note that our estimates apply equally to the adjoint semigroup $\e^{t \mathcal L^*}$,
where $\mathcal L^*$ denotes the $\lp{2}{\mu}$ adjoint of $\mathcal L$.
This can be shown by duality and,
as mentioned in~\cite{MR3509213, pavliotis2011applied},
can also be understood from the fact that
$\mathcal L^* = - \mathcal L_{\textrm{ham}} + \mathcal L_{\textrm{FD}}$ coincides with $\mathcal L$ up to the sign of the Hamiltonian part.
In turn, convergence estimates for $\e^{t\mathcal L^*}$ can be translated into
estimates for $\e^{t \mathcal L^\dagger}$,
where $\mathcal L^\dagger$ denotes the Fokker--Planck operator associated with the dynamics.
This is because $\e^{t\mathcal L^\dagger} \psi = \mu \, \e^{t \mathcal L^*} (\mu^{-1} \psi)$
for any test function $\psi \in L^2(\mu^{-1})$ where,
by a slight abuse of notation,
$\mu$ denotes in this context the Lebesgue density of the measure~\eqref{eq:mu}.

Although our results on the effective diffusion apply only to the case of a periodic potential,
\cref{thm:hypocoercivity_h1,thm:hypoelliptic_regularization} below apply also when $V$ is a confining potential,
provided that the following assumption holds.
\begin{assumption}
  \label{assumption:assumption_potential}
  The potential $V$ is smooth. When the position space is~$\real$ we moreover assume that the following conditions are satisfied:
  \begin{enumerate}[(i)]
    \item $\e^{-\beta V} \in L^1(\real)$,
    \item $\dps \norm{V''}[\infty]:=\sup_{q \in \real}|V''(q)| < \infty$,
    \item the following Poincar\'e inequality holds true for some constant $R_{\beta} > 0$:
  \begin{align}
    \label{eq:poincare}
    \forall \varphi \in H^1(\e^{-\beta V}), \qquad \norm{\varphi - \bar \varphi}[L^2(\e^{-\beta V})]^2
    \leq \frac{1}{R_{\beta}} \, \norm{\nabla \varphi}[L^2(\e^{-\beta V})]^2,
    \qquad \bar \varphi := \frac{\dps \int_\cX \varphi \e^{-\beta V}}{\dps \int_\cX \e^{-\beta V}}.
  \end{align}
  \end{enumerate}
\end{assumption}
Note that, for smooth periodic potentials,
the Poincar\'e inequality~\eqref{eq:poincare} holds true without any additional condition on~$V$
(see for instance the discussion in~\cite[Section~2.2.1]{MR3509213}).
\revision{The above assumption allows us to prove the following results.}
\begin{theorem}[Hypoelliptic regularization]
\label{thm:hypoelliptic_regularization}
Let $\mathcal L$ denote the generator associated with the model GL1
and suppose that \cref{assumption:assumption_potential} holds.
\revision{Then for any $h \in L^2_0(\mu)$ and any parameters~$\gamma > 0$ and $\nu > 0$,
it holds $\e^{t \mathcal L} h \in H^1_0(\mu)$ for all $t > 0$ and
there is an inner product $\iip{\dummy}{\dummy}_{\gamma, \nu}$ equivalent to the usual $H^1(\mu)$ inner product such that}
  \begin{align}
    \label{eq:hypoelliptic_regularization}
    \forall h \in L^2_0(\mu), \qquad
    \iip{\e^{\mathcal L} h}{\e^{\mathcal L} h}_{\gamma,\nu} \leq \norm{h}^2_{L^2(\mu)}.
  \end{align}
\end{theorem}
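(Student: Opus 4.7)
The plan is to carry out a Villani-style hypoelliptic regularization argument by constructing a time-dependent modified $H^1$ functional whose positivity and monotonicity along the flow $h_t = \e^{t\mathcal L} h$ yield the claim at $t=1$.

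First, I would record the commutators of the gradient operators with $\mathcal L$, which dictate the hypoelliptic chain of transmission of regularity. In the GL1 case, writing $\mathcal L$ explicitly, one has
\[
  \commut{\derivative{1}{z}}{\mathcal L} = \frac{\sqrt\gamma}{\nu} \derivative{1}{p} - \frac{1}{\nu^2} \derivative{1}{z}, \qquad
  \commut{\derivative{1}{p}}{\mathcal L} = \derivative{1}{q} - \frac{\sqrt\gamma}{\nu} \derivative{1}{z}, \qquad
  \commut{\derivative{1}{q}}{\mathcal L} = -V''(q) \derivative{1}{p},
\]
so regularity is transmitted $z \to p \to q$, which suggests using three levels of hypocoercive gradient control with three different powers of~$t$. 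The dissipative contribution comes solely from $\mathcal L_{\rm sym}$ and produces $-\frac{2}{\beta\nu^2} \norm{\derivative{1}[h_t]{z}}[L^2(\mu)]^2$ in the evolution of $\norm{h_t}[L^2(\mu)]^2$.

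Next, I would introduce the modified functional
\[
  \Phi(t, h) = \norm{h}[L^2(\mu)]^2 + a\,t\,\norm{\nabla_z h}^2 + 2b\,t^2 \ip{\nabla_z h}{\nabla_p h} + c\,t^3 \norm{\nabla_p h}^2 + 2d\,t^4 \ip{\nabla_p h}{\nabla_q h} + e\,t^5 \norm{\nabla_q h}^2,
\]
where the $L^2(\mu)$ norm is understood throughout and the positive coefficients $a,b,c,d,e$ are to be chosen depending on~$(\gamma,\nu)$. The powers of~$t$ are dictated by the hypoelliptic scaling: at each step in the chain $z \to p \to q$ one loses two powers of~$t$, so that $\dot\alpha_k(t)\sim \alpha_{k-1}(t)/t$ matches the order of the commutator production terms.

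The core computation is to differentiate $\Phi(t, h_t)$ in time, using $\partial_t h_t = \mathcal L h_t$ and the commutator identities above. This produces, besides the explicit polynomial derivatives $\dot\alpha_k(t)$, terms involving $\mathcal L$ acting on gradients: the contribution from $\mathcal L_{\rm sym}$ gives good dissipation of the form $-C(\gamma,\nu,t) \norm{\nabla_z \nabla_\xi h_t}^2$ for $\xi \in \{z,p,q\}$, while the contribution from the commutators $\commut{\nabla_\xi}{\mathcal L}$ produces mixed cross-terms such as $\frac{\sqrt\gamma}{\nu}\ip{\nabla_z h}{\nabla_p h}$, $\ip{\nabla_p h}{\nabla_q h}$, and boundary terms involving $V''(q)$. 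I would then use the boundedness $\norm{V''}[\infty] < \infty$ from \cref{assumption:assumption_potential} together with weighted Cauchy--Schwarz inequalities $2\ip{u}{v} \leq \eta \norm{u}^2 + \eta^{-1}\norm{v}^2$ to absorb each cross-term into the neighboring diagonal terms, splitting the dissipation in $\nabla_z$ into finitely many pieces along the chain.

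Finally, I would verify that the coefficients $a,b,c,d,e$ can be chosen as explicit functions of $\gamma$ and~$\nu$ so that (i) the resulting quadratic form in $(\nabla_z h, \nabla_p h, \nabla_q h)$ is nonpositive on $[0,1]$, giving $\frac{d}{dt}\Phi(t,h_t) \leq 0$, and (ii) the associated symmetric matrix in front of the gradient block at $t=1$ is positive definite, so that $\iip{h}{h}_{\gamma,\nu} := \Phi(1, h)$ defines an inner product equivalent to the standard $H^1(\mu)$ inner product. Integrating from $0$ to $1$ then yields
\[
  \iip{\e^{\mathcal L} h}{\e^{\mathcal L} h}_{\gamma,\nu} = \Phi(1, h_1) \leq \Phi(0, h) = \norm{h}[L^2(\mu)]^2,
\]
which is precisely~\eqref{eq:hypoelliptic_regularization}. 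The fact that $\e^{t\mathcal L} h \in H^1_0(\mu)$ for $t>0$ follows from the equivalence of norms together with standard smoothing for degenerate parabolic equations applied to $h_t$ on a small interval and the preservation of zero mean by the semigroup.

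The main obstacle is the bookkeeping of the dependence of the coefficients $a,\dots,e$ on the parameters $\gamma$ and $\nu$. Because the coupling constant $\sqrt\gamma/\nu$ appears both in the dissipation scale $1/(\beta\nu^2)$ and in the commutator $\commut{\nabla_z}{\mathcal L}$, one must balance these scales carefully to guarantee that the quadratic form remains nonpositive uniformly over the range of parameters of interest --- exactly the parameter-dependent sharpness the authors advertise in the introduction.
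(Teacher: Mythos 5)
Your sketch follows essentially the same route as the paper: the paper's proof introduces exactly your functional $\Phi(t,h_t)$ (their $N_h(t)$ in~\eqref{eq:regularization:norm}), with weights $t,t^3,t^5$ on $\norm{\partial_z h_t}^2,\norm{\partial_p h_t}^2,\norm{\partial_q h_t}^2$ and $t^2,t^4$ on the two cross terms, shows $N_h'(t)\leq 0$ on $[0,1]$ by reducing this to a copositivity statement for an explicit parameter-dependent matrix (\cref{proposition:rate_of_convergence}), and evaluates at $t=1$, the coefficients at $t=1$ defining the modified inner product $\iip{\dummy}{\dummy}_{\gamma,\nu}$ that is also used for the $H^1(\mu)$ hypocoercivity estimate.

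There is, however, one concrete flaw as written: the sign of your cross terms. You insist on positive coefficients and write $+2b\,t^2\ip{\nabla_z h}{\nabla_p h}+2d\,t^4\ip{\nabla_p h}{\nabla_q h}$, whereas the paper takes these terms with negative coefficients ($-2b_0t^2$ and $-2b_1t^4$ with $b_0,b_1\geq 0$), and this sign is forced by the mechanism you yourself describe. The dissipation from $\mathcal L_{\rm sym}$ acts only through $\partial_z$ (it yields $\norm{\partial_z\partial_\xi h_t}^2$ terms), so the positive contributions $3c\,t^2\norm{\nabla_p h_t}^2$ and $5e\,t^4\norm{\nabla_q h_t}^2$ coming from differentiating the time weights can only be absorbed by the sign-definite parts of the commutator production of the cross terms, namely $(\text{coefficient})\times\frac{\sqrt\gamma}{\nu}\norm{\nabla_p h_t}^2$ from $\commut{\partial_z}{\mathcal L}=\frac{\sqrt{\gamma}}{\nu}\partial_p-\nu^{-2}\partial_z$ and $(\text{coefficient})\times\norm{\nabla_q h_t}^2$ from $\commut{\partial_p}{\mathcal L}=\partial_q-\frac{\sqrt{\gamma}}{\nu}\partial_z$. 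With your plus signs these productions are positive and nothing else of first order can compensate them, so $\frac{\d}{\d t}\Phi(t,h_t)\leq 0$ cannot hold; with negative cross coefficients they are precisely the good terms, cf.\ the entries $b_0 r-\frac{3a_1}{2}-b_1\norm{V''}[\infty]$ and $b_1-\frac{5a_2}{2}$ of the paper's matrix $\mat M_2$ in~\eqref{eq:matrix_regularization}. Once the signs are corrected, the step you defer --- exhibiting admissible coefficients --- is the actual technical core of the paper (\cref{proposition:rate_of_convergence}, with its two-step bound via $\mat M_2^+$ and the scalings $b_0\sim r^{-1}m(r,\nu)$, $b_1\sim m(r,\nu)$); for the present theorem with $\gamma,\nu$ fixed an existence argument suffices, but the uniform-in-parameter scalings are what make \cref{thm:hypocoercivity_h1} and the resolvent bound~\eqref{eq:resolvent_bound_GL1} sharp.
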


We \revision{next} state a convergence result in $H^1_0(\mu)$,
which relies on the hypocoercive framework described in~\cite{MR2562709}.

\begin{theorem}[\revision{$H^1(\mu)$} hypocoercivity]
\label{thm:hypocoercivity_h1}
  Suppose that~\cref{assumption:assumption_potential} holds,
  \revision{and consider the inner product $\iip{\dummy}{\dummy}_{\gamma, \nu}$ constructed in the proof of \cref{thm:hypoelliptic_regularization}.}
  Then there exists a constant $C_1 \in \real_+$ such that, \revision{for any $t \geq 0$} and any parameters $\nu > 0$ and $\gamma > 0$, it holds
  \begin{align}
    \label{eq:hypocoercivity_h1_auxiliary}
    \forall f \in H^1_0(\mu), \qquad \iip{\e^{t \mathcal L}f}{\e^{t \mathcal L}f}_{\gamma, \nu}
    \leq \exp \left(- C_1 \min \left( \gamma, \frac{1}{\gamma}, \frac{\gamma}{\nu^4} \right) t \right) \iip{f}{f}_{\gamma, \nu}.
  \end{align}
  In particular, there exists $C_2(\gamma,\nu) \in \real_+$ such that
  \[
    \norm{\e^{t \mathcal L}}[\mathcal B\left(H^1_0(\mu)\right)]
    \leq C_2(\gamma, \nu) \, \exp \left(- C_1 \min \left( \gamma, \frac{1}{\gamma}, \frac{\gamma}{\nu^4} \right) t \right).
  \]
\end{theorem}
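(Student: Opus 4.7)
The plan is to differentiate the map $t \mapsto \iip{\e^{t\mathcal L}f}{\e^{t\mathcal L}f}_{\gamma,\nu}$ along the semigroup, where the inner product is the one already constructed in the proof of \cref{thm:hypoelliptic_regularization}, and show that this derivative is bounded from above by $-2 C_1 \min(\gamma, 1/\gamma, \gamma/\nu^4) \iip{\e^{t\mathcal L}f}{\e^{t\mathcal L}f}_{\gamma,\nu}$. Grönwall's inequality then yields~\eqref{eq:hypocoercivity_h1_auxiliary} directly, and the bound in $\mathcal B(H^1_0(\mu))$ follows from the equivalence of $\iip{\dummy}{\dummy}_{\gamma,\nu}$ with the standard $H^1(\mu)$ inner product, with an explicit ratio $C_2(\gamma,\nu)$ read off from the (already chosen) coefficients of the auxiliary norm.

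The core computation rests on the three commutator identities
\[
  [\partial_q, \mathcal L] = -V''(q)\,\partial_p, \qquad
  [\partial_p, \mathcal L] = \partial_q - \frac{\sqrt{\gamma}}{\nu}\,\partial_z, \qquad
  [\partial_z, \mathcal L] = \frac{\sqrt{\gamma}}{\nu}\,\partial_p - \frac{1}{\nu^2}\,\partial_z,
\]
which encode the hypoelliptic cascade: dissipation acts only on $\vect z$, is transferred to $p$ through the second identity and to $q$ through the first. Writing $\iip{f}{f}_{\gamma,\nu}$ as $\|f\|_{L^2(\mu)}^2$ plus a quadratic form in $(\partial_q f, \partial_p f, \partial_z f)$ with $\gamma,\nu$-dependent diagonal weights and cross terms $\langle \partial_z f, \partial_p f\rangle_{L^2(\mu)}$ and $\langle \partial_p f, \partial_q f\rangle_{L^2(\mu)}$ (following the line of~\cite{OL15,MR3925138}), I would combine these commutators with the pointwise identity $\langle -\mathcal L_{\textrm{sym}} g, g\rangle_{L^2(\mu)} = (\beta\nu^2)^{-1}\,\|\partial_z g\|_{L^2(\mu)}^2$ to express the time derivative of $\iip{\e^{t\mathcal L}f}{\e^{t\mathcal L}f}_{\gamma,\nu}$ as a quadratic form in $(f, \partial_q f, \partial_p f, \partial_z f)$. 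The $L^2(\mu)$ mass is controlled via the Poincar\'e inequality for $\mu$, which follows by tensorisation from \cref{assumption:assumption_potential}(iii) and the Gaussian Poincar\'e inequalities in the $p$ and $\vect z$ variables.

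The main obstacle is the choice of weights and cross coefficients so that, after Young's inequality is applied to the $V''$-weighted term and to the off-diagonal contributions produced by $[\partial_p,\mathcal L]$ and $[\partial_z,\mathcal L]$, the resulting quadratic form is negative definite with spectral bound of order $\min(\gamma, 1/\gamma, \gamma/\nu^4)$ \emph{uniformly} in $(\gamma,\nu)$. The three competing scales correspond to distinct physical mechanisms: the classical Villani/H\'erau balance between the $p$--$q$ block and the commutator $[\partial_p,\mathcal L]=\partial_q$ yields the two standard Langevin rates $\gamma$ (small friction) and $1/\gamma$ (large friction); the additional scale $\gamma/\nu^4$ arises from combining the $\vect z$-dissipation rate $1/(\beta\nu^2)$ with the coupling strength $\sqrt{\gamma}/\nu$ entering the mixed commutator $[\partial_z,\mathcal L]$, and it naturally disappears in the white-noise limit $\nu\to 0$ since $\gamma/\nu^4\to\infty$, thereby recovering the standard Langevin rate.

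The calibration of the five coefficients in the quadratic form must be done carefully so that the prefactors $C_1$ and the equivalence constants are independent of $(\gamma,\nu)$; this is the reason for splitting the allowed range of $\gamma$ into the regimes $\gamma\leq 1$ and $\gamma\geq 1$ and for scaling the cross term $\langle \partial_z f,\partial_p f\rangle$ by a factor proportional to $\nu^2/\sqrt\gamma$ (so that its derivative along the dynamics produces controlled contributions of order unity in the $p$-block). Once the quadratic dissipation has been bounded by the announced rate times the full norm $\iip{f}{f}_{\gamma,\nu}$, Grönwall and norm equivalence yield both conclusions of the theorem, the constant $C_2(\gamma,\nu)$ being the ratio of the largest and smallest eigenvalues of the weight matrix of the auxiliary $H^1$ norm.
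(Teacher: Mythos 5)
Your strategy coincides with the paper's: the same modified $H^1(\mu)$ inner product with the two cross terms $\ip{\partial_z f}{\partial_p f}$ and $\ip{\partial_p f}{\partial_q f}$, the same commutator cascade transferring the $z$-dissipation first to $p$ and then to $q$, the tensorized Poincar\'e inequality to recover the $L^2(\mu)$ mass, and Gr\"onwall plus norm equivalence to conclude. Two points, however, are not right as written. First, the time derivative of the weighted norm is \emph{not} a quadratic form in $(f,\partial_q f,\partial_p f,\partial_z f)$ alone: pushing $\mathcal L_{\rm sym}$ through the first-order terms produces second-order contributions such as $\norm{\partial_z\partial_p f}^2$ and $\ip{\partial_z^2 f}{\partial_z\partial_p f}$, and controlling the mixed ones generated by the cross terms is exactly why, in \cref{lem:coercivity_condition}, the same matrix $\mat M_1$ that governs the norm equivalence reappears (and must be kept positive semidefinite) inside the dissipation estimate; this ingredient is missing from your outline.

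Second, and more importantly, the heart of the theorem --- exhibiting coefficients $(a_0,a_1,a_2,b_0,b_1)$ for which the resulting form is bounded below by $C\min(\gamma,\gamma^{-1},\gamma\nu^{-4})$ \emph{uniformly} in $(\gamma,\nu)$ --- is precisely what you leave as an assertion, and the one explicit scaling you propose is not the one that works. The paper's choice, forced by the positivity of the $2\times2$ minors of the coercivity matrix (cf.~\eqref{eq:simple_inequalities}), is $b_0\sim \frac{\nu}{\sqrt\gamma}\,m$ and $b_1\sim m$ with $m=\min(\gamma,\gamma^{-1},\gamma\nu^{-4})$, together with a specific parametrization of the $a_i$; your $b_0\propto\nu^2/\sqrt\gamma$ agrees with this only when $\gamma\sim\nu\sim1$. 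For $\gamma\to0$ at fixed $\nu$ the off-diagonal contribution $b_0/\nu^2\sim\gamma^{-1/2}$ diverges while the available $(1,1)$ entry stays bounded, and for $\nu\to\infty$ the loss $b_0\sqrt\gamma/\nu\sim\nu$ in the $z$-block overwhelms the dissipation $1/(\beta\nu^2)$; restoring positivity then forces the diagonal weights $a_i$ to blow up, which contradicts your stated requirement that the equivalence constants be independent of $(\gamma,\nu)$ (they are not, and need not be: only $C_1$ is uniform, while $C_2(\gamma,\nu)$, the condition number of $\mat M_1$, degenerates in the extreme regimes). So, as written, the announced uniform rate is not established; the calibration carried out in Steps~1 and~2 of the proof of \cref{proposition:rate_of_convergence} (or its weaker version \cref{proposition:rate_of_convergence_hypocoercitivy}) is the missing, and essential, part of the argument.
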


\Cref{thm:hypocoercivity_h1,thm:hypoelliptic_regularization} can be combined to obtain a decay estimate in $L^2(\mu)$.
More precisely, employing the fact that $\norm{\dummy}^2 \leq \iip{\dummy}{\dummy}_{\gamma, \nu}$
(see the definition~\eqref{eq:hypocoercivity:inner_product} and~\eqref{eq:hypocoercivity:equivalence_with_sobolev_norm}) together with~\cref{eq:hypocoercivity_h1_auxiliary,eq:hypoelliptic_regularization},
we obtain, for $h \in L^2_0(\mu)$,
\begin{align}
  \norm{\e^{t \mathcal L}h}^2_{L^2(\mu)}
  \leq \iip{\e^{t \mathcal L}h}{\e^{t \mathcal L}h}_{\nu, \gamma}
  \notag
  &\leq \exp \left(-2 \, C_1 \, \min \left(\gamma, \frac{1}{\gamma}, \frac{\gamma}{\nu^4} \right) (t - 1) \right) \, \iip{\e^{\mathcal L} h}{\e^{\mathcal L} h}_{\nu,\gamma} \\
  \notag
  &\leq \exp \left(-2 \, C_1 \, \min \left(\gamma, \frac{1}{\gamma}, \frac{\gamma}{\nu^4} \right) (t - 1) \right) \, \norm{h}^2_{L^2(\mu)} \\
  \label{eq:exponential_growth_bound}%
  &\leq \e^{2 C_1}  \exp \, \left(-2 \, C_1 \, \min \left(\gamma, \frac{1}{\gamma}, \frac{\gamma}{\nu^4} \right) t  \right) \, \norm{h}^2_{L^2(\mu)}
\end{align}
for any $t \geq 1$. This inequality also holds for $0 \leq t \leq 1$
in view of the trivial bound $\norm{\e^{t \mathcal L}}[\mathcal{B}\left(L_0^2(\mu)\right)] \leq 1$,
so the following resolvent bound holds~\cite[Proposition 2.1]{MR3509213}.

\begin{corollary}
  Under the same assumptions as in \cref{thm:hypocoercivity_h1},
  there exist a constant~$C \in \real_+$ independent of $\gamma$ and $\nu$ such that
\begin{align}
  \label{eq:resolvent_bound_GL1}
  \norm{\mathcal L^{-1}}[\mathcal B\left(L^2_0(\mu)\right)] \leq C \, \max \left( \gamma, \frac{1}{\gamma}, \frac{\nu^4}{\gamma} \right).
\end{align}
\end{corollary}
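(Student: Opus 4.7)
The plan is to derive the resolvent bound by integrating the $L^2$ decay estimate~\eqref{eq:exponential_growth_bound} in time. For any $h \in L^2_0(\mu)$, I would set
\[
  \phi := -\int_0^{+\infty} \e^{t\mathcal L} h \, \d t,
\]
and argue that this Bochner integral is well defined in $L^2_0(\mu)$: the semigroup $\e^{t\mathcal L}$ preserves the mean-zero subspace, since $\mu$ is invariant and integration against $\mu$ commutes with $\e^{t\mathcal L}$; and~\eqref{eq:exponential_growth_bound} supplies the integrable dominating function $\e^{C_1}\exp(-C_1 \min(\gamma, 1/\gamma, \gamma/\nu^4) \, t) \, \norm{h}[L^2(\mu)]$.

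Standard strongly-continuous semigroup theory then places $\phi$ in the domain of $\mathcal L$ and gives
\[
  \mathcal L \phi = -\lim_{T\to+\infty}\bigl[\e^{T\mathcal L}h - h\bigr] = h,
\]
the second equality following again from~\eqref{eq:exponential_growth_bound}. Since $\mathcal L$ has trivial kernel on $L^2_0(\mu)$ by ergodicity, this identifies $\mathcal L^{-1}h = -\phi$ on that subspace. Taking norms and reusing~\eqref{eq:exponential_growth_bound} produces
\[
  \norm{\mathcal L^{-1} h}[L^2(\mu)]
  \leq \int_0^{+\infty} \norm{\e^{t\mathcal L}h}[L^2(\mu)] \, \d t
  \leq \e^{C_1}\norm{h}[L^2(\mu)] \int_0^{+\infty} \exp\!\left(-C_1 \min\!\left(\gamma, \frac{1}{\gamma}, \frac{\gamma}{\nu^4}\right) t\right) \d t.
\]
The remaining integral equals $C_1^{-1} \max(1/\gamma, \gamma, \nu^4/\gamma)$, and since the constant $C_1$ from~\cref{thm:hypocoercivity_h1} is independent of $\gamma$ and $\nu$, this yields~\eqref{eq:resolvent_bound_GL1} with $C = \e^{C_1}/C_1$.

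The only subtle point is the identification of $-\phi$ with $\mathcal L^{-1}h$, i.e.\ that negative time-integration of the semigroup genuinely produces a preimage of $h$ under $\mathcal L$ in the operator-theoretic sense. This is a routine argument for $C_0$-semigroups with exponentially decaying norm; it is precisely the passage carried out in~\cite[Proposition~2.1]{MR3509213}, to which the excerpt explicitly refers for this corollary. No further hypocoercive construction is required beyond~\eqref{eq:exponential_growth_bound}.
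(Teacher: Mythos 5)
Your argument is correct and is essentially the paper's own proof: the corollary is deduced there, exactly as you do, from the $L^2$ decay estimate~\eqref{eq:exponential_growth_bound} via the standard integrated-semigroup identity $\mathcal L^{-1} = -\int_0^{\infty} \e^{t\mathcal L}\,\d t$ on $L^2_0(\mu)$, citing~\cite[Proposition~2.1]{MR3509213} for that step. The only (immaterial) difference is the bookkeeping of the constant, recorded in the paper as $C = \e^{2C_1}/2C_1$ versus your $\e^{C_1}/C_1$.
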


In fact, $C = \e^{2 C_1}/2C_1$.
The dependence of the exponential decay rate in~\eqref{eq:exponential_growth_bound} with respect to the parameters~$\nu,\gamma$ is illustrated in~\cref{figure:exponential_growth_bound}.
It is worth comparing the scaling of the exponential decay rate to the scalings obtained for Langevin dynamics,
for which $\lambda_{\rm Lang}(\gamma)$ scales as $\min(\gamma,\gamma^{-1})$, \revision{as proved in~\cite{DKMS13,GS16,MR3925138} (see also the discussion at the beginning of~\cref{sec:convergence_of_the_gle_dynamics}).}
The rates obtained for GLE are therefore in line with these rates in the limit $\nu \to 0$,
which is precisely the limit~\eqref{eq:Langevin_limit} in which GLE reduces to Langevin dynamics.
The additional term~$\gamma/\nu^4$ in the scaling factor for the exponential decay rate of GLE is important only in the limit~$\nu \to \infty$
(with the additional condition $\nu^2 \gg \gamma$ if the limits $\gamma,\nu \to \infty$ are taken simultaneously).

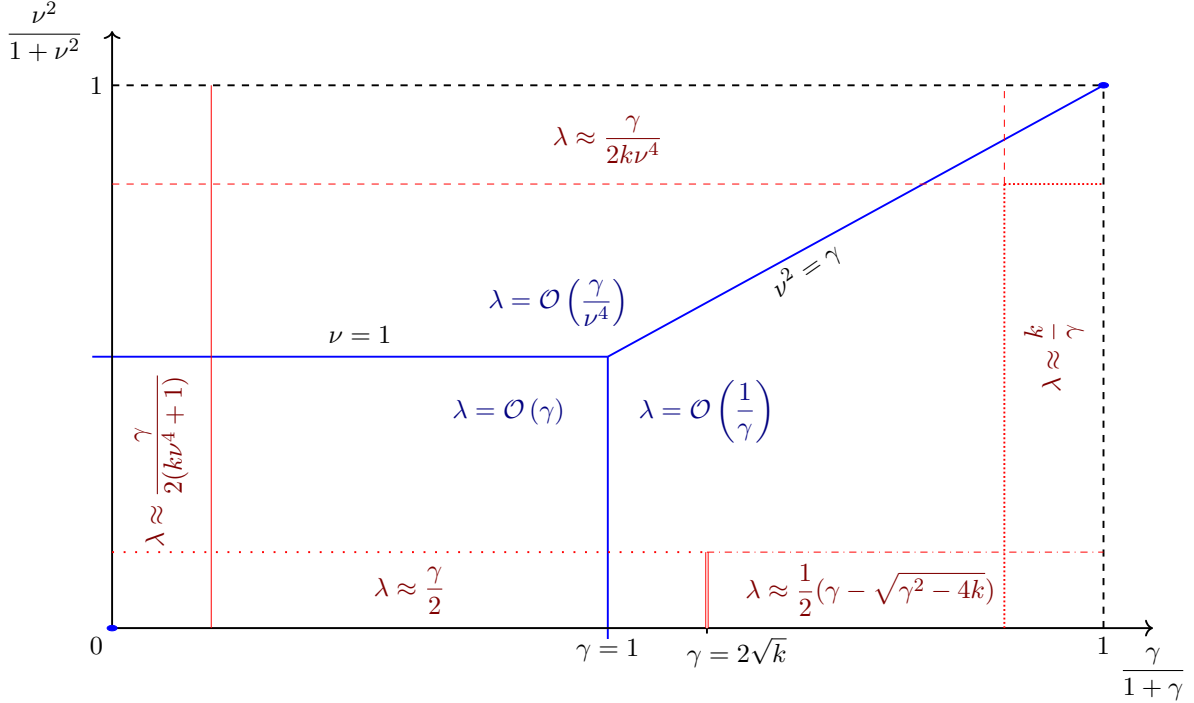
\begin{figure}[ht!]
  \centering
  \resizebox{.99\textwidth}{!}{%
  \begin{tikzpicture}[xscale=14.5,yscale=8]



    \draw[blue,thick] (.5,.5) -- (1,1);
    \draw[blue,thick] (.5,-.02) -- (.5,.5);
    \draw[blue,thick] (-.02,.5) -- (.5,.5);
    \node[rotate=35,black] at (.7, .66) {$\nu^2 = \gamma$};
    \node[black] at (.5, -.04) {$\gamma = 1$};
    \node[black] at (.25, .54) {$\nu = 1$};
    \node[darkblue] at (.45, .6) {$\dps \lambda = \bigo{\frac{\gamma}{\nu^4}}$};
    \node[darkblue] at (.6, .4) {$\dps \lambda = \bigo{\frac{1}{\gamma}}$};
    \node[darkblue] at (.4, .4) {$\dps \lambda = \bigo{\gamma}$};

    \draw[thick,->] (0,0) -- (1.05,0);
    \draw[-] (0,0) -- (1,0);
    \draw[thick,->] (0,0) -- (0,1.1);
    \draw[dashed,thick] (1,0) -- (1,1);
    \draw[dashed,thick] (0,1) -- (1,1);
    \node[below left] at (0, 0) {$0$};
    \node[below] at (1, 0) {$1$};
    \node[left] at (0, 1) {$1$};
    \node[below] at (1.05, -0.02) {$\dps \frac{\gamma}{1 + \gamma}$};
    \node[left] at (-0.02, 1.1) {$\dps \frac{\nu^2}{1 + \nu^2}$};
    \draw[fill,blue] (1,1) circle [radius=.005];
    \draw[fill,blue] (0,0) circle [radius=.005];

    \node[darkred] at (.5, .9) {$\dps \lambda \approx \frac{\gamma}{2k\nu^4}$}; 
    \draw[dashed,red] (0,.8181) -- (.9,.8181);
    \draw[dashed,red] (.9,.8181) -- (.9, 1);
    \node[rotate=90,darkred] at (.95, .5) {$\dps \lambda \approx \frac{k}{\gamma} $}; 
    \draw[thick,densely dotted,red] (.9,0) -- (.9,.8181);
    \draw[thick,densely dotted,red] (.9,.8181) -- (1,.8181);

    \draw[red] (.1, 0) -- (.1,1);
    \node[rotate=90,darkred] at (.05, .31) {$\dps \lambda \approx \frac{\gamma}{2(k\nu^4 + 1)} $};

    \draw[black,thick] (.6, -.01) -- (.6,0);
    \node[below] at (.63, .0) {$\gamma = 2 \sqrt{k}$};
    \draw[loosely dotted, thick, red] (0, .14) -- (.6,.14);
    \node[darkred] at (.3, .07) {$\dps \lambda \approx \frac{\gamma}{2} $};
    \draw[double,red] (.6, 0) -- (.6,.14);

    \draw[dash dot,red] (.60, .14) -- (1,.14);
    \node[darkred] at (.765, .07) {$\dps \lambda \approx \frac{1}{2} (\gamma - \sqrt{\gamma^2 - 4k})$};
  \end{tikzpicture}}
  \caption{%
    Schematic illustration of the scaling of the exponential decay rate~$\lambda$ in~$L^2_0(\mu)$.
    The general estimate,
    implied by~\cref{eq:exponential_growth_bound} and valid under the general~\cref{assumption:assumption_potential},
    is illustrated in blue.
    The behavior of the exponential growth bound
    in the particular case of the quadratic potential $V(q) = k\frac{q^2}{2}$
    is depicted in red.
    In this context, the symbol $\approx$ means that
    the relative error is arbitrarily close to zero in the corresponding limit.
  }%
  \label{figure:exponential_growth_bound}
\end{figure}

\paragraph{Sharpness of the bounds on the exponential decay rate.}
In the particular case where~$V$ is the quadratic potential $k\frac{q^2}{2}$ with $k>0$,
the scaling of the exponential growth bound \revision{with respect to~$\gamma,\nu$} in all the limits of interest can be calculated explicitly.
Indeed, in this case~\cref{eq:markovian_approximation} can be written in the general form
\begin{equation}
  \label{eq:OU_form_GLE}
  \dot {\vect X} = \mat D \vect X \, \d t + \mat \sigma \, \d \vect W,
\end{equation}
where $\vect X^\t = (q, p, \vect z^\t)$, $\mat D$ and $\mat \sigma$ are constant matrices,
and $\vect W$ is a standard Brownian motion on $\real^{2 + n}$.
It is known by a result from Metafune, Pallara and Priola~\cite{MR1941990} that
the corresponding generator $\mathcal L$ generates a strongly continuous and compact semigroup in $\lp{p}{\mu}$ for any $p \in (1, \infty)$,
and that the associated spectrum can be obtained explicitly by a linear combination of
the eigenvalues of the drift matrix $\mat D$ in~\cref{eq:markovian_approximation}:
\begin{equation}
  \label{eq:sigma_OU_form_GLE}
  \sigma(\mathcal L) = \left\{ \sum_{\lambda \in \sigma(\mat D)} \lambda \, k_{\lambda}, \quad k_{\lambda} \in \nat_{\geq 0} \right\};
\end{equation}
see also~\cite[Section 9.3]{lorenzi2006analytical} and~\cite{MR2899986}.
By \cite[Theorem 5.3]{MR1886588},
the spectral bound of the generator,
\emph{i.e.} the eigenvalue of~$\mathcal{L}$ with the largest real part,
coincides up to a sign change with the exponential decay rate of the semigroup
(and in fact the norms of the propagators~$\e^{t \mat D}$ and~$\e^{t \mathcal{L}}$ coincide,
as made precise in~\cite{ASS20}),
so estimating this growth bound in the quadratic case amounts to
calculating the eigenvalues of the matrix $\mat D$.
This can be achieved either numerically or analytically in the limiting regimes where the parameters go to either~0 or~$\infty$,
based on rigorous asymptotics for the associated characteristic polynomial (as made precise in~\cref{sec:confirmation_of_the_rate_of_convergence_in_the_quadratic_case}). The behavior of the spectral bound in the limiting regimes is indicated in \cref{figure:exponential_growth_bound}.

\paragraph{Scaling limits for the effective diffusion coefficient.}%
In \cref{sub:gle:the_overdamped_limit,sub:gle:the_underdamped_limit,sub:gle:short_memory_limit},
we establish, either formally or rigorously,
the limits in solid arrows in the following diagram
(the limits in dashed arrows are already known results):
\begin{equation*}
  \begin{tikzpicture}[baseline={([yshift=-.5ex]current bounding box.center)}]
    \node (underdamped) at (-2, 0) {$D_{\rm und}$};
    \node (nolimit) at (3, 0) {$\gamma \, D_{\gamma}$};
    \node (overdamped) at (6, 0) {$D_{\rm ovd}$};
    \node (underdampedGL) at (-2, 2) {$D_{\nu}^*$};
    \node (nolimitGL) at (3, 2) {$\gamma \, D_{\gamma,\nu}$};
    \draw[thick,->] (nolimitGL) -- (overdamped) node [midway, above right] {\footnotesize{$\gamma \to \infty$ (\cref{sub:gle:the_overdamped_limit})}};
    \draw[thick,->] (nolimitGL) -- (underdampedGL) node [midway, above] {\footnotesize{$\gamma \to 0$ (\cref{sub:gle:the_underdamped_limit})}};
    \draw[dashed,thick,->] (nolimit) -- (underdamped) node [midway, below] {\footnotesize{$\gamma \to 0$}};
    \draw[dashed,thick,->] (nolimit) -- (overdamped) node [midway, below] {\footnotesize{$\gamma \to \infty$}};
    \draw[thick,->] (underdampedGL) -- (underdamped) node [midway, left] {\footnotesize{$\nu \to 0$ (\cref{sub:gle:the_underdamped_limit})}};
    \draw[thick,->] (nolimitGL) -- (nolimit) node [midway, left] {\footnotesize{$\nu \to 0$ (\cref{sub:gle:short_memory_limit})}};
  \end{tikzpicture}%
\end{equation*}
Here $D_{\rm ovd}$ denotes the effective diffusion coefficient associated with the overdamped Langevin dynamics~\eqref{eq:model:overdamped}, while $D_{\rm und}$ and~$D_\gamma$ are diffusion coefficients associated with the Langevin dynamics~\eqref{eq:model:langevin}.
Let us emphasize that $D^*_{\nu}$ depends only on $\nu$ and that $D_{\rm und}$ is a constant, independent of the parameters of the noise, that can be calculated using the approach outlined in~\cite{MR2427108}.

More precisely,
we establish rigorously in \cref{sub:gle:short_memory_limit} that,
for fixed $\gamma$ and for quite general Markovian approximations of the noise,
$D_{\gamma,\nu} = D_{\gamma} + \bigo{\nu^2}$ as $\nu \to 0$.
Our proof is based on an asymptotic expansion of the solution to the Poisson equation~\eqref{eq:introduction:poisson_equation}
and on the resolvent estimate~\eqref{eq:resolvent_bound_GL1}.
Using similar techniques,
we show in \cref{sub:gle:the_overdamped_limit}
that $D_{\gamma,\nu} \to D_{\rm ovd}$ in the limit $\gamma \to \infty$ for fixed $\nu>0$ and
in the particular case of model GL1.
Finally, in \cref{sub:gle:the_underdamped_limit} we motivate,
with a formal asymptotic expansion similar to that employed in~\cite{MR2427108},
that $\gamma D_{\gamma,\nu} \to D^*_{\nu}$ in the underdamped limit $\gamma \to 0$.

In principle, we could also study the limit $\nu \to \infty$.
However, we refrain from doing so here because,
first, this limit is less relevant from a physical viewpoint than the other limits considered and,
second, this limit is technically more difficult.
The main difficulty originates from the fact that the leading-order part of the generator is $p \, \derivative{1}{q} - \derivative*{1}[V]{q} \, \derivative{1}{p}$,
so the terms in the asymptotic expansion of the solution to~\eqref{eq:introduction:poisson_equation} are not explicit.
In addition, the operator norm of the resolvent scales as $\nu^{4}$,
so a large number of terms are required for proving a rigorous result.

\subsection{Numerical experiments}%
\label{par:numerical_results}

Here we verify numerically the limits $\gamma D_{\gamma,\nu} \to D_{\rm ovd}$ as $\gamma \to \infty$ and $D_{\gamma,\nu} \to D_{\gamma}$ as $\nu \to 0$, with a spectral method to approximate the solution to the Poisson equation~\eqref{eq:introduction:poisson_equation}.
We come back to the case when $\cX = \torus$.
We employ a Galerkin method that is in general non-conformal,
in the sense that the finite-dimensional approximation space,
which we denote by $V_N$,
does not necessarily contain only mean-zero functions with respect to $\mu$.
Following the ideas developed in~\cite{roussel2018spectral},
we use a saddle point formulation to obtain an approximation of the solution to \cref{eq:introduction:poisson_equation}:
\begin{align}
  \label{eq:saddle_point_formutation}
  \left\{
    \begin{aligned}
       & - \Pi_N \, \mathcal L \, \Pi_N \Phi_N + \alpha_N u_N = \Pi_N p, \\
       & \Phi_N^\t u_N = 0,
    \end{aligned}
  \right.
\end{align}
where $\Pi_N$ is the $\lp{2}{\mu}$ projection operator on $V_N$,
$u_N = \Pi_N 1 / \norm{\Pi_N 1} \in V_N$
and $\alpha_N$ is a Lagrange multiplier.
As above,
$\ip{\cdot}{\cdot}$ and $\norm{\cdot}$ denote respectively the standard scalar product and norm of $\lp{2}{\mu}$.
We choose $V_N$ to be the subspace of $\lp{2}{\mu}$ spanned by tensor products of appropriate one-dimensional functions
constructed from trigonometric functions (in the $q$ direction) and Hermite polynomials (in the $p$ and $\vect z$ directions).
In the case of OU noise, for example,
we use the basis functions
\begin{equation*}
  \label{eq:basis_functions}
  e_{i,j,k} = Z^{1/2} \, \e^{\frac{\beta}{2} \left( H(q,p) + \frac{z^2}{2} \right)}
  \, G_i(q) \, H_j(p) \, H_k(z), \qquad 0 \leq i,j,k \leq N,
\end{equation*}
where $G_i$ are trigonometric functions,
\begin{equation}
  \label{eq:definition_trigonometric_functions}
  G_i(q) =
  \left\{ \begin{aligned}
    (2 \pi)^{-1/2}, \quad & \text{if}~i = 0, \\
    \pi^{-1/2} \sin\left(\frac{i + 1}{2}q\right), \quad & \text{if}~i~\text{is odd}, \\
    \pi^{-1/2} \cos\left(\frac{i}{2}q\right), \quad & \text{if}~i~\text{is even}, i > 0. \\
  \end{aligned} \right.
\end{equation}
and $H_j$ are rescaled normalized Hermite functions,
\begin{equation}
  \label{eq:definition_hermite_functions}
  H_j(p) = \frac{1}{\sqrt{\sigma}} \, \psi_j \left( \frac{p}{\sigma} \right),
  \qquad \psi_j (p) := (2 \pi)^{-\frac{1}{4}} \frac{(-1)^j}{\sqrt{j!}} \e^{\frac{p^2}{4}} \, \derivative*{j}{p^j} \, \left( \e^{- \frac{p^2}{2}} \right).
\end{equation}
The functions $(H_j)_{j\in \nat}$ are orthonormal in $\lp{2}{\real}$ regardless of the value of $\sigma$,
\revision{and $\sigma$ is} a scaling parameter that can be adjusted to better resolve $\Phi_N$.
We use the same number $N$ of basis functions in every direction because,
although Hermite series converge much slower than Fourier series as $N \to \infty$ when $\sigma$ is fixed,
their spatial resolution is comparable to that of Fourier series when $\sigma$ is chosen appropriately,
as demonstrated in~\cite{tang1993hermite}.

To solve the linear system associated with~\eqref{eq:saddle_point_formutation} and the basis functions~\eqref{eq:definition_trigonometric_functions},
we use either the \emph{SciPy}~\cite{scipy} function \texttt{scipy.sparse.linalg.spsolve},
which implements a direct method,
or, when the time or memory required to solve \cref{eq:saddle_point_formutation} with a direct method is prohibitive,
the function \texttt{scipy.sparse.linalg.gmres},
which \revision{implements} the generalized minimal residual method (GMRES)~\cite{SS86}.

The numerical results presented below are for the one-dimensional periodic cosine potential $V(q) = \frac{1}{2} \, (1 - \cos q)$,
\revision{and they were all obtained with $\beta = 1$.}
We examine the variation of the diffusion coefficient with respect to $\gamma$ for fixed $\nu = 1$ in \cref{fig:numerics:diffusion_coefficient}.
The parameters used in the simulations are presented in \cref{tab:gle:numerical_method_for_computation_diffusion}.
The effective diffusion coefficient was computed for 100 values of $\gamma$ evenly spaced on a logarithmic scale,
and for each value of $\gamma$ the numerical error was approximated by carrying out the computation with half the \revision{number} of basis functions in each direction.
When the  relative error estimated in this manner was over 1\%,
which occurred roughly when $\gamma \leq 10^{-2}$ for the model GL1 and $\gamma \leq 10^{-1}$ for the model GL2,
the corresponding data points were considered inaccurate and were removed.
\begin{table}[htpb]
  \centering
  \begin{tabular}{c|c|c}
    Model  & Method when $\gamma < 1$ & Method when $\gamma > 1$ \\
    \hline
    L  & Direct ($N = 250$, $\sigma^{-2} = 16$)  & Direct ($N = 250$, $\sigma^{-2} = 16$) \\
    GL1  & GMRES ($N = 100$, $\sigma^{-2} = 9$, $\text{tol} = 10^{-3}$)  & Direct ($N = 40$, $\sigma^{-2} = 3$) \\
    GL2  & GMRES ($N = 40$, $\sigma^{-2} = 6$, $\text{tol} = 10^{-3}$)  & Direct ($N = 16$, $\sigma^{-2} = 2$) \\
  \end{tabular}
  \caption{%
    Numerical parameters used to generate the data presented in \cref{fig:numerics:diffusion_coefficient}.
    We employed the \emph{SciPy} function \texttt{scipy.sparse.linalg.spsolve} for the direct method,
    and the function \texttt{scipy.sparse.linalg.gmres} for GMRES.
  }
  \label{tab:gle:numerical_method_for_computation_diffusion}
\end{table}
\begin{figure}[ht]
  \centering
  \includegraphics[width=.49\linewidth]{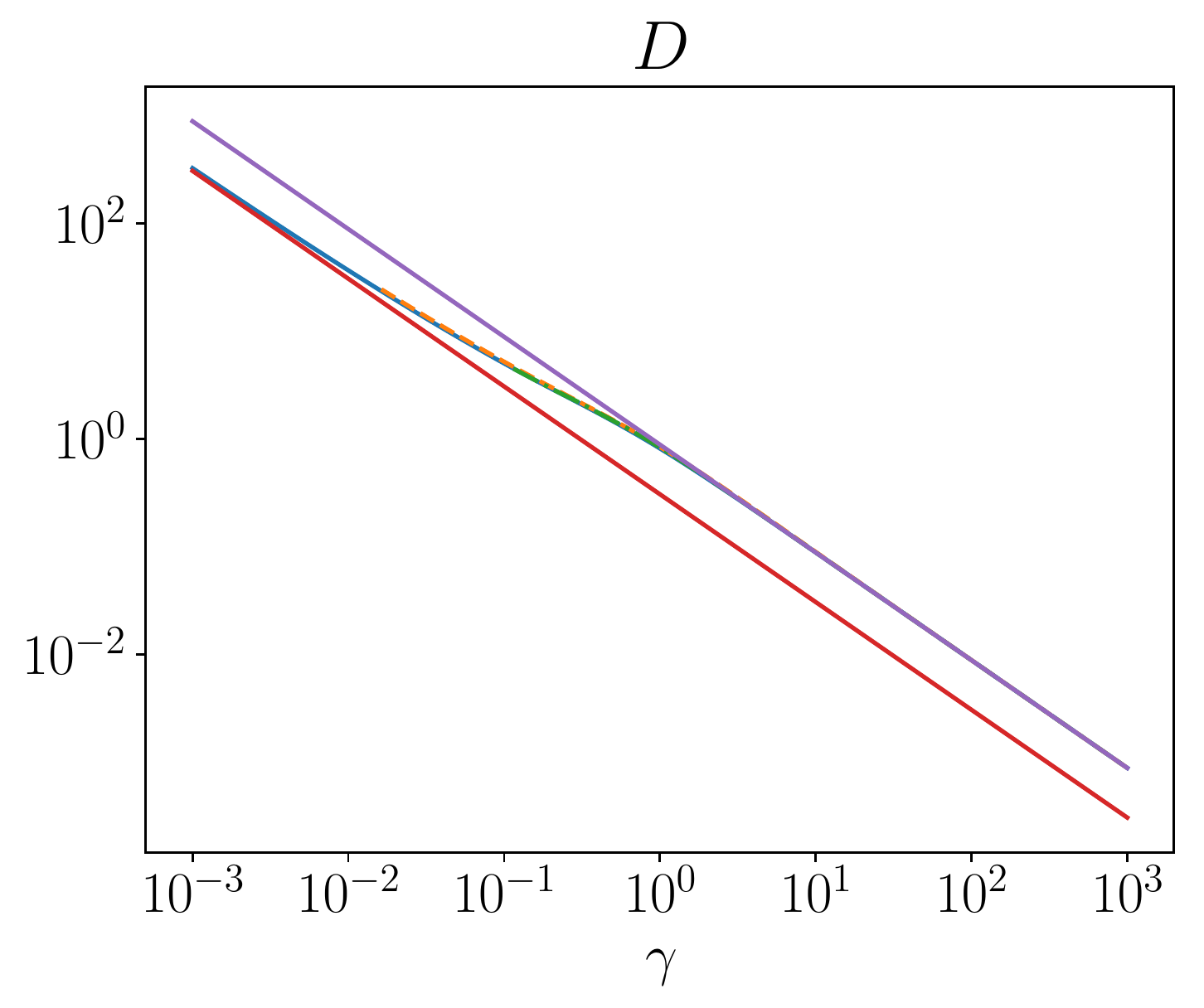}
  \includegraphics[width=.48\linewidth]{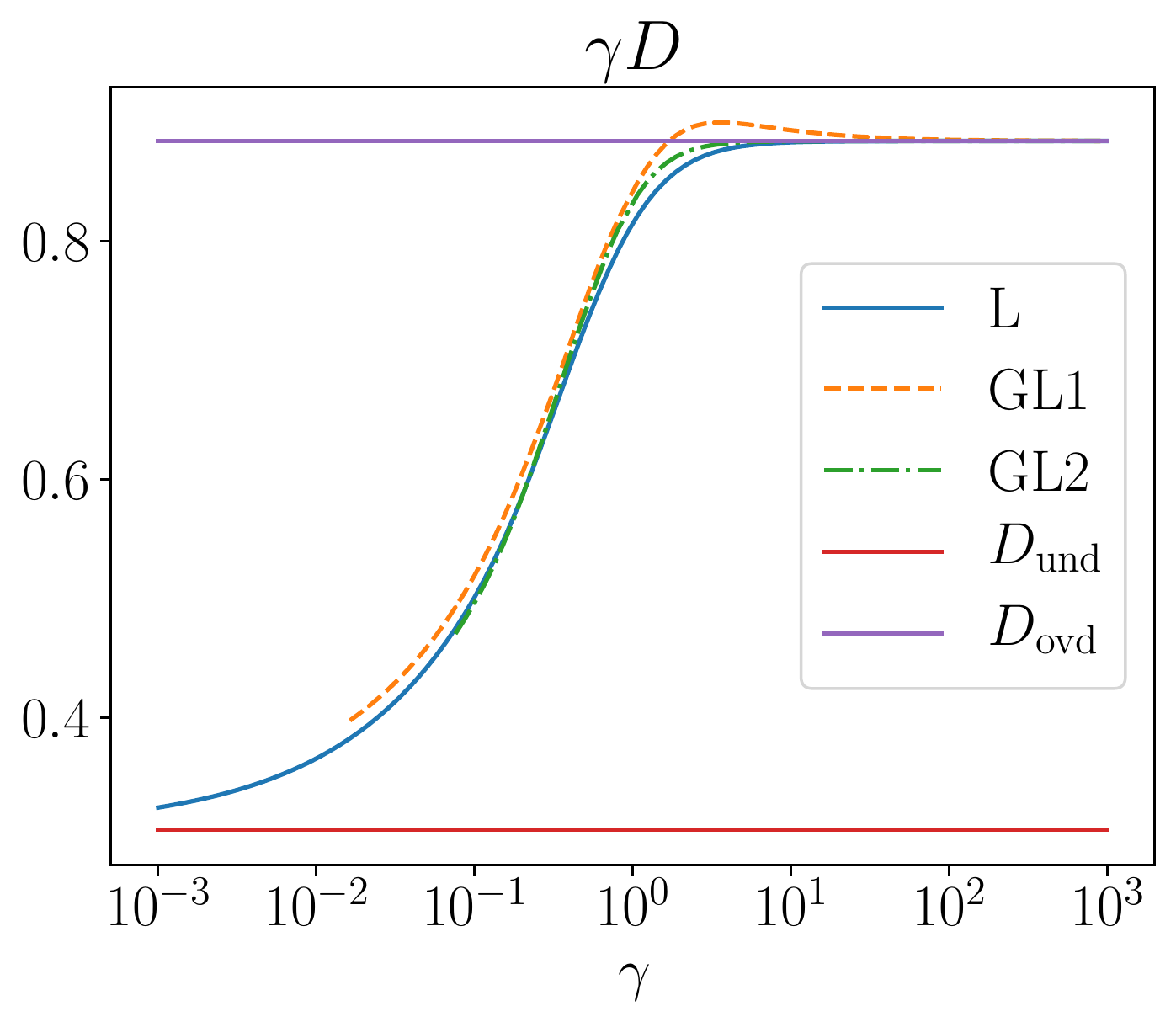}
  \caption{%
    Diffusion coefficient as a function of $\gamma$,
    for the parameters $\nu = 1$ (for the models GL1 and GL2)
    and $\alpha = 1$ (for GL2).
    We observe that,
    for values of $\gamma$ in the range $[1, 10]$,
    the GL1 diffusion coefficient is slightly larger than $D_{\rm ovd}/\gamma$.
  }%
  \label{fig:numerics:diffusion_coefficient}
\end{figure}

We observe from \cref{fig:numerics:diffusion_coefficient} that
the effective diffusion coefficient is of the same order of magnitude
for the three models across the whole range of $\gamma$,
and that $\gamma D \to D_{\rm ovd}$ for all models in the limit as $\gamma \to \infty$.
We also notice that the inequality $\gamma D \leq D_{\rm ovd}$,
which was proved to hold for the Langevin dynamics in~\cite{MR2394704},
is not satisfied for all values of $\gamma$ in the case of the GLE;
indeed it is clear from the figure that, for $\gamma$ close to 2,
the effective diffusion coefficient for the model GL1 is strictly greater than $D_{\rm ovd}/\gamma$.

To conclude this section,
we verify numerically that $D_{\gamma, \nu} \to D_{\gamma}$ in the limit as $\nu \to 0$ for fixed $\gamma$.
\Cref{fig:numerics:convergence_nu} presents the dependence on $\nu$ of the diffusion coefficient
for the models GL1 and GL2 for various values of~$\alpha$.
As expected,
we recover the effective diffusion coefficient corresponding to the model GL1 as $\alpha \to \infty$,
and that of the Langevin dynamics as $\nu \to 0$.
For $\alpha = 1$, the convergence to the limit as $\nu \to 0$ appears to be faster than for the other values of~$\alpha$.
In fact, it is possible to show that the deviation from the limiting effective diffusion coefficient is of order~$\nu^4$ in this case;
see~\cite{thesis_urbain}.
The convergence is illustrated in \cref{fig:numerics:convergence_nu_rate} in a log-log scale,
which confirms the expected rates.
When $\alpha = 1$, round off errors appear for small $\nu$,
which explains the deviation from the theoretical scaling~$\nu^4$.
\begin{figure}[ht]
  \centering
  \includegraphics[width=0.8\linewidth]{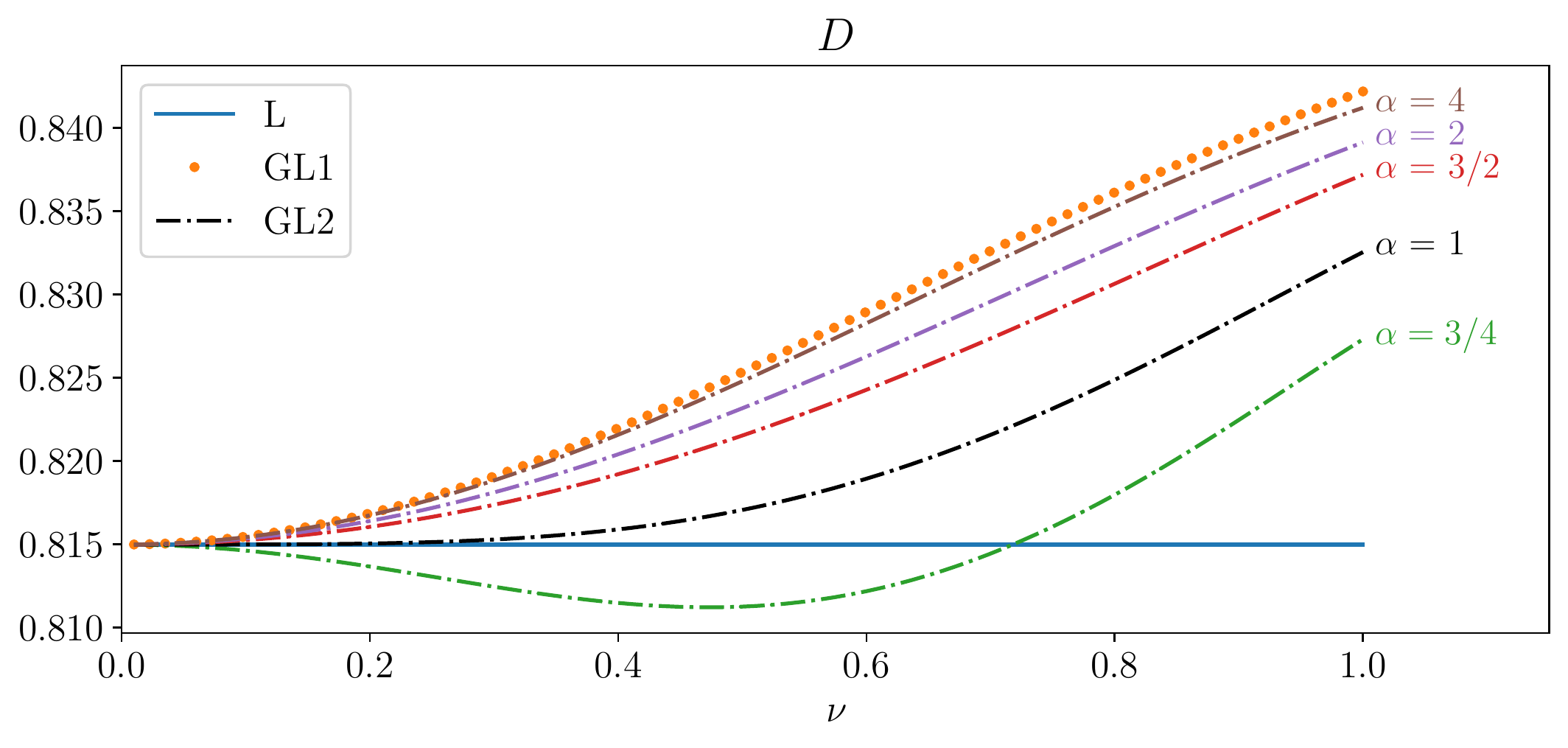}
  \caption{%
    Effective diffusion coefficient against $\nu$,
    whose square encodes the characteristic time of the autocorrelation function of the noise,
    for fixed values $\beta = \gamma = 1$.
  }
  \label{fig:numerics:convergence_nu}
\end{figure}

\begin{figure}[ht]
  \centering
  \includegraphics[width=0.8\linewidth]{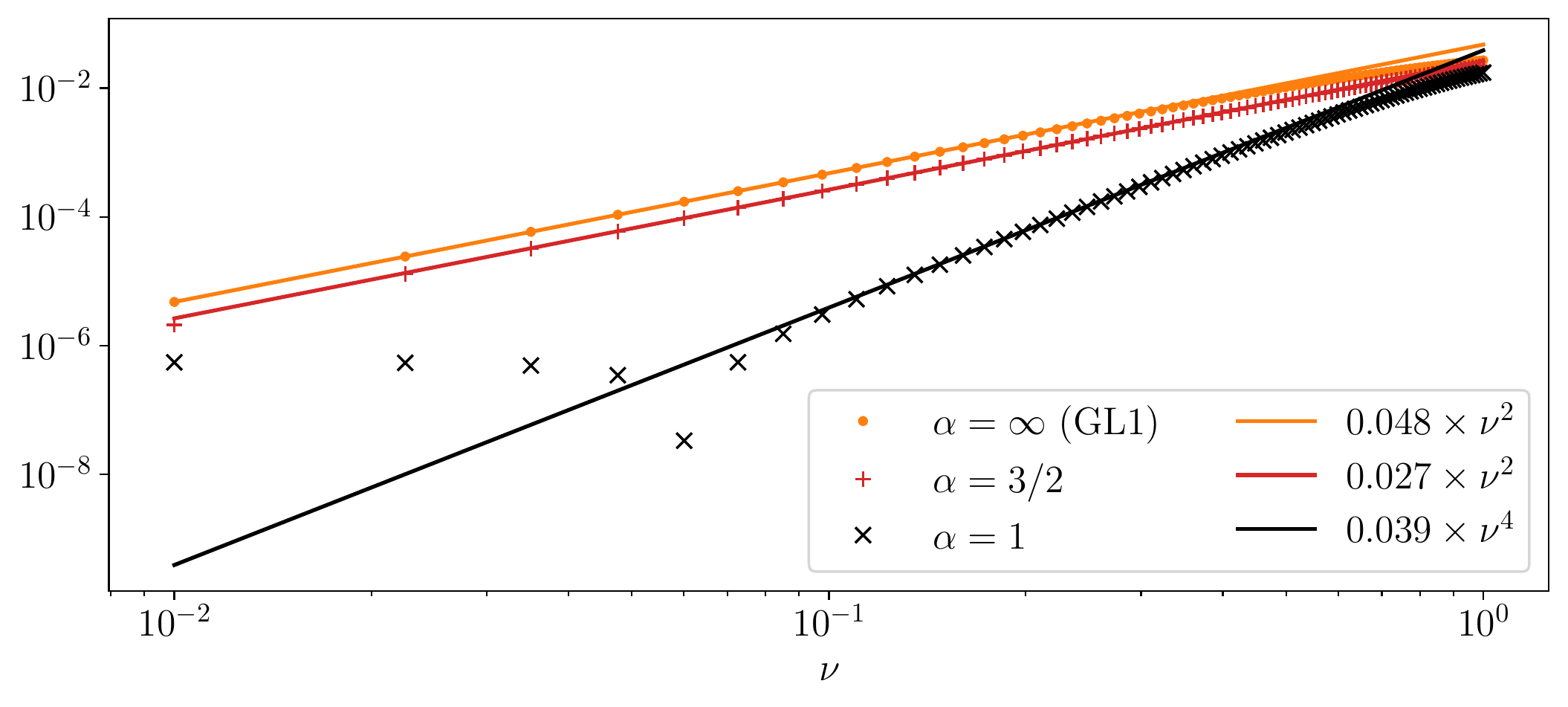}
  \caption{%
    Deviation of the effective diffusion coefficient from its limiting value as $\nu \to 0$.
    For $\alpha = 1$, we observe that the data is not aligned with the straight line for small values of $\nu$,
    which we attribute to round off errors.
  }
  \label{fig:numerics:convergence_nu_rate}
\end{figure}

\section{Longtime behavior for model GL1}%
\label{sec:convergence_of_the_gle_dynamics}

There are many results on the longtime convergence of the evolution semigroup~$\e^{t \mathcal{L}}$ of Langevin-like operators, as reviewed for instance in~\cite{BFLS20} (see also the recent review~\cite{Herau16}). Among the approaches allowing us to quantify the scaling of the convergence rate as a function of the parameters of the dynamics, one can quote:
\begin{itemize}
\item $H^1(\mu)$ hypocoercivity, pioneered in~\cite{MR1924934} and~\cite{MN06}, was later abstracted in~\cite{MR2562709}. The application of this theory to Langevin dynamics allows us to quantify the convergence rates in terms of the parameters of the dynamics; see for instance~\cite{MR2394704} for the Hamiltonian limit and~\cite{LMS16,MR3509213} for the overdamped limit. Moreover, the exponential convergence can be transferred to $L^2(\mu)$ by hypoelliptic regularization~\cite{Herau07}.
\item Entropic estimates, starting with~\cite{DV01}, have been abstracted in~\cite{MR2562709}, under conditions stronger than the ones for $H^1(\mu)$ hypocoercivit\revision{y.}
  Recently, it was shown how to remove the assumption that the Hessian of the potential is bounded~\cite{CGMZ19}.
\item A more direct route to prove the convergence in $L^2(\mu)$ was first proposed in~\cite{Herau06}, then extended in~\cite{MR2576899,MR3324910}, and revisited in~\cite{GS16} where domain issues of the operators at play are addressed. It is based on a modification of the $L^2(\mu)$ scalar product with some regularization operator. This more direct approach makes it even easier to quantify convergence rates; see~\cite{DKMS13,GS16,roussel2018spectral} for studies on the dependence of parameters such as the friction coefficient in Langevin dynamics, as well as~\cite{AAS15} for sharp estimates for equilibrium Langevin dynamics and a harmonic potential energy function.
\item Fully probabilistic techniques, based on clever coupling strategies, can also be used to obtain the exponential convergence of the law of Langevin processes to their stationary state~\cite{EGZ19}. One interest of this approach is that the drift needs not be gradient, in contrast to standard analytical approaches for which the analytical expression of the invariant measure should be known in order to separate the symmetric and antisymmetric parts of the generator under consideration.
\item Finally, it was recently shown how to directly obtain $L^2(\mu)$ estimates without changing the scalar product, relying on a space-time Poincar\'e inequality to conclude to an exponential convergence in time of the evolution semigroup~\cite{AM19,CLW19}.
\end{itemize}

Our focus in this work is on functional analytic estimates, in~$L^2(\mu)$ (where~$\mu$ defined in~\eqref{eq:mu} is the invariant measure of the dynamics), which is a natural framework for giving a meaning to quantities such as effective diffusion coefficients (which have the same form as asymptotic variances in central limit theorems for time averages). We were not able to work directly in~$L^2(\mu)$ by generalizing the approach from~\cite{MR2576899,MR3324910}, because of the hierarchical structure of the dynamics, where the noise in~$\vect z$ is first transferred to~$p$ and then to~$q$.
It is not so easy to construct a modified $L^2(\mu)$ scalar product in this case.
On the other hand, the $H^1(\mu)$ framework of~\cite{MR2562709} can be used directly,
as already done in~\cite{MR2793823}.
Our contribution, compared to the latter work,
is to carefully track the dependence of the convergence rate on the parameters of the dynamics.

In the calculations below we consider the model GL1 for simplicity,
but similar calculations can be carried out for other quasi-Markovian models.
Our results apply to both the periodic and confining settings.
Throughout this section, all operators are considered by default on the functional space~$L^2(\mu)$,
the adjoint of a closed unbounded operator~$T$ on this space being denoted by~$T^*$.

\revision{
We will prove the main results in an order different from that in which they are stated in \cref{sub:main_results}.
This is both more usual and more natural because, as will become clear later,
it is simpler to construct an inner product $\iip{\dummy}{\dummy}_{\gamma, \nu}$ such that~\eqref{eq:hypocoercivity_h1_auxiliary} holds than to construct one such that~\eqref{eq:hypoelliptic_regularization} holds.
}

\begin{remark}
  The approach taken in this section can be applied in particular to the model GL2,
  as discussed in Appendix~\ref{sec:longtime_behavior_for_model_gl2}.
  The computations are however algebraically more cumbersome,
  \revision{so} the scalings we obtain for the resolvent bound appear not to be sharp, at least in the limit $\gamma \to \infty$.
\end{remark}

\subsection{Proof of \texorpdfstring{\cref{thm:hypocoercivity_h1}}{Theorem 2.1}: decay in \texorpdfstring{$H^1(\mu)$}{H1}}%

We first introduce the adjoint operator $\partial_{z}^* = \beta z - \derivative{1}{z}$ and
rewrite the generator of the dynamics for the model GL1 in the standard form of the~$H^1(\mu)$ coercivity framework~\cite{MR2562709}:
\begin{align}
  \label{eq:hypocoercivity:decomposition_generator}
  \textstyle
  -\mathcal L = A^*A + B, \qquad A = \nu^{-1}\beta^{-1/2} \partial_{z},
  \qquad B = \sqrt{\gamma} \, \nu^{-1} ( p \, \derivative{1}{z} - z \, \derivative{1}{p} )
  + ( \derivative*{1}[V]{q}(q) \, \derivative{1}{p} - p \, \derivative{1}{q} ).
\end{align}
The relevant operators for the study of hypocoercivity are obtained from (iterated) commutators of~$B$ with~$A$:
\begin{subequations}
  \begin{align}
&\textstyle C_0 := \nu \, \beta^{1/2} \, A = \derivative{1}{z}, \nonumber \\
\label{eq:convergence:commutators_1}
&C_1 := \nu \, \gamma^{-1/2} \, \commut{C_0}{B} = \textstyle - \derivative{1}{p}, \\
\label{eq:convergence:commutators_2}
&C_2 := \commut{C_1}{B} + \nu^{-1} \, \gamma^{1/2}   \, C_0 = \textstyle \derivative{1}{q}, \\
\label{eq:convergence:commutators_3}
&\commut{C_2}{B} + \derivative*{2}[V]{q^2}(q) \, C_1 =  0.
  \end{align}
\end{subequations}
If we were interested only in showing that $-\mathcal L$ is hypocoercive,
it would be sufficient to invoke at this stage~\cite[Theorem 24]{MR2562709}, as done in~\cite{MR2793823}.
Here, however, we are interested not only in whether the dynamics converge to equilibrium
but also in the scaling of the rate of convergence with respect to $\nu$ and $\gamma$,
so a careful analysis is required.
Recalling that $\ip{\cdot}{\cdot}$ and $\norm{\cdot}$ denote respectively the standard scalar product and norm of $\lp{2}{\mu}$,
we denote by $\iip{\cdot}{\cdot}$ the inner product defined by polarization from the norm constructed with the operators~$C_0,C_1,C_2$ defined above:
\begin{align}
  \label{eq:hypocoercivity:norm}
  \textstyle
  \iip{h}{h}
  & = \textstyle \norm{h}^2
    + a_0 \norm{\derivative{1}[h]{z}}^2
    + a_1 \norm{\derivative{1}[h]{p}}^2
    + a_2 \norm{\derivative{1}[h]{q}}^2
  - 2 b_0 \ip{\derivative{1}[h]{z}}{\derivative{1}[h]{p}}
  - 2 b_1 \ip{\derivative{1}[h]{p}}{\derivative{1}[h]{q}},
\end{align}
that is
\begin{align}
  \textstyle
  \iip{h_1}{h_2}
  & =
  \textstyle \ip{h_1}{h_2}
  + a_0 \ip{\derivative{1}[h_1]{z}}{\derivative{1}[h_2]{z}}
  + a_1 \ip{\derivative{1}[h_1]{p}}{\derivative{1}[h_2]{p}}
  + a_2 \ip{\derivative{1}[h_1]{q}}{\derivative{1}[h_2]{q}}
  \nonumber \\
   & \ \ \ \textstyle
   - b_0  \ip{\derivative{1}[h_1]{z}}{\derivative{1}[h_2]{p}}
   - b_0 \ip{\derivative{1}[h_2]{z}}{\derivative{1}[h_1]{p}}
   - b_1 \ip{\derivative{1}[h_1]{p}}{\derivative{1}[h_2]{q}}
   - b_1 \ip{\derivative{1}[h_2]{p}}{\derivative{1}[h_1]{q}}.
   \label{eq:hypocoercivity:inner_product}
\end{align}
In \revision{these} expressions, the coefficients $a_0,a_1,a_2$ are positive, while $b_0,b_1$ are nonnegative (this sign convention is motivated by the computations performed in this section).

To prove hypocoercivity for the norm of $\sobolev{1}{\mu}$,
we must show that it is possible to find coefficients
$a_0, a_1, a_2>0$ and $b_0, b_1 \geq 0$ such that:
\begin{enumerate}[(i)]
\item the standard weighted Sobolev norm $\|\cdot\|_{H^1(\mu)}$ (which corresponds to~$a_0=a_1=a_2=1$ and~$b_0=b_1=0$ in~\eqref{eq:hypocoercivity:norm}) is equivalent to the norm~\eqref{eq:hypocoercivity:norm}. Let us however mention that degenerate norms not equivalent to~\eqref{eq:hypocoercivity:norm} can be considered, as initially done in~\cite{MR1924934}, and recently used in~\cite{Baudoin17,OL15,MR3925138};
\item coercivity holds for this modified norm, \emph{i.e.}~there exists $\lambda > 0$ such that $- \textstyle \iip{h}{\mathcal L h} \geq \lambda \, \iip{h}{h}$
  \revision{for all smooth, compactly supported and mean-zero $h \in C^{\infty}_c \cap L^2_0(\mu)$.}
\end{enumerate}
\revision{
  In order to be complete,
  we should in principle show that it is indeed sufficient to prove the coercivity inequality~(ii) only for $h \in C^{\infty}_c \cap L^2_0(\mu)$.
  As discussed in the proof of~\cite[Theorem A.8]{MR2562709} in a slightly different context,
  this requires an approximation argument,
  which we will omit here because this argument is both standard and \revision{somewhat} technical.
  For the same reason, we will not worry about the technical justification of the calculations in \cref{sub:hypoelliptic_regularization}.
}

By the Cauchy–Schwarz inequality and since $a_0,a_1,a_2,b_0,b_1 \geq 0$, it is clear that
\begin{equation}
  \label{eq:hypocoercivity:equivalence_with_sobolev_norm}
  \iip{h}{h} - \norm{h}^2 \geq
  \begin{pmatrix}
    \norm{\derivative{1}[h]{z}} \\
    \norm{\derivative{1}[h]{p}} \\
    \norm{\derivative{1}[h]{q}}
  \end{pmatrix}^\t
  \mat M_1
  \begin{pmatrix}
    \norm{\derivative{1}[h]{z}} \\
    \norm{\derivative{1}[h]{p}} \\
    \norm{\derivative{1}[h]{q}}
  \end{pmatrix},
  \qquad
  \mat M_1 =
    \begin{pmatrix}
      a_0 & -b_0 & 0 \\
      -b_0 & a_1 & -b_1 \\
      0 & - b_1 & a_2 \\
    \end{pmatrix},
\end{equation}
so it is sufficient that $\mat M_1$ be positive definite in order to meet the first condition.
For the second condition, we rely on the following auxiliary result.

\begin{lemma}
  \label{lem:coercivity_condition}
  Suppose that \cref{assumption:assumption_potential} holds.
  Then
  \label{lemma:auxiliary_result_hypocoercivity}
  \begin{equation}
    \label{eq:coercivity_auxiliary_norm}
    - \iip{h}{\mathcal L h} \geq
\frac{1}{\nu^2 \beta}
    \begin{pmatrix}
      \norm{C_0 \, C_0 \, h} \\
      \norm{C_0 \, C_1 \, h} \\
      \norm{C_0 \, C_2 \, h}
    \end{pmatrix}^\t
    \mat M_1
    \begin{pmatrix}
      \norm{C_0 \, C_0 \, h} \\
      \norm{C_0 \, C_1 \, h} \\
      \norm{C_0 \, C_2 \, h}
    \end{pmatrix}
    +
    \begin{pmatrix}
      \norm{C_0 \, h} \\
      \norm{C_1 \, h} \\
      \norm{C_2 \, h}
    \end{pmatrix}^\t
    \widetilde{\mat M}_2
    \begin{pmatrix}
      \norm{C_0 \, h} \\
      \norm{C_1 \, h} \\
      \norm{C_2 \, h}
    \end{pmatrix},
  \end{equation}
  where $\mat M_1$ is the same matrix as in~\cref{eq:hypocoercivity:equivalence_with_sobolev_norm}
  and $\widetilde{\mat M}_2$ is given by
  \begin{equation}
    \label{eq:definition_tilde_M2}
    \widetilde{\mat M}_2 =
    \begin{pmatrix}
      \dps \frac{1}{\beta \nu^2} + \frac{a_0}{\nu^{2}} - \frac{b_0 \sqrt{\gamma}}{\nu}
    & \dps - \left|{(a_0 - a_1)\frac{\sqrt{\gamma}}{\nu} + \frac{b_0}{\nu^{2}}}\right|
    & \dps - \left|{b_0 - \frac{b_1 \sqrt{\gamma}}{\nu}}\right|\\[10pt]
      0 & \dps \frac{b_0 \sqrt{\gamma}}{\nu} - b_1 \norm{\derivative*{2}[V]{x^2}}_{\infty} & - a_1 - a_2 \, \norm{\derivative*{2}[V]{x^2}}_{\infty} \\
      0 & 0 & b_1
    \end{pmatrix}.
  \end{equation}
\end{lemma}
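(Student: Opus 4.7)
The plan is to compute $-\iip{h}{\mathcal{L}h}$ directly from the polarization of the inner product~\eqref{eq:hypocoercivity:inner_product}, applied with $h_1=h$ and $h_2 = -\mathcal{L}h$, and to reorganize the resulting terms into the two quadratic forms appearing on the right-hand side of~\eqref{eq:coercivity_auxiliary_norm}. The key inputs are the decomposition $-\mathcal{L} = A^*A + B$ from~\eqref{eq:hypocoercivity:decomposition_generator}, with $B$ antisymmetric on $L^2(\mu)$ and $A = \nu^{-1}\beta^{-1/2}C_0$, together with the commutator identities~\eqref{eq:convergence:commutators_1}--\eqref{eq:convergence:commutators_3}. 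First I would collect the commutators of $-\mathcal{L}$ with each~$C_i$: since $C_1 = -\derivative{1}{p}$ and $C_2 = \derivative{1}{q}$ both commute with $A^*A = (\nu^{2}\beta)^{-1}\derivative{1}{z}^*\derivative{1}{z}$, and a short calculation using $\derivative{1}{z}^* = \beta z - \derivative{1}{z}$ shows $\commut{C_0}{A^*A} = \nu^{-2}C_0$, combining these with~\eqref{eq:convergence:commutators_1}--\eqref{eq:convergence:commutators_3} yields
\[
  \commut{C_0}{-\mathcal{L}} = \frac{1}{\nu^2}\, C_0 + \frac{\sqrt{\gamma}}{\nu}\, C_1, \qquad
  \commut{C_1}{-\mathcal{L}} = C_2 - \frac{\sqrt{\gamma}}{\nu}\, C_0, \qquad
  \commut{C_2}{-\mathcal{L}} = -\derivative*{2}[V]{q^2}(q)\, C_1.
\]
The second key identity, a direct consequence of the antisymmetry of $B$ and the symmetry of $A^*A$, is
\[
  \ip{C_i h}{(-\mathcal{L}) C_j h} + \ip{C_j h}{(-\mathcal{L}) C_i h} = 2 \ip{A C_i h}{A C_j h} = \frac{2}{\nu^2\beta}\,\ip{C_0 C_i h}{C_0 C_j h}.
\]

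I would then expand $\iip{h}{-\mathcal{L}h}$ from the polarization formula associated with~\eqref{eq:hypocoercivity:inner_product}, rewriting each of the eight inner products using $D_j(-\mathcal{L}) = (-\mathcal{L}) D_j + \commut{D_j}{-\mathcal{L}}$ for $D_j \in \{\id, C_0, -C_1, C_2\}$. Substituting the commutator formulas above and tracking the signs introduced by $\derivative{1}{p} = -C_1$ produces a sum that splits naturally into two groups. The second-order group takes the form
\[
  \frac{1}{\nu^2\beta}\Bigl(a_0 \norm{C_0 C_0 h}^2 + a_1 \norm{C_0 C_1 h}^2 + a_2 \norm{C_0 C_2 h}^2 + 2b_0 \ip{C_0 C_0 h}{C_0 C_1 h} + 2b_1 \ip{C_0 C_1 h}{C_0 C_2 h}\Bigr),
\]
which is bounded from below via Cauchy--Schwarz on the two cross inner products by $(\nu^2\beta)^{-1}$ times the first quadratic form on the right-hand side of~\eqref{eq:coercivity_auxiliary_norm}, i.e.\ the one with matrix $\mat M_1$.

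The first-order group is handled similarly. The coefficient of $\norm{C_0 h}^2$ combines the term $(\nu^2\beta)^{-1}\norm{C_0 h}^2$ from $\ip{h}{-\mathcal{L}h}$, $(a_0/\nu^2)\norm{C_0 h}^2$ from the $a_0$ commutator, and $-(b_0\sqrt{\gamma}/\nu)\norm{C_0 h}^2$ from the $b_0$ cross commutator; the coefficient of $\norm{C_1 h}^2$ sums $b_0\sqrt\gamma/\nu$ from the $b_0$ cross commutator and $-b_1\norm{V''}[\infty]$ from bounding $-b_1\ip{C_1 h}{V''(q) C_1 h} \geq -b_1\norm{V''}[\infty]\norm{C_1 h}^2$; and $\norm{C_2 h}^2$ picks up $b_1$ from the $b_1$ cross commutator. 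The off-diagonal contributions $c_{ij}\ip{C_i h}{C_j h}$ are bounded from below by $-|c_{ij}|\norm{C_i h}\norm{C_j h}$ via Cauchy--Schwarz, producing the absolute-value entries in the first row of~$\widetilde{\mat M}_2$; the $(1,2)$ entry combines the coefficient $-a_1$ from the direct term $a_1\ip{C_1 h}{C_2 h}$ with the bound $-a_2\ip{C_2 h}{V''(q) C_1 h} \geq -a_2\norm{V''}[\infty]\norm{C_1 h}\norm{C_2 h}$, yielding $-a_1 - a_2\norm{V''}[\infty]$. The use of $\norm{V''}[\infty]$ is the only place where~\cref{assumption:assumption_potential}(ii) enters; the bound is automatic in the periodic case. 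The main obstacle is combinatorial book-keeping rather than any conceptual difficulty: one must carefully track the many sign cancellations, and ensure that the upper-triangular (rather than symmetric) structure of~$\widetilde{\mat M}_2$ correctly records that each off-diagonal entry arises from exactly one Cauchy--Schwarz lower bound.
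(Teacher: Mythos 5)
Your proposal is correct and takes essentially the same route as the paper: both expand the modified Dirichlet form term by term using the commutator relations for $C_0, C_1, C_2$, the antisymmetry of $B$ and $\commut{C_0}{A^*A} = \nu^{-2} C_0$, and then apply Cauchy--Schwarz together with $\norm{V''}[\infty] < \infty$ to assemble the quadratic forms with matrices $\mat M_1$ and $\widetilde{\mat M}_2$; commuting each $C_i$ past the full generator rather than through $A^*A$ and $B$ separately is only a cosmetic reorganization. The one slip is a label: the entry combining $-a_1$ with $-a_2 \norm{V''}[\infty]$ is the $(2,3)$ entry of $\widetilde{\mat M}_2$, not the $(1,2)$ entry.
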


It would be desirable to relax the condition $\norm*{V''}[\infty] < \infty$ in \cref{assumption:assumption_potential}
by following the approach presented in~\cite[Section~7]{MR3509213}, but this is not possible as such here; see Remark~\ref{rmk:V''bounded} for further precisions.

\begin{proof}
We calculate the action of the symmetric part of the generator on
the terms multiplying \revision{$a_0, a_1, a_2$} in~\eqref{eq:hypocoercivity:norm}:
\begin{subequations}
  \label{eq:subeq_coercivity_0}
  \begin{align}
    \label{eq:hypocoercivity:symmetric_on_pure_terms_0}
    & \ip{C_0 h}{C_0 (A^* A) h} = \nu^{-2} \, \beta^{-1} \, \norm{C_0^2 \, h}^2 + \nu^{-2} \, \norm{C_0 h}^2, \\
    \label{eq:hypocoercivity:symmetric_on_pure_terms_1}
    & \ip{C_1 h}{C_1 (A^* A) h} = \nu^{-2}  \, \beta^{-1} \, \norm{C_0 \, C_1 \, h}^2, \\
    \label{eq:hypocoercivity:symmetric_on_pure_terms_2}
    & \ip{C_2 h}{C_2 (A^* A) h} = \nu^{-2}  \, \beta^{-1} \, \norm{C_0 \, C_2 \, h}^2,
  \end{align}
\end{subequations}
where we took into account that $C_0$ commutes with $C_1,C_2$, while $\commut{C_0}{C_0^*} = \beta$.
The action of the antisymmetric part of the generator~$B$ on the the same terms is, in view of the commutator relations~\eqref{eq:convergence:commutators_1}-\eqref{eq:convergence:commutators_3},
\begin{subequations}
  \label{eq:subeq_coercivity_1}
  \begin{align}
    \label{eq:hypocoercivity:skew_on_pure_terms_0}
    & \ip{C_0 h}{C_0 B h} = \ip{C_0 h}{[C_0,B] h} = \gamma^{1/2} \, \nu^{-1} \, \ip{C_0 h}{C_1 h}, \\
    \label{eq:hypocoercivity:skew_on_pure_terms_1}
    & \ip{C_1 h}{C_1 B h} = \ip{C_1 h}{[C_1,B] h} = \ip{C_1 h}{C_2 h} - \gamma^{1/2} \, \nu^{-1} \ip{C_1 h}{C_0 h}, \\
    \label{eq:hypocoercivity:skew_on_pure_terms_2}
    & \ip{C_2 h}{C_2 B h} = \ip{C_2 h}{[C_2,B] h} = - \ip{C_2 h}{\derivative*{2}[V]{q^2}(q) \, C_1 \, h}.
  \end{align}
\end{subequations}
For the terms multiplying $b_0, b_1$ in~\eqref{eq:hypocoercivity:norm}, we have
\begin{subequations}
  \label{eq:subeq_coercivity_2}
  \begin{align}
    \label{eq:hypocoercivity:symmetric_on_mixed_terms_0}
    \ip{C_0 (A^* A)h}{C_1  h} + \ip{C_0 h}{C_1 (A^* A) h}  =& \, \nu^{-2} \,\left( \ip{C_0 h}{C_1 h} + 2 \, \beta^{-1} \, \ip{C_0^2 h}{C_0 C_1 h} \right), \\
    \label{eq:hypocoercivity:symmetric_on_mixed_terms_1}
    \ip{C_1 (A^* A) h}{C_2  h} + \ip{C_1 h}{C_2 (A^* A) h} =& \, 2 \, \nu^{-2} \, \beta^{-1} \, \ip{C_0 C_1 h}{C_0 C_2 h},
  \end{align}
\end{subequations}
and
\begin{subequations}
  \label{eq:subeq_coercivity_3}
  \begin{align}
    \label{eq:hypocoercivity:skew_on_mixed_terms_0}%
    \ip{C_0 B h}{C_1 h} + \ip{C_0 h}{C_1 B h} & = \ip{[C_0,B] h}{C_1 h} + \ip{C_0 h}{[C_1, B] h} \notag \\
                                              & = \gamma^{1/2} \, \nu^{-1} \, \norm{C_1 h}^2
                                                +  \ip{C_0 h}{C_2 h} - \gamma^{1/2} \, \nu^{-1} \, \norm{C_0 h}^2, \\
    \label{eq:hypocoercivity:skew_on_mixed_terms_1}%
    \ip{C_1 B h}{C_2 h} + \ip{C_1 h}{C_2 B h} & = \ip{[C_1,B] h}{C_2 h} + \ip{C_1 h}{[C_2,B] h} \notag \\
                                              & = \norm{C_2 h}^2 - \gamma^{1/2} \, \nu^{-1} \, \ip{C_0 h}{C_2 h}
                                                - \ip{\derivative*{2}[V]{q^2}(q) C_1 h}{C_1 h}.
  \end{align}
\end{subequations}
The inequality~\cref{eq:coercivity_auxiliary_norm} then follows by combining~\cref{eq:subeq_coercivity_0,eq:subeq_coercivity_1,eq:subeq_coercivity_2,eq:subeq_coercivity_3} and using~\cref{assumption:assumption_potential} as well as the Cauchy--Schwarz inequality.
\end{proof}

\begin{remark}
  \label{rmk:V''bounded}
  For underdamped Langevin dynamics, it is possible to relax the condition $\norm*{V''}[\infty] < \infty$ in \cref{assumption:assumption_potential} by following the approach of\revision{~\cite[Section~7]{MR2562709}};
  see also the presentation in the proof of~\cite[Theorem 2.15]{MR3509213}, which relies on an estimate provided by~\cite[Lemma~A.24]{MR2562709}. The latter result states that, if $V \in C^2(\real^n)$ satisfies the inequality
  \begin{equation}
    \label{eq:weaker_condition_hessian}
    \forall q \in \real^n, \qquad \abs{\hess V(q)} \leq c(1 + \abs{\nabla V(q)})
  \end{equation}
  for some constant $c >0$,
  then there exist nonnegative constants $A_V$ and $B_V$ such that
  \[
    \forall h \in \sobolev{1}{\e^{-\beta V}},
    \qquad \norm{h \, \hess V}[L^2(\e^{-\beta V})] \leq A_V \norm{h}[L^2(\e^{-\beta V})] + B_V \norm{\nabla h}[L^2(\e^{-\beta V})].
  \]
  Unfortunately,
  this approach does not enable to replace the condition of bounded Hessian by the weaker condition~\eqref{eq:weaker_condition_hessian} in the case of model GL1.
  In particular, it seems difficult to control the term on the right-hand side of~\eqref{eq:hypocoercivity:skew_on_pure_terms_2}.
  Indeed, quantities such as $\abs{\ip{C_1h}{V''(q) C_2h}}$ would be bounded by factors such as $\|C_2^2 h\|$ or $\|C_1C_2h\|$,
  which cannot be controlled with the first term \revision{on} the right-hand side of~\eqref{eq:coercivity_auxiliary_norm}.
  A similar issue arises with the last term on the \revision{right-hand} side of~\eqref{eq:hypocoercivity:skew_on_mixed_terms_1}.
\end{remark}

\Cref{lem:coercivity_condition} shows that the coercivity of~$-\mathcal L$ for the modified norm is ensured if we can find parameters $a_0, a_1, a_2, b_0, b_1$ such that the matrix~$\widetilde {\mat M}_2$ in~\eqref{eq:definition_tilde_M2} is positive definite. This is made precise in the following result (\Cref{proposition:rate_of_convergence_hypocoercitivy}),
which is a weaker version of \cref{proposition:rate_of_convergence} proved in the next section. In order to state it, we introduce the following notation: for matrices $\mat X,\mat Y \in \real^{d \times d}$,
\[
  \mat X \succcurlyeq_+ \mat Y \qquad \text{if} \qquad \vect v^\t \mat X \vect v \geq \vect v^\t \mat Y \vect v \qquad \forall \vect v \geq 0,
\]
where the notation $\vect v \geq 0$ for $\vect v =(v_1,\dots,v_d) \in \real^d$ means that $v_i \geq 0$ for all $1 \leq i \leq d$. We also define the minimum of the Rayleigh quotient under a positivity constraint:
\begin{equation}
  \label{eq:positive_rayleigh}%
  \lambda_{\min}^+(\mat X) := \min_{\vect v \neq 0, \vect v \geq 0} \, \frac{\vect v^\t {\mat X} \vect v}{\vect v^\t \vect v}.
\end{equation}
Note that $\mat X \succcurlyeq_+ \mat Y$ implies that $\lambda_{\min}^+(\mat X) \geq \lambda_{\min}^+(\mat Y)$.

\begin{remark}
The inequality $\mat X \succcurlyeq \mat Y$ for two symmetric matrices implies $\mat X \succcurlyeq_+ \mat Y$, but not conversely:
consider e.g.\ the 2-by-2 matrices with entries $\delta_{ij}$ and $(-1)^{i + j}$ for $1 \leq i,j \leq 2$. Remark also that,
for a matrix~$\mat X$ with nonpositive off-diagonal entries (such as~$\widetilde{\mat M}_2$),
it is equivalent to define $\lambda_{\min}^+\left(\mat X\right)$ as the smallest eigenvalue of the symmetrized matrix $(\mat X + \mat X^\t)/2$, since the minimum of $\vect v^\t \mat X \vect v$ on the sphere $|\vect v| = 1$ is achieved for some $\vect v$ with nonnegative elements.
\end{remark}

We easily deduce from~\eqref{eq:definition_tilde_M2} that
\begin{equation}
  \label{eq:simplified_matrix}%
  \widetilde{\mat M}_2 \succcurlyeq_+
  \begin{pmatrix}
    (\beta^{-1} + a_0) \nu^{-2}- b_0 r
    & - a_0 r - a_1 r - b_0 \nu^{-2}
    & - b_0 - b_1 r\\
    0 & b_0 r - b_1 \norm{\derivative*{2}[V]{x^2}}_{\infty} & - a_1 - a_2 \, \norm{\derivative*{2}[V]{x^2}}_{\infty} \\
    0 & 0 & b_1
  \end{pmatrix},
\end{equation}
where
\[
  r := \gamma^{1/2} \, \nu^{-1}.
\]
It is therefore sufficient to work with the matrix on the right-hand side of~\eqref{eq:simplified_matrix} to derive a lower bound for $\lambda_{\min}^+\left(\widetilde{\mat M}_2\right)$, and thus for the rate of convergence.

\begin{proposition}
  \label{proposition:rate_of_convergence_hypocoercitivy}%
  There exist parameters $(a_0, a_1, a_2, b_0, b_1)$, as well as a constant $C>0$ (independent of $\gamma,\nu$) and $\alpha(\gamma,\nu)>0$, such that $\alpha(\gamma,\nu) \mat I_3 \preccurlyeq \mat M_1 \preccurlyeq \mat I_3$ and
  \[
    \widetilde{\mat M}_2 \succcurlyeq_+ C \min \left(r^2 \nu^2, \frac{1}{r^2 \nu^2}, \frac{r^2}{\nu^2}\right)\mat I_3 = C \min \left(\gamma, \frac{1}{\gamma}, \frac{\gamma}{\nu^4}\right) \mat I_3.
  \]
\end{proposition}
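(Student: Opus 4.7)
The plan is to reduce the proposition to an explicit algebraic claim about a $3\times 3$ symmetric matrix and then to make a careful ansatz for the parameters. Since the upper bound $\mat M_1 \preccurlyeq \mat I_3$ is automatic once all coefficients $a_0, a_1, a_2, b_0, b_1$ are taken small enough, and since the lower bound $\alpha(\gamma,\nu)\mat I_3 \preccurlyeq \mat M_1$ follows from the positive-definiteness of $\mat M_1$ (with $\alpha(\gamma,\nu)$ computable from the characteristic polynomial), the heart of the proof is the $\succcurlyeq_+$ lower bound on $\widetilde{\mat M}_2$. As noted in the remark preceding the proposition, the matrix on the right-hand side of~\eqref{eq:simplified_matrix} has nonpositive off-diagonal entries, so its $\lambda_{\min}^+$ equals the smallest eigenvalue of its symmetrization, and the task reduces to producing parameters for which that symmetrized matrix dominates $C\lambda_*\mat I_3$, where $\lambda_* := \min(\gamma,1/\gamma,\gamma/\nu^4)$.

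Next, I would introduce an auxiliary small parameter $\eta > 0$ (to be fixed at the end depending only on $\beta$ and $\|V''\|_\infty$) and pose an ansatz in which the five coefficients are explicit powers of $\eta$ times well-chosen functions of $r = \sqrt{\gamma}/\nu$ and $\nu$. The scalings are read off from~\eqref{eq:simplified_matrix}: $b_1$ must be at least of order $\lambda_*$ (third diagonal), $b_0 r$ must dominate $b_1\|V''\|_\infty + \lambda_*$ (second diagonal), and $(1/\beta + a_0)/\nu^2$ must dominate $b_0 r + \lambda_*$ (first diagonal). A natural choice is therefore $b_1 \sim \eta\lambda_*$, $b_0 \sim \eta\lambda_*/r$ (with a correction to absorb $b_1\|V''\|_\infty$), and $a_0 \sim \eta\nu^2\lambda_*$. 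The remaining parameters $a_1,a_2$ are chosen as powers of $\eta$ times appropriate functions of $r,\nu$ so that each upper off-diagonal entry of the simplified matrix is of order $\sqrt{\eta}$ times the geometric mean of its two neighbouring diagonals.

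With this ansatz, I would apply Young's inequality to every cross term $M_{ij}v_iv_j$ in the quadratic form $\vect v^\t \widetilde{\mat M}_2\vect v$ for $\vect v\geq 0$, absorbing each into its two adjacent diagonals with a weight of order $\sqrt{\eta}$. For $\eta$ small enough (depending only on $\beta$ and $\|V''\|_\infty$), the residual diagonal contributions are each at least $\tfrac{1}{2}C\lambda_*$, yielding $\widetilde{\mat M}_2 \succcurlyeq_+ C\lambda_*\mat I_3$. At the same value of $\eta$, $\mat M_1$ becomes a fixed matrix (in the variables $r, \nu$) whose positive-definiteness is checked by Sylvester's criterion, giving both the existence of $\alpha(\gamma,\nu) > 0$ and, after possibly rescaling $\eta$, the upper bound $\mat M_1 \preccurlyeq \mat I_3$.

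The main obstacle is tuning the ansatz so that the Young-type absorption works \emph{uniformly} across the three regimes $\gamma\to 0$, $\gamma\to\infty$ and $\nu\to\infty$: each regime creates a different bottleneck among the three diagonal inequalities, and the $\min$ appearing in the statement reflects precisely the worst-case regime. The term $\gamma/\nu^4$, in particular, is forced by the interaction of the $1/\nu^2$ prefactor in the $(1,1)$ entry with the $r = \sqrt{\gamma}/\nu$ prefactor in the $(1,2)$ entry as $\nu\to\infty$, which prevents a bound of order $\min(\gamma,1/\gamma)$ from holding in that regime.
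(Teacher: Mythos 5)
Your overall strategy coincides with the paper's: read tentative scalings for $b_0,b_1$ off the dominating matrix in~\eqref{eq:simplified_matrix}, introduce a small parameter, and absorb the off-diagonal entries into the diagonals (your term-by-term Young inequalities are equivalent to the paper's congruence with the diagonal matrix $\mat U$); working with $\widetilde{\mat M}_2$ rather than with $\mat M_2$ is legitimate here, since the paper only needs $\mat M_2$ for the hypoelliptic regularization and deduces this proposition from $\widetilde{\mat M}_2 \succcurlyeq_+ \mat M_2$. However, the quantitative step that constitutes the actual content of the proposition is not carried out, and the concrete pilot choices you do give would fail. Writing $\lambda_* = \min(\gamma,1/\gamma,\gamma/\nu^4)$ and $r=\sqrt{\gamma}/\nu$, your ansatz $a_0 \sim \eta\,\nu^2\lambda_*$ is not admissible uniformly in the parameters: the $(1,2)$ entry contains $-a_0 r$, and absorbing it into the diagonals $\beta^{-1}\nu^{-2}$ and $b_0 r \sim \eta\lambda_*$ requires $a_0 r \lesssim \sqrt{\eta}\,\sqrt{\lambda_*}\,\nu^{-1}$, i.e.\ $\sqrt{\eta}\,\nu^3 r\sqrt{\lambda_*}\lesssim 1$, which fails for instance along $\gamma = \nu^2 \to \infty$ (where $r=1$ and $\lambda_*=\nu^{-2}$). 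Conversely, as $\gamma\to 0$ this $a_0$ is too small: positive definiteness of $\mat M_1$ forces $a_0 a_1 \gtrsim b_0^2 \sim \eta^2\lambda_*^2/r^2$, hence $a_1\gtrsim \eta\lambda_*/\gamma$, which contradicts the upper bound $a_1 \lesssim \sqrt{\eta\,(b_0 r)\,b_1}\sim \eta^{3/2}\lambda_*$ needed to absorb the $(2,3)$ entry. The viable scaling is $a_0 \sim \nu^2\lambda_*/\gamma = \lambda_*/r^2$ (times a power of the small parameter), not $\nu^2\lambda_*$.

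More structurally, deferring the positive definiteness of $\mat M_1$ to a final Sylvester check hides the real difficulty: the lower bounds $a_0 a_1 > b_0^2$ and $\det \mat M_1>0$ compete with the upper bounds on $a_0 r$, $a_1 r$, $a_1$, $a_2$ imposed by the absorption of the $(1,2)$ and $(2,3)$ entries, and reconciling these \emph{uniformly} in $(\gamma,\nu)$ is exactly what produces the stated $\min$; with independently chosen coefficients this is still possible, but only for a carefully tuned hierarchy of exponents that your sketch neither exhibits nor shows to be nonempty (and which your ``all coefficients at order $\eta$'' pilot choice violates). The paper sidesteps this by not choosing the five coefficients independently: it sets $b_0=x$, $b_1=w$, $a_0=2x/y$, $a_1=xy+2w/z$, $a_2=zw$ with $x\sim r^{-1}\lambda_*$, $y\sim r$, $w\sim \lambda_*$, $z\sim 1$ (up to powers of a small parameter $A$ constrained by $0<\eta<1/2$, $1<\delta<1+2\eta$, $0<\rho<(\delta-1)/2$), which makes $\mat M_1$ positive definite \emph{by construction} and reduces the proof to the absorption estimate $\mat U \mat M_2 \mat U \succcurlyeq_+ \mat I_3 + \mat R$ with $\mat R \to 0$ as $A\to 0$. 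Supplying either this structured parametrization or an explicit consistent set of exponents (together with the verification in all three limiting regimes) is what is missing from your argument.
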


We obtain from the previous proposition and~\eqref{eq:coercivity_auxiliary_norm} that
\[
  -\iip{h}{\mathcal L h}
  \geq \lambda_{\min}^+\left(\widetilde{\mat M}_2\right) \sum_{i=0}^{2} \norm{C_i h}^2.
\]
We rely at this stage on Poincar\'e's inequality to control~$\norm{h}^2$ with the right-hand side \revision{of} the above inequality.
Since $\mu$ is a product of probability measures that satisfy \revision{a} Poincar\'e's inequality (the marginals in~$p,z$ being Gaussian distributions of variance~$\beta^{-1}$),
it itself satisfies \revision{a Poincar\'e} inequality, see for instance~\cite[Proposition 2.6]{MR3509213}:
\[
\forall h \in H^1_0(\mu), \qquad  \norm{h}^2 \leq \frac{1}{\min (R_{\beta}, \beta)}\sum_{i=0}^{2} \norm{C_i h}^2.
\]
where $R_{\beta}$ is the Poincar\'e constant from~\cref{assumption:assumption_potential}. Denoting by $K_1$ is the largest eigenvalue \revision{of} $\mat M_1$ and by $\kappa = \max(R_{\beta}^{-1}, \beta^{-1})$, this implies that, for any $\zeta \in (0, 1)$ and any $h \in H^1_0(\mu)$,
\[
-\iip{h}{\mathcal L h} \geq \lambda_{\min}^+\left(\widetilde{\mat M}_2\right) \left( \zeta \kappa \norm{h}^2 + (1 - \zeta) \sum_{i=0}^{2} \norm{C_i h}^2\right) \geq \lambda_{\min}^+\left(\widetilde{\mat M}_2\right) \min \left(\zeta \kappa, \frac{1-\zeta}{K_1} \right) \iip{h}{h}.
\]
The optimal choice for~$\zeta$ is $\zeta = (1+K_1 \kappa)^{-1}$, which leads to the following inequality for $h \in H^1_0(\mu)$ (noting that $\e^{t \mathcal{L}}h \in H^1_0(\mu)$ for all $t \geq 0$ \revision{by Theorem~\ref{thm:hypoelliptic_regularization}}):
\[
\frac{1}{2} \, \derivative*{1}{t} \iip{\e^{t \mathcal{L}}h}{\e^{t \mathcal{L}}h} \revision{= \iip{\mathcal{L}\e^{t \mathcal{L}}h}{\e^{t \mathcal{L}}h}} \leq - \frac{\lambda_{\min}^+\left(\widetilde{\mat M}_2\right)}{K_1+\kappa^{-1}} \iip{\e^{t \mathcal{L}}h}{\e^{t \mathcal{L}}h}.
\]
\Cref{thm:hypocoercivity_h1} then follows from Gronwall's inequality.

\subsection{Proof of \texorpdfstring{\cref{thm:hypoelliptic_regularization}}{Theorem 2.2}: hypoelliptic regularization}%
\label{sub:hypoelliptic_regularization}

In this section, we prove the hypoelliptic regularization estimate~\eqref{eq:hypoelliptic_regularization}.
For any $h\in \revision{L^2_0(\mu)}$, we define, analogously to~\cite{Herau07,MR2394704,MR2793823},
\begin{align}
  \textstyle
  N_{h}(t)
  =
  \norm{\e^{t \mathcal{L}}h}^2
  & \textstyle + a_0 t \norm{ \derivative{1}[\e^{t \mathcal{L}}h]{z} }^2
  + a_1 t^3 \norm{\derivative{1}[\e^{t \mathcal{L}}h]{p}}^2
  + a_2 t^5 \norm{\derivative{1}[\e^{t \mathcal{L}}h]{q}}^2
  \nonumber \\
   & \textstyle
   - 2 b_0 t^2 \ip{\derivative{1}[\e^{t \mathcal{L}}h]{z}}{\derivative{1}[\e^{t \mathcal{L}}h]{p}}
   - 2 b_1 t^4 \ip{\derivative{1}[\e^{t \mathcal{L}}h]{p}}{\derivative{1}[\e^{t \mathcal{L}}h]{q}},
   \label{eq:regularization:norm}
\end{align}
where $a_0, a_1, a_2>0$ and $b_0, b_1 \geq 0$ are small parameters.
We calculate, by computations similar to the ones performed in the proof of \Cref{lem:coercivity_condition},
\begin{equation}
\label{eq:hypoelliptic_regularization_inequality}
  \frac{1}{2} \, \derivative*{1}[N_h]{t} (t) \leq
  - \frac{1}{\beta \nu^2}
  \begin{pmatrix}
    \norm{C_0 \, C_0 \, h} \\
    \norm{C_0 \, C_1 \, h} \\
    \norm{C_0 \, C_2 \, h}
  \end{pmatrix}^\t
  \widetilde{\mat M}_1(t)
  \begin{pmatrix}
    \norm{C_0 \, C_0 \, h} \\
    \norm{C_0 \, C_1 \, h} \\
    \norm{C_0 \, C_2 \, h}
  \end{pmatrix}
  -
  \begin{pmatrix}
    \norm{C_0 \, h} \\
    \norm{C_1 \, h} \\
    \norm{C_2 \, h}
  \end{pmatrix}^\t
  \mat M_2(t)
  \begin{pmatrix}
    \norm{C_0 \, h} \\
    \norm{C_1 \, h} \\
    \norm{C_2 \, h}
  \end{pmatrix},
\end{equation}
where
\newcommand{\dom}[1]{#1}
\[
  \widetilde{\mat M}_1(t) =
  \begin{pmatrix}
    a_0 \, t & -b_0 \, t^2 & 0 \\
    - b_0 \, t^2 & a_1 \, t^3 & -b_1 \, t^4 \\
    0 & - b_1 \, t^4 & a_2 \, t^5
  \end{pmatrix},
\]
and,
employing again the notation $r = \gamma^{1/2} \, \nu^{-1}$,
\begin{align}
    \mat M_2(t)
    & =
    \notag%
    \scriptscriptstyle{%
    \begin{pmatrix}
      \left(\beta^{-1}+a_0 t\right) \nu^{-2} - b_0r t^2
      &  - a_0 r t - a_1 r t^3 - b_0 \nu^{-2} t^2
      & - b_0 t^2 - b_1 r t^4\\
      0 &  b_0 r t^2 - b_1 t^4 \norm{\derivative*{2}[V]{x^2}}_{\infty}
        &  - a_1 t^3 - a_2 t^5 \norm{\derivative*{2}[V]{x^2}}_{\infty} \\
      0 & 0 &  b_1 t^4
    \end{pmatrix}} \\
    & \ \ +
    \label{eq:decomposition_M2_t}
    \begin{pmatrix}
      - \dps \frac{a_0}{2}
      & - 2 b_0 t
      & 0\\
      0 & \dps - \frac{3a_1}{2} t^2
        & - 4 b_1  t^3 \\
      0 & 0 & \dps - \frac{5a_2}{2} t^4
    \end{pmatrix}%
\end{align}
Note that $\widetilde{\mat M}_1(1) = \mat M_1$ and, for $t=1$ also,
the first matrix on the right-hand side of~\eqref{eq:decomposition_M2_t} coincides with the matrix on the right-hand side of~\eqref{eq:simplified_matrix}.
The matrix $\widetilde{\mat M}_1(t)$ is clearly positive semidefinite for any $t \in [0, 1]$ if $\mat M_1$ is positive semidefinite,
which can be viewed from the factorization
\[
  \widetilde{\mat M}_1(t) =
  \begin{pmatrix}
    t^{1/2} & 0 & 0 \\
    0 & t^{3/2} & 0 \\
    0 & 0 & t^{5/2}
  \end{pmatrix}
  \mat M_1
  \begin{pmatrix}
    t^{1/2} & 0 & 0 \\
    0 & t^{3/2} & 0 \\
    0 & 0 & t^{5/2}
  \end{pmatrix}.
\]
We also notice that, for any~$t \in [0, 1]$,
\[
  \begin{pmatrix}
    1 & 0 & 0 \\
    0 & t^{-1} & 0 \\
    0 & 0 & t^{-2}
  \end{pmatrix}
  \mat M_2(t)
  \begin{pmatrix}
    1 & 0 & 0 \\
    0 & t^{-1} & 0 \\
    0 & 0 & t^{-2}
  \end{pmatrix}
  \succcurlyeq_+ \mat M_2,
\]
with
\begin{equation}
  \label{eq:matrix_regularization}
  \mat M_2 =
  \begin{pmatrix}
      \dps \frac{1}{\beta\nu^2} - b_0r-\frac{a_0}{2}
      &  - a_0 r - a_1 r - \left(2+\nu^{-2}\right)b_0
      & - b_0 - b_1 r\\
      0 &  \dps b_0 r - \frac{3a_1}{2} - b_1 \norm{\derivative*{2}[V]{x^2}}_{\infty}
        &  -4b_1 - a_1 - a_2 \norm{\derivative*{2}[V]{x^2}}_{\infty} \\
      0 & 0 & \dps b_1 - \frac{5a_2}{2}
    \end{pmatrix}.
\end{equation}
The following key result shows that,
for an appropriate choice of the parameters $(a_0, a_1, a_2, b_0, b_1)$,
the matrix $\mat M_2$ is bounded from below,
in the sense of $\succcurlyeq_+$, by the identity matrix multiplied by a positive prefactor.

\begin{proposition}
  \label{proposition:rate_of_convergence}%
  There exist parameters $(a_0, a_1, a_2, b_0, b_1)$,
  as well as a constant $C > 0 $ (independent of $\gamma$ and~$\nu$) and $\alpha(\gamma,\nu)>0$,
  such that $\alpha(\gamma,\nu) \mat I_3 \preccurlyeq \mat M_1 \preccurlyeq \mat I_3$ and
  \[
    \mat M_2 \succcurlyeq_+ C \min \left(r^2 \nu^2, \frac{1}{r^2 \nu^2}, \frac{r^2}{\nu^2}\right) \mat I_3 = C \min \left(\gamma, \frac{1}{\gamma}, \frac{\gamma}{\nu^4}\right) \mat I_3.
  \]
\end{proposition}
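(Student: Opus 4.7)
The plan is to construct an explicit ansatz for $(a_0, a_1, a_2, b_0, b_1)$ as functions of $\gamma$ and $\nu$, and verify $\mat M_2 \succcurlyeq_+ C\lambda \mat I_3$ (with $\lambda := \min(\gamma, 1/\gamma, \gamma/\nu^4)$) by combining Young's inequality with a case analysis on which of $\gamma, 1/\gamma, \gamma/\nu^4$ realizes the minimum. Denoting by $D_1, D_2, D_3$ the three diagonal entries of $\mat M_2$ read off from~\eqref{eq:matrix_regularization}, the targets $D_i \gtrsim \lambda$ dictate a natural scaling: set $b_1, a_2 \sim \lambda$ so that $D_3 = b_1 - \frac{5}{2} a_2$ is of order $\lambda$; set $b_0 \sim \lambda\nu/\sqrt\gamma$ (so that $b_0 r = b_0 \sqrt\gamma/\nu \sim \lambda$) and $a_1 \sim \lambda$ so that $D_2 \sim \lambda$; finally set $a_0 \sim \lambda\nu^2/\gamma$, which is automatically at most~$1$ in every cone (as can be checked by a short case analysis) and keeps $D_1 = (\beta\nu^2)^{-1} - b_0 r - a_0/2$ close to $(\beta\nu^2)^{-1}$, which itself dominates $\lambda/\beta$ in each cone.

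To upgrade these diagonal lower bounds into $\mat M_2 \succcurlyeq_+ C\lambda \mat I_3$, I would use the remark following~\eqref{eq:positive_rayleigh}: since the off-diagonal entries of $\mat M_2$ are nonpositive, $\lambda_{\min}^+(\mat M_2)$ equals the smallest eigenvalue of the symmetrization. Applying Young's inequality $|(\mat M_2)_{ij}|\, v_i v_j \leq \frac{1}{2} |(\mat M_2)_{ij}|(v_i^2 + v_j^2)$ to each off-diagonal entry then reduces the task to three scalar inequalities of the form $|(\mat M_2)_{ij}|^2 \lesssim D_i D_j$ with small absolute constants. Substituting the ansatz, each of these reduces in each of the three cones to an elementary statement such as $\lambda \leq \gamma/(\beta\nu^4)$ or $\lambda \leq 1/(\beta\gamma)$, both immediate consequences of the definition of $\lambda$ up to $\beta$-dependent constants absorbed in~$C$. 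Tuning the multiplicative prefactors of the ansatz small enough that, after all Young deficits are subtracted, a positive fraction of each diagonal remains, yields the universal $C > 0$.

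The companion bound $\alpha(\gamma,\nu) \mat I_3 \preccurlyeq \mat M_1 \preccurlyeq \mat I_3$ is essentially free: the upper bound follows from $a_i \leq 1$ in every cone, while positive-definiteness follows from Sylvester's criterion, whose two-by-two conditions $a_0 a_1 > b_0^2$ and $a_1 a_2 > b_1^2$ are equalities at the leading order of the ansatz and become strict once the prefactors of the $b_j$'s are chosen strictly smaller than those of the adjacent $a_i$'s. The main obstacle will be the bookkeeping across the three cones: a single ansatz in terms of $\lambda$ and $\nu^2/\gamma$ does work, but each Young-type inequality must be verified separately in each cone, and the balance is tight at their boundaries. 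The most delicate case is the $(1,2)$ bound for small $\nu$, where the $\nu^{-2} b_0$ contribution to $|(\mat M_2)_{1,2}|$ is dominant and must be absorbed by the large diagonal $D_1 \sim (\beta\nu^2)^{-1}$, with the relevant inequality exactly matching the boundary of the cone $\lambda = \gamma/\nu^4$.
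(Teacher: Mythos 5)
Your parameter scalings are exactly those used in the paper ($b_0 \sim \lambda\, r^{-1}$, $b_1 \sim \lambda$, $a_0 \sim \lambda\, r^{-2}$, $a_1, a_2 \sim \lambda$, with $r = \sqrt{\gamma}/\nu$ and $\lambda = m(r,\nu)$), and your verification mechanism for $\mat M_2 \succcurlyeq_+ C\lambda \mat I_3$ — absorbing each off-diagonal entry against the geometric mean of the two diagonals — is the same as the paper's congruence with the diagonal matrix $\mat U$ of~\eqref{eq:definition_U}; the scalar facts you need are precisely \cref{eq:first_inequality_m,eq:second_inequality_m,eq:third_inequality_m}, which hold uniformly, so no case analysis over the three cones is actually required. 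One imprecision: the unweighted Young inequality you display only yields diagonal dominance, which fails here (the contribution $-a_1 r$ to $(\mat M_2)_{1,2}$ is not $O(\lambda)$ uniformly in $r$); what you need, and what you in fact state as the target, is the weighted form leading to $|(\mat M_2)_{ij}|^2 \lesssim D_i D_j$, i.e.\ absorption into the large diagonal $D_1 \sim (\beta\nu^2)^{-1}$. That part of your plan is sound.

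The genuine gap is in your argument for $\mat M_1 \succ 0$. Writing $a_0 = \epsilon_0 \lambda r^{-2}$, $a_1 = \epsilon_1\lambda$, $a_2 = \epsilon_2\lambda$, $b_0 = \delta_0\lambda r^{-1}$, $b_1 = \delta_1\lambda$, your own diagonal requirements on $\mat M_2$ force $\tfrac{3}{2}\epsilon_1 + \delta_1\|V''\|_\infty < \delta_0$ (from $(\mat M_2)_{2,2} = b_0 r - \tfrac32 a_1 - b_1\|V''\|_\infty$) and $\tfrac{5}{2}\epsilon_2 < \delta_1$ (from $(\mat M_2)_{3,3} = b_1 - \tfrac52 a_2$), so the prefactor of $a_1$ must be \emph{smaller} than that of $b_0$, and that of $a_2$ smaller than that of $b_1$. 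This contradicts your stated sufficient condition for Sylvester's criterion (``prefactors of the $b_j$'s strictly smaller than those of the adjacent $a_i$'s''), so the prescription, taken literally, cannot be realized. The correct observation is that Sylvester only needs the product conditions $\epsilon_0\epsilon_1 > \delta_0^2$, $\epsilon_1\epsilon_2 > \delta_1^2$ and the $3\times 3$ determinant condition, which are compatible with the $\mat M_2$ constraints only through a graded hierarchy $\epsilon_0 \gg \delta_0 \gg \epsilon_1 \gg \delta_1 \gg \epsilon_2$ (e.g.\ suitable powers of a single small parameter $A$), and checking that consistent exponents exist is exactly the delicate bookkeeping. The paper avoids this check by parametrizing $\mat M_1$ as a sum of two positive $2\times2$ blocks, setting $b_0 = x$, $b_1 = w$, $a_0 = 2x/y$, $a_1 = xy + 2w/z$, $a_2 = zw$ with $x \sim A r^{-1} m$, $y \sim A^{\eta} r$, $w \sim A^{\delta} m$, $z \sim A^{\rho}$ and $0<\eta<1/2$, $1<\delta<1+2\eta$, $0<\rho<(\delta-1)/2$, so that $\mat M_1 \succ 0$ holds by construction; you should either adopt this parametrization or redo your Sylvester step with an explicit exponent hierarchy, since as written that step fails.
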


Observe that $\widetilde{\mat M}_2 \succcurlyeq_+ \mat M_2$, where $\widetilde{\mat M}_2$ is the matrix defined in~\eqref{eq:definition_tilde_M2}, so the lower bound on~$\mat M_2$ in \Cref{proposition:rate_of_convergence} implies~\cref{proposition:rate_of_convergence_hypocoercitivy} as a byproduct.
\revision{\Cref{proposition:rate_of_convergence} also implies that $ N'_h(t) \leq 0$ for any~$t \in [0,1]$, by~\eqref{eq:hypoelliptic_regularization_inequality}.
This leads to the inequality
\[
    N_h(1) \leq N_h(0),
\]
which is precisely the hypoelliptic regularization inequality~\eqref{eq:hypoelliptic_regularization}.
}

\begin{proof}
Inspecting the entries of $\mat M_2$,
we notice that, since $\revision{a_0,a_1,a_2} \geq 0$ always appear with a negative sign,
\begin{equation}
  \mat M_2^+ := \begin{pmatrix}
    \dps \frac{1}{\beta\nu^2}
& - \left(2+\nu^{-2}\right)b_0
& - b_0 - b_1 r\\
    0 & b_0 r & - 4 b_1 \\
    0 & 0 & b_1
  \end{pmatrix}
  \succcurlyeq_+
  \mat M_2.
\end{equation}
Therefore, any bound from below for $\mat M_2$ is necessarily a bound from below also for~$\mat M_2^+$.
By examining the latter matrix, we obtain tentative scalings for the coefficients $b_0$ and $b_1$.
In a second step, we show that these scalings are in fact also suitable for $\mat M_2$.

\paragraph{Step 1: bound from below on $\mat M_2^+$.}%
In order to obtain a bound from below on~$\lambda_{\rm min}^+\left(\mat M_2^+\right)$,
we consider vectors~$\mat v$ in~\eqref{eq:positive_rayleigh} which have two non-zero elements.
A necessary condition for the positivity of~$\lambda_{\rm min}^+\left(\mat M_2^+\right)$ is that the determinants of the following $2 \times 2$ symmetrized submatrices are positive:
\[
\mat M_2^{+,i,j} := \begin{pmatrix} \left[M_2^+\right]_{i,i} & \dps \frac12 \left[M_2^+\right]_{i,j} \\ \dps \frac12\left[M_2^+\right]_{j,i} & \left[M_2^+\right]_{j,j} \end{pmatrix}, \qquad 1 \leq i < j \leq 3.
\]
This leads to the following conditions:
\begin{subequations}
  \label{eq:simple_inequalities}%
  \begin{align}
    \label{eq:positivity_det_first_comatrix}
\frac{b_0 r}{\beta\nu^2} - b_0^2 \left(1 + \frac{1}{2\nu^2}\right)^2 > 0
\quad & \Longrightarrow \quad b_0
< \frac{r}{\beta} \, \min(4\nu^2, \nu^{-2}), \\
\label{eq:positivity_det_second_comatrix}
\frac{b_1}{\beta\nu^2} - \frac14\left(b_0 + b_1 r\right)^2 > 0
\quad & \Longrightarrow \quad
\left\{
\begin{aligned}
          & b_1 < \frac{1}{\beta r^2 \nu^2}, \\
          & b_0^2 < \frac{4b_1}{\beta\nu^2},
\end{aligned}
\right. \\
\label{eq:positivity_det_third_comatrix}
b_0 b_1 r - 4 b_1^2 > 0  \quad & \Longleftrightarrow \quad 0 < b_1 < \frac{b_0 r}{4}.
  \end{align}
\end{subequations}
Equation~\cref{eq:positivity_det_first_comatrix} shows that~$b_0$ is at most of order~$\min(r \nu^2,r \nu^{-2})$, so that, from~\cref{eq:positivity_det_second_comatrix} and with~\cref{eq:positivity_det_third_comatrix}, $b_1$ is at most of order
\[
  m(r,\nu) := \min(r^2 \nu^2, r^{-2} \nu^{-2}, r^2 \nu^{-2}) = \min\left(\gamma,\frac1\gamma,\frac{\gamma}{\nu^4}\right).
\]
Note the following inequalities, which will prove useful later on:
\begin{subequations}
  \begin{align}
    \label{eq:first_inequality_m}
&m(r, \nu) \leq \min (r^2 \nu^2, r^{-2} \nu^{-2}) \leq 1, \\
\label{eq:second_inequality_m}
&m(r, \nu) \leq \min (r^2 \nu^2, r^2 \nu^{-2}) \leq r^2, \\
\label{eq:third_inequality_m}
&m(r, \nu) \leq \min (r^{-2} \nu^{-2}, r^{2} \nu^{-2}) \leq \nu^{-2}.
  \end{align}
\end{subequations}
Condition~\cref{eq:positivity_det_third_comatrix} suggests that $b_0$ is of order~$r^{-1} m(r,\nu)$.
We therefore consider the choice
\begin{equation}
  \label{eq:scaling_b0_b1}
  b_0 = A r^{-1} m(r, \nu), \qquad b_1 = B m(r, \nu),
\end{equation}
with $A,B>0$ yet to be chosen. The matrix $\mat M_2^+$ then reads
\begin{equation*}
  \mat M_2^+ = m(r, \nu)
  \begin{pmatrix}
    \beta^{-1}\nu^{-2} m(r, \nu)^{-1}
    & - \left(2+\nu^{-2}\right) A r^{-1}
    & - A r^{-1} - B r \\
    0 & A
      & - 4 B \\
    0 & 0 & B
  \end{pmatrix}.
\end{equation*}
Now let
\begin{equation}
  \label{eq:definition_U}
  \mat U :=
  \begin{pmatrix}
    \beta^{1/2}\nu & 0 & 0 \\
    0 & A^{-1/2} m^{-1/2} & 0 \\
    0 & 0 & B^{-1/2} m^{-1/2}
  \end{pmatrix}
\end{equation}
and observe that
\begin{align*}
  \mat U
  \mat M_2^+
  \mat U
  =
  \begin{pmatrix}
    1
& - \left(2\nu+\nu^{-1}\right) r^{-1} \sqrt{\beta m} \sqrt{A}
& - \sqrt{\beta}\left[ AB^{-1/2} (m^{1/2} r^{-1} \nu) + B^{1/2} (m^{1/2} r\nu)\right]\\
    0 & 1
      & - 4 A^{-1/2}B^{1/2} \\
    0 & 0 & 1
  \end{pmatrix},
\end{align*}
where, to simplify the notation, we omitted the dependence of $m$ on $r$, $\nu$.
By definition of $m(r, \nu)$, it holds $m^{1/2} r^{-1} \nu \leq 1$, $m^{1/2} r\nu \leq 1$ and $\left(2\nu+\nu^{-1}\right) r^{-1} \sqrt{m} \leq 3$. We moreover choose $B = A^\delta$, which leads to
\begin{align*}
  \mat U
  \mat M_2^+
  \mat U
  \succcurlyeq_+
  \begin{pmatrix}
    1 & - 3\sqrt{\beta A}
& - \sqrt{\beta}\left[ A^{1-\delta/2} + A^{\delta/2} \right] \\
    0 & 1
      & - 4 A^{(\delta-1)/2} \\
    0 & 0 & 1
  \end{pmatrix}.
\end{align*}
This shows that it is possible to choose~$1 < \delta < 2$ and $A \in (0,1]$ sufficiently small so that $\lambda_{\rm min}^+\left(\mat U \mat M_2^+ \mat U\right) \geq 1/2$. To conclude this step, notice that, for any $\vect x \geq 0$,
\begin{align*}
  \vect x^\t \mat M_2^+ \vect x
  &= \vect x^\t \mat U^{-1}(\mat U \mat M_2^+ \mat U) \mat U^{-1} \vect x \geq \frac12 \norm{\mat U^{-1} \vect x}^2 \geq \frac{A^\delta}{2} m(r, \nu) \norm{\vect x}^2.
\end{align*}
which shows that $\mat M_2^+ \succcurlyeq_+ \frac{A^\delta}{2} m(r, \nu) \mat I_3$.

\paragraph{Step 2: Bound from below on $\mat M_2$.}%
\label{par:step_2_bound_on_mat_m_2}
We now consider the matrix $\mat M_1$ in~\cref{eq:hypocoercivity:equivalence_with_sobolev_norm} to be of the form
\[
  \mat M_1 = x \,
  \begin{pmatrix}
    2y^{-1} & - 1 & 0\\
    - 1 & y & 0 \\
    0 & 0 & 0
  \end{pmatrix}
  +
  w \,
  \begin{pmatrix}
    0 & 0 & 0 \\
    0 & 2 z^{-1} & - 1 \\
    0 & - 1 & z
  \end{pmatrix},
\]
where $x,y,w,z$ are new positive parameters. This corresponds to setting $b_0 = x$, $b_1 = w$ and $a_0 = 2x/y$, $a_1=xy+2w/z$, $a_2 = zw$. The interest of this parametrization is to bring the number of parameters down from 5 to 4 and to directly ensure that $\mat M_1$ is positive definite. Motivated by the scalings~\eqref{eq:scaling_b0_b1}, we choose~$x$ to be of order~$r^{-1} \, m(r, \nu)$ and~$w$ to be of order~$m(r, \nu)$. To guess the scalings of $y$ and $z$ with respect to $r$ and $\nu$, we rely on the following observations:
\begin{itemize}
  \item to ensure that the $(2, 2)$ entry of $\mat M_2$, which reads $b_0 r -\frac{3a_1}{2} - b_1\norm{\derivative*{2}[V]{x^2}}_{\infty}$, is positive for all values of $r$ and $\nu$, it is necessary that $a_1$ is at most of order~$b_0 \, r$,
    which suggests that~$y$ scales as~$r$;
  \item the coefficient $a_2$ appears only in matrix entries where $b_1$ is also present
    and, in these entries,
    both coefficients appear with prefactors that scale identically with respect to~$r$ and~$\nu$.
    This suggests that~$z$ is of order~1.
\end{itemize}
Guided by these observations, we consider the following choice:
\[
  x = A \, r^{-1} \, m(r, \nu),
  \qquad
  y = A^\eta \, r,
  \qquad
  w = A^{\delta} \, m(r, \nu),
  \qquad
  z = A^{\rho},
\]
where $\eta, \delta, \rho$ are exponents independent of~$r$ and~$\nu$ yet to be determined, while $A>0$ is a small parameter.
With the same matrix~$\mat U$ as in~\eqref{eq:definition_U} with $B=A^\delta$,
we obtain $\mat U \mat M_{2} \mat U \succcurlyeq_+ \mat I_3 + \mat R$, where $\mat R$ is an upper diagonal matrix with entries
\[
  \begin{aligned}
    \left[\mat R\right]_{1,1} & = - \beta\left(A + A^{1-\eta}\right), \\
    \left[\mat R\right]_{1,2} & = -\sqrt{\beta}\left( 2A^{1/2-\eta}+A^{1/2+\eta} +2A^{\delta-\rho-1/2}+3A^{1/2}\right),\\
    \left[\mat R\right]_{1,3} & = - \sqrt{\beta}\left(A^{1-\delta/2} + A^{\delta/2}\right),\\
    \left[\mat R\right]_{2,2} & = - \frac{3A^\eta}{2} - 3 A^{\delta-\rho-1} - A^{\delta-1}\norm{\derivative*{2}[V]{x^2}}_{\infty}, \\
    \left[\mat R\right]_{2,3} & = -4 A^{(\delta-1)/2} - A^{\eta+(1-\delta)/2} - 2A^{(\delta-1)/2-\rho} - A^{\rho +(\delta-1)/2}\norm{\derivative*{2}[V]{x^2}}_{\infty}, \\
    \left[\mat R\right]_{3,3} & = -\frac{5A^\rho}{2} .
  \end{aligned}
\]
In order for all entries of~$\mat R$ to converge to~0 as~$A \to 0$,
we require $0 < \eta < 1/2$, $1 < \delta < 1 + 2 \eta$ and $0 < \rho < (\delta-1)/2$.
By the same reasoning at the one allowing to conclude Step~1,
we can show that there exist~$A>0$ sufficiently small and a constant~$C > 0$ (which depends on~$A$) for which $\mat M_2 \succcurlyeq_+ C m(r,\nu) \mat I_3$.

Finally, it is easy to see that the smallest eigenvalue~$\alpha(\gamma,\nu)$ of the real symmetric matrix~$\mat M_1$ is positive and scales as $\min(1,r^{-2})m(r,\nu)$. Upon decreasing~$A$ if necessary, we can further ensure that $\mat M_1 \preccurlyeq \mat I_3$.
\end{proof}

\begin{remark}
  \revision{%
    Since it holds $\widetilde{\mat M}_2 \succcurlyeq_+ \mat M_2$,
    one might wonder whether a sharper lower bound in~\cref{proposition:rate_of_convergence_hypocoercitivy} could have been obtained by working directly with $\widetilde{\mat M}_2$.
    While it is possible that a larger constant $C$ on the right-hand side could have been obtained,
    the dependence of the lower bound on $\gamma$ and $\nu$ in~\cref{proposition:rate_of_convergence_hypocoercitivy} is in fact sharp.
    Indeed, better scalings with respect to $\gamma$ and $\nu$ in~\cref{proposition:rate_of_convergence_hypocoercitivy} would lead to better scalings in~\cref{thm:hypocoercivity_h1}
    and therefore also in~\eqref{eq:exponential_growth_bound},
    but we show in~\cref{sec:confirmation_of_the_rate_of_convergence_in_the_quadratic_case} that the scalings of the decay rate in~\eqref{eq:exponential_growth_bound} are optimal.
  }
\end{remark}

\section{Scaling limits of the effective diffusion coefficient}%
\label{sec:multiscale_analysis}

We study in this section various limits for the effective diffusion coefficient~\eqref{eq:introduction:effective_diffusion} of GLE,
namely the short memory limit $\nu \to 0$ in \cref{sub:gle:short_memory_limit}
(for which we expect to recover the behavior of standard Langevin dynamics),
the overdamped limit $\gamma \to \infty$ in \cref{sub:gle:the_overdamped_limit}
(for which we expect to recover the behavior of overdamped Langevin dynamics),
and finally the underdamped limit $\gamma \to 0$ in \cref{sub:gle:the_underdamped_limit}
(for which we expect the effective diffusion coefficient to scale as $\gamma^{-1}$,
as \revision{for} standard Langevin dynamics in the same limit).

All the analysis in this section is done for GLE in a \emph{periodic} potential on the domain $\mathcal{X} = \torus$. The general strategy is the following:
\begin{enumerate}[(i)]
\item we first formally approximate the solution to the Poisson equation~\eqref{eq:introduction:poisson_equation} by some function~$\widehat{\phi}$ obtained by an asymptotic analysis where the solution~$\phi$ is expanded in powers of a small parameter;
\item we next rely on the the resolvent estimates~\eqref{eq:resolvent_bound_GL1} to provide bounds on~$\left\|\phi-\widehat{\phi}\right\|$;
\item we finally deduce the leading order behavior of the diffusion coefficient by replacing~$\phi$ by~$\widehat{\phi}$ in~\eqref{eq:introduction:effective_diffusion}.
\end{enumerate}
Let us also already emphasize that,
while the results presented in \cref{sub:gle:short_memory_limit,sub:gle:the_overdamped_limit} are mathematically rigorous,
the discussion in \cref{sub:gle:the_underdamped_limit} is only formal since the asymptotic analysis is quite cumbersome in the underdamped setting,
where the leading part of the dynamics is a Hamiltonian evolution.

\subsection{The short memory limit}%
\label{sub:gle:short_memory_limit}

In this section,
we show rigorously that, in the limit as $\nu \to 0$ and for
\begin{equation}
  \label{eq:gamma_GLE_nu0}
  \gamma = \lambda^\t \mat A^{-1} \vect \lambda = \int_{0}^{\infty} \gamma(t) \, \d t > 0
\end{equation}
fixed, the effective diffusion coefficient $D_{\gamma,\nu}$ associated with the GLE converges to
that associated with the Langevin dynamics for the same value of~$\gamma$, denoted by $D_{\gamma}$. The latter
diffusion coefficient is defined in terms of the solution~$\phi_{\rm Lang}$ of the Poisson equation $-\mathcal{L}_{\rm Lang}\phi_{\rm Lang}\revision{ = p}$, where
\[
  \mathcal L_{\rm Lang} = p \derivative{1}{q} - \derivative*{1}[V]{q}(q) \derivative{1}{p} - \gamma p \derivative{1}{p} + \frac{\gamma}{\beta} \derivative{2}{p},
\]
the generator of the Langevin dynamics, acts on functions of~$(q,p)$.
More precisely, denoting by~$\mu_{\rm Lang}$ the marginal of the invariant probability measure~$\mu$ in the~$(q,p)$ variables,
\begin{equation}
  \label{eq:Dgamma_Lang}
  D_{\gamma} = \int_{\torus \times \real} \phi_{\rm Lang}(q,p) \, p \, \mu_{\rm Lang}(\d q \, \d p), \qquad
  \mu_{\rm Lang}(\d q \, \d p) = \frac{1}{Z_{\beta}}\e^{- \beta H(q,p)} \, \d q \, \d p,
\end{equation}
where $Z_{\beta} = \int_{\torus \times \real} \e^{- \beta H(q,p)} \, \d q \, \d p$ is the normalization constant.
Let us recall that, by the results of~\cite{MR1924934,Kopec2015}, the solution~$\phi_{\rm Lang}$ is a smooth function which, together with all its derivatives, grows at most polynomially as $|p| \to \infty$. It particular it belongs to~$L^2(\mu_{\rm Lang})$, so that~$D_\gamma$ is well defined by a Cauchy--Schwarz inequality.

We present the analysis for a general quasi-Markovian approximation of the noise of the form~\eqref{eq:markovian_approximation},
with parameters $\vect \lambda$ and $\mat A$ rescaled in such a way that
the correlation time of the noise appears explicitly as a parameter, while keeping~$\gamma$ fixed.
More precisely, we rewrite the generator of GLE as (indicating explicitly the dependence on~$\nu$)
\[
\mathcal{L}_\nu = \mathcal{A}_0 + \frac{1}{\nu} \mathcal{A}_1 + \frac{1}{\nu^2} \mathcal{A}_2,
\]
with
\[
  \mathcal{A}_0 = p \derivative{1}{q} - \derivative*{1}[V]{q}(q) \derivative{1}{p},
  \qquad \mathcal{A}_1 = \vect \lambda^\t \vect z \derivative{1}{p} - p \vect \lambda^\t \nabla_{\vect z},
  \qquad \mathcal{A}_2 = - \vect z^\t \mat A^\t \nabla_{\vect z} + \frac1\beta \mat A : \nabla^2_{\vect z}.
\]
Note that models GL1 and GL2 are already in this rescaled form.
Recall also that, denoting by $(q_t^\nu, p_t^\nu, \vect z_t^\nu)$ the solution to~\eqref{eq:markovian_approximation},
$(q_t^\nu,p_t^\nu)$ converges in the short memory limit~$\nu \to 0$,
in the sense of weak convergence of probability measures on $C([0, T], \torus \times \real)$ for some \revision{fixed} final time~$T>0$,
to the solution of the Langevin equation~\eqref{eq:model:langevin} with friction coefficient~\eqref{eq:gamma_GLE_nu0};
see for example~\cite[Theorem 2.6]{MR2793823} or~\cite[Result 8.4]{pavliotis2011applied}.

We have the following result, which can be obtained constructively by formal asymptotics; see~\cite{thesis_urbain} for details.
\begin{lemma}
  \label{lemma:auxiliary_short_memory}
  The function
  \[
    \widehat \phi(q, p, \vect z) := \phi_{\rm Lang}(q,p)
    + \nu \vect \lambda^\t \mat A^{-1} \vect z \, \derivative{1}[\phi_{\rm Lang}]{p}(q,p) + \nu^2 \, \psi_2(q, p, \vect z) + \nu^3 \, \psi_3(q, p, \vect z),
  \]
  where
  \begin{align*}
    \psi_2(q, p, \vect z) & =  \frac{1}{2} \left( \left( \vect \lambda^\t \mat A^{-1} \vect z\right)^2 \revision{- \frac{1}{\beta} \vect \lambda^\t \mat A^{-1} \mat A^{-\t} \vect \lambda} \right) \, \derivative{2}[\phi_{\rm Lang}]{p^2}(q,p), \\
    \psi_3(q, p, \vect z) & = \left( \frac{1}{6} \left( \vect \lambda^\t \mat A^{-1} \vect z\right)^3 + \frac{\gamma}{\beta} \vect \lambda^\t \mat A^{-2} \vect z \revision{ - \frac{1}{2\beta}(\vect \lambda^\t \mat A^{-1} \vect z) (\vect \lambda^\t \mat A^{-1} \mat A^{-\t} \vect \lambda) }\right) \derivative{3}[\phi_{\rm Lang}]{p^3}(q, p) \\
                           & \ \  - \left(\gamma p + \derivative*{1}[V]{q}(q)\right) \vect \lambda^\t \mat A^{-2} \vect z \derivative{2}[\phi_{\rm Lang}]{p^2}(q, p) + p \vect \lambda^\t \mat A^{-2} \vect z \derivative{2}[\phi_{\rm Lang}]{q, p}(q, p),
  \end{align*}
  \revision{belongs to~$L^2_0(\mu)$ and} satisfies
  \begin{equation}
    \label{eq:result_formal_asymptotics_n}
    - \mathcal L_{\nu} \widehat \phi = p - \nu^2 (\mathcal A_0 \psi_2+ \mathcal A_1 \psi_3) - \nu^3 \mathcal A_0\psi_3.
  \end{equation}
\end{lemma}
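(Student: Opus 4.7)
The plan is to verify the identity by direct computation, expanding $-\mathcal L_\nu \widehat\phi$ in powers of $\nu$ and collecting terms. Writing $\widehat\phi = \phi_0 + \nu\phi_1 + \nu^2\phi_2 + \nu^3\phi_3$ with $\phi_0 = \phi_{\rm Lang}$, $\phi_1 = \vect\lambda^\t\mat A^{-1}\vect z\,\partial_p\phi_{\rm Lang}$, $\phi_2 = \psi_2$ and $\phi_3 = \psi_3$, and using the splitting $\mathcal L_\nu = \mathcal A_0 + \nu^{-1}\mathcal A_1 + \nu^{-2}\mathcal A_2$, the quantity $-\mathcal L_\nu\widehat\phi$ becomes a Laurent polynomial in~$\nu$ with coefficients ranging from $\nu^{-2}$ to $\nu^3$. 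The goal is to show that the coefficients of $\nu^{-2}$, $\nu^{-1}$ and $\nu^1$ vanish, that the coefficient of $\nu^0$ equals $p$, and that the coefficients of $\nu^2$ and $\nu^3$ match the right-hand side of~\eqref{eq:result_formal_asymptotics_n}.

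First I would dispose of the two negative powers. The $\nu^{-2}$ term is $-\mathcal A_2\phi_0$, which vanishes trivially because $\phi_0$ is independent of $\vect z$ and $\mathcal A_2$ acts only in $\vect z$. The $\nu^{-1}$ term is $-\mathcal A_1\phi_0 - \mathcal A_2\phi_1$; since $\mathcal A_1\phi_0 = \vect\lambda^\t\vect z\,\partial_p\phi_{\rm Lang}$ and since for any linear function $f(\vect z) = \vect c^\t\vect z$ one has $\mathcal A_2 f = -\vect z^\t \mat A^\t \vect c$, taking $\vect c = \mat A^{-\t}\vect\lambda$ gives $\mathcal A_2\phi_1 = -\vect\lambda^\t\vect z\,\partial_p\phi_{\rm Lang}$, and the cancellation follows.

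The heart of the argument is the $\nu^0$ identity $-\mathcal A_0\phi_0 - \mathcal A_1\phi_1 - \mathcal A_2\phi_2 = p$. A direct computation gives $\mathcal A_1\phi_1 = (\vect\lambda^\t\vect z)(\vect\lambda^\t\mat A^{-1}\vect z)\,\partial_p^2\phi_{\rm Lang} - p(\vect\lambda^\t\mat A^{-\t}\vect\lambda)\,\partial_p\phi_{\rm Lang}$. The key algebraic input here is the fluctuation--dissipation identity $\vect\lambda^\t\mat A^{-\t}\vect\lambda = \vect\lambda^\t\mat A^{-1}\vect\lambda = \gamma$. For $\mathcal A_2\psi_2$, one uses that $\mathcal A_2[(\vect w^\t\vect z)^2] = -2(\vect w^\t\vect z)(\vect z^\t\mat A^\t\vect w) + 2\beta^{-1}\vect w^\t\mat A\vect w$ with $\vect w = \mat A^{-\t}\vect\lambda$, so that the constant term in $\psi_2$ is precisely what is needed to make $\widehat\phi$ mean-zero in $\vect z$ and to complete the cancellation; after simplification the $\vect z$-dependent terms cancel the mixed term in $\mathcal A_1\phi_1$, leaving $-\gamma p\,\partial_p\phi_{\rm Lang} + \gamma\beta^{-1}\,\partial_p^2\phi_{\rm Lang}$. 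Combining with $\mathcal A_0\phi_0 = p\,\partial_q\phi_{\rm Lang} - V'\,\partial_p\phi_{\rm Lang}$ reconstructs exactly $-\mathcal L_{\rm Lang}\phi_{\rm Lang} = p$.

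The $\nu^1$ condition $-\mathcal A_0\phi_1 - \mathcal A_1\phi_2 - \mathcal A_2\phi_3 = 0$ is the algebraically heaviest step and determines $\psi_3$ uniquely up to a $\vect z$-independent term of zero $\vect z$-mean. Here I would compute each term separately and verify the cancellation using $\vect\lambda^\t\mat A^{-1}\vect\lambda = \gamma$ together with the identity $\mathcal A_2[(\vect w^\t\vect z)^3] = -3(\vect w^\t\vect z)^2(\vect z^\t\mat A^\t\vect w) + 6\beta^{-1}(\vect w^\t\vect z)(\vect w^\t\mat A\vect w)$ and $\mathcal A_2[\vect\lambda^\t\mat A^{-2}\vect z] = -\vect\lambda^\t\mat A^{-1}\vect z$. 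The three pieces of $\psi_3$ are tailored precisely to kill, respectively, the cubic-in-$\vect z$ contribution of $\mathcal A_1\psi_2$, the terms proportional to $\gamma p + V'$ coming from $\mathcal A_0\phi_1$ via the $\partial_p^2\phi_{\rm Lang}$ chain, and the $p\,\partial_q$ contribution involving $\partial_p\partial_q\phi_{\rm Lang}$. Once these two compatibility conditions hold, the residual terms are exactly $-\nu^2(\mathcal A_0\phi_2 + \mathcal A_1\phi_3) - \nu^3\mathcal A_0\phi_3$, which is~\eqref{eq:result_formal_asymptotics_n}. Finally, $\widehat\phi \in L^2_0(\mu)$ follows from the polynomial-in-$\vect z$ structure (integrable against the Gaussian in~$\vect z$), the polynomial growth in~$p$ of $\phi_{\rm Lang}$ and its derivatives (ensuring membership in $L^2(\mu_{\rm Lang})$), and the fact that each $\vect z$-dependent prefactor has zero mean against the Gaussian part of~$\mu$ by construction, while $\phi_{\rm Lang}$ itself has zero mean against $\mu_{\rm Lang}$. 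The main obstacle is bookkeeping in the $\nu^1$ identity: keeping track of the many cross-terms and correctly matching the constant-in-$\vect z$ corrections that ensure mean-zero fluctuations, for which the guiding principle is that each $\psi_i$ is obtained by inverting~$\mathcal A_2$ on a mean-zero polynomial in $\vect z$.
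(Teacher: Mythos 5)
Your proposal is correct and follows essentially the same route as the paper: a direct verification obtained by computing the action of $\mathcal A_2$ on the building blocks (linear in $\vect z$, $(\vect\lambda^\t\mat A^{-1}\vect z)^2$, $(\vect\lambda^\t\mat A^{-1}\vect z)^3$, $\vect\lambda^\t\mat A^{-2}\vect z$) and gathering terms with the same powers of $\nu$, with the orders $\nu^{-2},\nu^{-1},\nu^{0},\nu^{1}$ yielding the stated cancellations and the remainder giving~\eqref{eq:result_formal_asymptotics_n}. Only cosmetic quibbles: the identity $\vect\lambda^\t\mat A^{-\t}\vect\lambda=\vect\lambda^\t\mat A^{-1}\vect\lambda=\gamma$ is just transposition of a scalar (not fluctuation--dissipation), and the constant in $\psi_2$ plays no role in the order-$\nu^0$ cancellation (it is annihilated by $\mathcal A_2$); it only ensures the mean-zero property and is compensated at order $\nu$ by the corresponding term in $\psi_3$, as your bookkeeping in fact reflects.
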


\begin{proof}
  We first compute $\mathcal{A}_2\left(\vect \lambda^\t \mat A^{-1} \vect z\right) = - \vect \lambda^\t \vect z$ and
  $\mathcal{A}_2\left(\vect \lambda^\t \mat A^{-2} \vect z\right) = - \vect \lambda^\t \mat A^{-1} \vect z$,
  as well as, for $\alpha=2,3$,
  \[
    \mathcal{A}_2\left[\left(\vect \lambda^\t \mat A^{-1} \vect z\right)^\alpha\right] = -\alpha \left(\vect \lambda^\t \mat A^{-1} \vect z\right)^{\alpha-1}\vect \lambda^\t \vect z + \frac{\alpha(\alpha-1)\gamma}{\beta} \left(\vect \lambda^\t \mat A^{-1} \vect z\right)^{\alpha-2}.
  \]
  The result can then be verified directly by calculating $- \mathcal L_{\nu} \widehat \phi$ and gathering terms with the same powers of~$\nu$.
\end{proof}

We can then provide the convergence result on the effective diffusion coefficient.

\begin{proposition}
  Fix $\gamma>0$ and assume that there exists $C>0$ such that
  \begin{equation}
    \label{eq:resolvent_estimate_nu0}
    \forall \nu \in (0,1], \qquad \left\|\mathcal{L}_\nu^{-1}\right\|_{\mathcal{B}(L^2_0(\mu))} \leq C.
  \end{equation}
  Then there exists~$R>0$ such that, for any $0 < \nu \leq 1$,
  \[
    \abs{ D_{\gamma,\nu} - D_{\gamma} } \leq R \nu^2.
  \]
\end{proposition}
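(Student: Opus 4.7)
The strategy is exactly the one announced at the beginning of \cref{sec:multiscale_analysis}: use the approximation $\widehat{\phi}$ constructed in \cref{lemma:auxiliary_short_memory} to replace the true solution of the Poisson equation up to $O(\nu^2)$, turn the resulting residual into an $L^2(\mu)$ bound via the resolvent estimate~\eqref{eq:resolvent_estimate_nu0}, and then show that $\widehat{\phi}$ itself gives the right effective diffusion coefficient at leading order. More precisely, since $\phi$ satisfies $-\mathcal{L}_\nu \phi = p$ and $\widehat{\phi}$ satisfies~\eqref{eq:result_formal_asymptotics_n}, the difference $\phi - \widehat{\phi}$ lies in $L^2_0(\mu)$ and solves
\[
  -\mathcal{L}_\nu(\phi - \widehat{\phi}) = \nu^2 (\mathcal{A}_0 \psi_2 + \mathcal{A}_1 \psi_3) + \nu^3 \mathcal{A}_0 \psi_3.
\]
Applying~\eqref{eq:resolvent_estimate_nu0} and the triangle inequality then yields, for $0 < \nu \leq 1$,
\[
  \norm{\phi - \widehat{\phi}}_{L^2(\mu)}
  \leq C \nu^2 \Bigl( \norm{\mathcal{A}_0 \psi_2}_{L^2(\mu)} + \norm{\mathcal{A}_1 \psi_3}_{L^2(\mu)} + \norm{\mathcal{A}_0 \psi_3}_{L^2(\mu)} \Bigr).
\]

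Second, I would show that $\widehat{\phi}$ already captures the diffusion coefficient to leading order, namely
\[
  \int_{\torus \times \real \times \real^n} \bigl(\widehat{\phi} - \phi_{\rm Lang}\bigr) \, p \, \d\mu = 0.
\]
This reduces to a Gaussian integration in $\vect z$: the invariant measure factorizes as $\mu = \mu_{\rm Lang} \otimes \mathcal{N}(0, \beta^{-1} \mat I_n)$, and each of the three correction terms in $\widehat{\phi} - \phi_{\rm Lang}$ has vanishing $\vect z$-average. For the $O(\nu)$ term this is immediate (linear in $\vect z$). For the $O(\nu^2)$ term it is by design: the constant $-\beta^{-1} \vect\lambda^\t \mat A^{-1} \mat A^{-\t} \vect\lambda$ inside $\psi_2$ is precisely the Gaussian expectation of $(\vect\lambda^\t \mat A^{-1} \vect z)^2$. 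For the $O(\nu^3)$ term, each summand in $\psi_3$ is a polynomial of odd total degree in the components of $\vect z$, hence has vanishing Gaussian average. Since $\phi_{\rm Lang}$ is independent of $\vect z$, Fubini gives $D_\gamma = \int \phi_{\rm Lang} \, p \, \d\mu$, so combining the two steps with Cauchy--Schwarz,
\[
  \abs{D_{\gamma,\nu} - D_\gamma} = \abs*{\int (\phi - \widehat{\phi}) \, p \, \d\mu} \leq \norm{p}_{L^2(\mu)} \norm{\phi - \widehat{\phi}}_{L^2(\mu)} \leq R \, \nu^2
\]
for some constant $R > 0$.

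The main technical point is the bound on $\norm{\mathcal{A}_0 \psi_2}_{L^2(\mu)}$, $\norm{\mathcal{A}_1 \psi_3}_{L^2(\mu)}$ and $\norm{\mathcal{A}_0 \psi_3}_{L^2(\mu)}$ uniformly in $\nu$. The operators $\mathcal{A}_0$ and $\mathcal{A}_1$ are independent of $\nu$, and $\psi_2, \psi_3$ are polynomials in $\vect z$ (of degrees $2$ and $3$) with coefficients that are derivatives of $\phi_{\rm Lang}$ up to order $3$ multiplied by smooth periodic functions of $q$ and polynomials in $p$. Applying $\mathcal{A}_0$ or $\mathcal{A}_1$ raises the $(q,p)$-derivative order by one and the $\vect z$-degree by at most one. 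By the smoothness and polynomial growth estimates on $\phi_{\rm Lang}$ recalled in the paragraph preceding the proposition (following~\cite{MR1924934,Kopec2015}), all the required derivatives of $\phi_{\rm Lang}$ have polynomial growth in $p$, so they belong to $L^2(\mu_{\rm Lang})$ together with all their $p$-multiples; the Gaussian factor of $\mu$ in $\vect z$ then takes care of the polynomial factors in $\vect z$. This yields finite, $\nu$-independent constants for the three norms above, which is what is needed to close the argument.

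The expected obstacle is purely bookkeeping: verifying that the coefficient growth of $\psi_2, \psi_3$ and their images under $\mathcal{A}_0, \mathcal{A}_1$ is indeed controlled uniformly in $\nu$ (the parameters $\vect\lambda$, $\mat A$ are $\nu$-independent in the rescaled formulation) and confirming the cancellation of Gaussian averages term-by-term in $\psi_3$. No new functional-analytic ingredient is required beyond the resolvent estimate~\eqref{eq:resolvent_estimate_nu0}, which itself is provided by the corollary to \cref{thm:hypocoercivity_h1} in the specific case of model GL1 (and assumed here for more general quasi-Markovian approximations).
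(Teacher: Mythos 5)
Your proposal is correct and follows essentially the same route as the paper: expand via \cref{lemma:auxiliary_short_memory}, bound $\norm{\phi_\nu - \widehat\phi}$ by $C K \nu^2$ using the $\nu$-independence of $\mathcal A_0\psi_2 + \mathcal A_1\psi_3$ and $\mathcal A_0\psi_3$ together with the resolvent estimate~\eqref{eq:resolvent_estimate_nu0}, then compare $\int \widehat\phi\, p \,\d\mu$ with $D_\gamma$ by Gaussian integration in $\vect z$. Your additional observation that the $\psi_2$ and $\psi_3$ contributions vanish exactly (centering of the quadratic term, oddness in $\vect z$) is correct but not needed — the paper only uses the vanishing of the $\mathcal O(\nu)$ term and absorbs the higher orders into the $R\nu^2$ error.
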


Note that the condition~\eqref{eq:resolvent_estimate_nu0} follows for GL1 from the resolvent estimate~\eqref{eq:resolvent_bound_GL1},
and for GL2 from the resolvent estimate~\eqref{eq:resolvent_bound_GL2} in Appendix~\ref{sec:longtime_behavior_for_model_gl2}.
It would be possible to weaken this condition by allowing some power law growth with respect to~$\nu$ on the right-hand side of~\eqref{eq:resolvent_estimate_nu0} upon further continuing the asymptotic expansion of \cref{lemma:auxiliary_short_memory} in order to have higher order terms in~\eqref{eq:result_formal_asymptotics_n}.

\begin{proof}
  By the result from~\cite{Kopec2015} mentioned previously,
  $\phi_{\rm Lang}$ and all its derivatives are smooth and grow at most polynomially as $|p| \to \infty$.
  Given the definitions of $\mathcal A_1$ and $\mathcal A_2$,
  this implies that the coefficients of~$\nu^2$ and~$\nu^3$ on the right-hand side of~\cref{eq:result_formal_asymptotics_n}
  are smooth functions in~$\lp{2}{\mu}$. Since these functions are independent of~$\nu$, there exists~$K>0$ such that
  \[
    \norm{\mathcal L_{\nu} \left(\phi_{\nu} - \widehat \phi\right)} \leq K \nu^2,
  \]
  where~$\phi_{\nu}$ denotes the solution to the Poisson equation~$- \mathcal L_{\nu} \phi_{\nu} = p$.
  We therefore obtain \revision{with~\eqref{eq:resolvent_estimate_nu0}} that $\norm{\phi_{\nu} - \widehat \phi} \leq CK \nu^2$ for all~$\nu \leq 1$. Since the function $(q,p,\vect z) \mapsto \vect \lambda^\t \mat A^{-1} \vect z \, \derivative{1}[\phi_{\rm Lang}]{p}(q,p)$ has average~0 with respect to~$\mu$, the desired estimate follows by substituting~$\phi_\nu$ by~$\widehat \phi$ in~\eqref{eq:introduction:effective_diffusion}, integrating over~$\vect z$, and comparing with~\eqref{eq:Dgamma_Lang}.
\end{proof}

\begin{remark}
  For the special case of the model GL1, if we had wanted to show only that $|D_{\gamma,\nu} - D_{\gamma}| \to 0$ in the limit as $\nu \to 0$ without making precise a convergence rate,
  we could have proceeded more directly from~\cite[Theorem 2.6]{MR2793823},
  which can be leveraged by using a reasoning similar to that in the proof of~\cite[Proposition 3.3]{MR2394704}.
  With the same notation as above,
  \cite[Theorem~2.6]{MR2793823} implies that
  the random variable $(q_t^{\nu}, p_t^{\nu})$ converges weakly, in the limit as $\nu \to 0$,
  to the solution of~\eqref{eq:model:langevin} evaluated at $t$,
  for any $t \geq 0$ and any initial condition with finite moments of all orders.
  Consequently, for any bounded and continuous function $f(q,p)$, it holds as $\nu \to 0$
  \[
    \forall (q, p, z) \in \torus \times \real \times \real,
    \qquad \e^{t \mathcal L_{\nu}} f(q, p, z) \to  \e^{t \mathcal L_{\rm Lang}} f(q, p),
  \]
  and so also $\e^{t \mathcal L_{\nu}} f \to  \e^{t \mathcal L_{\rm Lang}} f$ in $\lp{2}{\mu}$ by dominated convergence.
  By density of the bounded and continuous functions in $\lp{2}{\mu_{\rm Lang}}$ and the continuity of the propagators on $\lp{2}{\mu_{\rm Lang}}$,
  this limit holds in fact for any $f \in \lp{2}{\mu_{\rm Lang}}$.
  Therefore, for any $f \in  L^2_0(\mu_{\rm Lang})$, it holds, as $\nu \to 0$,
  \begin{align*}
    \left\| \mathcal L_{\nu}^{-1}f - \mathcal L_{\rm Lang}^{-1}f \right\|
    &= \norm{\int_{0}^{\infty} \e^{t\mathcal L_{\nu}} f \, \d t - \int_{0}^{\infty} \e^{t\mathcal L_{\rm Lang}} f \, \d t} \\
    &\leq \int_{0}^{\infty} \norm{\e^{t\mathcal L_{\nu}} f - \e^{t\mathcal L_{\rm Lang}} f} \d t \xrightarrow[\nu \to 0]{} 0.
  \end{align*}
  The last limit is justified by dominated convergence because,
  by~\eqref{eq:exponential_growth_bound} and the corresponding result for the Langevin equation,
  we have the bound $\norm{\e^{t\mathcal L_{\nu}} f - \e^{t\mathcal L_{\rm Lang}} f}
  \leq \norm{\e^{t\mathcal L_{\nu}} f} + \norm {\e^{t\mathcal L_{\rm Lang}} f} \leq C \e^{-\lambda t}$,
  for some positive constants $C$ and $\lambda$ independent of $\nu$.
  This concludes the proof since
  \[
    D_{\gamma,\nu} = \ip{-\mathcal L_{\nu}^{-1}p}{p} \xrightarrow[\nu \to 0]{} \ip{-\mathcal L_{\rm Lang}^{-1}p}{p} = D_{\gamma}.
  \]
\end{remark}

\subsection{The overdamped limit}
\label{sub:gle:the_overdamped_limit}

We prove in this section that the effective diffusion coefficient~$D_{\gamma,\nu}$ associated with the model GL1 converges,
as the effective friction~\eqref{eq:gamma_GLE_nu0} goes to infinity,
to the effective diffusion coefficient~$D_{\rm ovd}$ associated with the overdamped Langevin dynamics~\eqref{eq:model:overdamped}.
It is certainly possible to extend our analysis to more general models of noise than GL1,
but the algebra involved in the asymptotic analysis of \cref{lem:asymptotic_expansion_ovd} below becomes more cumbersome,
so we refrain from doing so.

Denoting by~$\mu_{\rm ovd}(\d q)$ the marginal of~$\mu(\d q \, \d p \, \d z)$
(which has a density proportional to~$\e^{-\beta V(q)}\,\d q$),
the effective diffusion coefficient for overdamped dynamics is defined from the unique solution~$\phi_{\rm ovd} \in L^2(\mu_{\rm ovd})$ to the Poisson equation
\begin{equation}
  \label{eq:poisson_equation_overdamped}%
  - \mathcal{L}_{\rm ovd} \phi_{\rm ovd} = - V', \qquad \int_\torus \phi_{\rm ovd} \, \d\mu_{\rm ovd} = 0,
\end{equation}
where $\mathcal{L}_{\rm ovd}$ acts on functions of~$q$ as
\[
  \mathcal{L}_{\rm ovd} = - \derivative*{1}[V]{q}(q)  \derivative*{1}{q} + \beta^{-1} \derivative*{2}{q^2}.
\]
By elliptic regularity, the solution~$\phi_{\rm ovd}$ belongs to~$C^\infty(\torus)$.
It can then be shown,
using the tools from~\cite[Chapter~3]{BLP} (see for instance~\cite[Section~1.2]{FHS14} and~\cite[Chapter 13]{pavliotis2008multiscale}) that
\begin{equation}
  \label{eq:D_ovd}
  D_{\rm ovd} = \beta^{-1} + \int_\torus \phi_{\rm ovd} V' \, \d\mu_{\rm ovd}.
\end{equation}
The overdamped limit of the effective diffusion coefficient~\eqref{eq:Dgamma_Lang} for the Langevin dynamics was already studied in~\cite{MR2394704}
(see also~\cite[Section~3.1.1]{LMS16}),
where it is shown that $D_\gamma = D_{\rm ovd}/\gamma + \bigo{\gamma^{-2}}$.
We provide the counterpart of this estimate for GL1 in the following result.

\begin{proposition}
  \label{prop:ovd_diff}
  Consider the model GL1 and recall the definition~\eqref{eq:gamma_GLE_nu0} of the effective friction~$\gamma$. There exists $R>0$ such that
  \[
    \forall \gamma \geq 1, \qquad \left| D_{\gamma, \nu} - \frac{1}{\gamma} D_{\rm ovd} \right| \leq \frac{R}{\gamma^{3/2}}.
  \]
\end{proposition}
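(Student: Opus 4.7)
Following the three-step strategy outlined at the start of \cref{sec:multiscale_analysis}, the plan is to build an explicit approximate solution $\widehat{\phi}$ of the Poisson equation, control $\|\phi - \widehat{\phi}\|$ using the resolvent estimate~\eqref{eq:resolvent_bound_GL1}, and read off $D_{\gamma,\nu}$ from the leading term. The natural small parameter here is $\varepsilon = \gamma^{-1/2}$, with respect to which the GL1 generator splits as $\mathcal{L} = \mathcal{B}_0 + \varepsilon^{-1} \mathcal{B}_{-1}$, where $\mathcal{B}_{-1} = \nu^{-1}(z\partial_p - p\partial_z)$ is the fast rotation in the $(p,z)$-plane and $\mathcal{B}_0 = p\partial_q - V'(q)\partial_p - \nu^{-2} z\partial_z + (\beta\nu^2)^{-1}\partial_z^2$ collects the $\gamma$-independent terms. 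Note that $\mathcal{B}_{-1}$ is antisymmetric on $L^2(\mu)$, its kernel consists of functions of $(q, r^2)$ with $r^2 = p^2 + z^2$, and the $L^2(\mu)$-orthogonal projection $\Pi$ onto this kernel amounts to averaging over the angular variable in the $(p,z)$-plane.

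The expansion $\widehat{\phi} = \sum_{k=0}^{5} \varepsilon^k \phi_k$ is determined by the cascade $\mathcal{B}_{-1}\phi_0 = 0$ and $\mathcal{B}_{-1}\phi_{k+1} + \mathcal{B}_0 \phi_k = -p\,\mathbbm{1}_{k=0}$ for $k\geq 0$, each step being solvable modulo the Fredholm condition $\Pi(\mathcal{B}_0\phi_k + p\,\mathbbm{1}_{k=0}) = 0$. The first solvability constraint forces $\phi_0$ to depend only on $q$, and the constraint arising at order $\varepsilon^2$ coincides \emph{exactly} with the overdamped Poisson equation $-\mathcal{L}_{\rm ovd}\phi_0 = -V'$, hence $\phi_0 = \phi_{\rm ovd}$. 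The subsequent terms are explicit polynomials of bounded degree in $(p,z)$ with coefficients depending on $q$, plus contributions from $\ker \mathcal{B}_{-1}$ governed by a Laguerre-type ODE in $r^2$; in particular $\phi_1 = \nu z (1+\phi_{\rm ovd}'(q)) + \psi_1(q)$ and $\phi_2 = (1+\phi_{\rm ovd}'(q))\,p + \ldots$, where the omitted pieces are either odd in $z$ alone or symmetric under $p \leftrightarrow -p$ and thus pair with $p$ to give zero.

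A direct computation using the Gaussianity of the $(p,z)$-marginal then gives $\int \phi_0 p\, d\mu = \int \phi_1 p\, d\mu = 0$ and, crucially,
\[
\int \phi_2\, p\, d\mu = \frac{1}{\beta} \int (1 + \phi_{\rm ovd}'(q))\, d\mu_{\rm ovd}(q) = D_{\rm ovd},
\]
using the classical formula for the effective diffusion of overdamped Langevin dynamics on the torus. Since the higher-order integrals $\int \phi_k p\, d\mu$ for $k \in \{3, 4, 5\}$ are explicit finite constants, uniformly bounded in $\gamma$ at fixed $\nu$, we have $\int \widehat{\phi}\, p\, d\mu = D_{\rm ovd}/\gamma + O(\gamma^{-3/2})$. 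By construction, the residual $-\mathcal{L}\widehat{\phi} - p = -\varepsilon^5 \mathcal{B}_0 \phi_5$ has $L^2(\mu)$-norm of order $\varepsilon^5$, so the resolvent bound~\eqref{eq:resolvent_bound_GL1}, which gives $\|\mathcal{L}^{-1}\|_{\mathcal{B}(L^2_0(\mu))} \leq C\gamma$ for $\gamma \geq 1$ at fixed $\nu$, yields $\|\phi - \widehat{\phi}\|_{L^2(\mu)} \leq K \gamma^{-3/2}$. The claimed estimate then follows from Cauchy--Schwarz and the triangle inequality.

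The main obstacle is the algebraic bookkeeping of the five-term expansion and the verification that all solvability conditions close consistently. Because $\ker \mathcal{B}_{-1}$ is infinite-dimensional, each solvability condition produces a linked sub-problem for a function of $(q,r^2)$ whose projection onto the $q$-direction (against the invariant measure of the Laguerre-type operator in $r^2$) closes into an ODE on the torus, while its residual $r^2$-freedom must be fixed by the next-order solvability condition. Pushing this recursion through fifth order, checking at each step that $\phi_k$ belongs to $L^2(\mu)$ with $\gamma$-independent norm, and confirming the exact identification of the leading solvability equation with the overdamped Poisson equation, is tedious but entirely routine; a detailed analogous treatment may be found in~\cite{thesis_urbain}.
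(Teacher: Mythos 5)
Your proposal follows essentially the same route as the paper's proof: the same splitting $\mathcal L_\gamma=\sqrt{\gamma}\,\mathcal A_0+\mathcal A_1$, an expansion of $\phi_\gamma$ in powers of $\gamma^{-1/2}$ starting at $\phi_{\rm ovd}$ and pushed to order $\gamma^{-5/2}$ so that the residual is $O(\gamma^{-5/2})$, the resolvent bound~\eqref{eq:resolvent_bound_GL1} (of order $\gamma$ for $\gamma\geq 1$) to obtain $\norm{\phi_\gamma-\widehat\phi}=O(\gamma^{-3/2})$, and the identification of the leading contribution $\int\widehat\phi\, p\,\d\mu = D_{\rm ovd}/\gamma + O(\gamma^{-3/2})$ via the reformulation~\eqref{eq:reformulation_D_ovd} together with parity in $(p,z)$. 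The only difference is presentational: the paper records the expansion terms explicitly in \cref{lem:asymptotic_expansion_ovd} and verifies the residual identity directly, whereas you describe the Fredholm-type construction (projection onto $\ker\mathcal A_0$ and the Laguerre-type sub-problems) and defer the explicit fifth-order bookkeeping, much as the paper itself does by citing~\cite{thesis_urbain}.
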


In fact, using a more involved asymptotic analysis (see \cref{rmk:order_2_ovd}), it would be possible to show that the difference $D_{\gamma,\nu}-D_{\rm ovd}/\gamma$ is of order~$\gamma^{-2}$. In order to perform the asymptotic analysis, we rewrite the generator of the GL1 model as (indicating explicitly the dependence on the friction~$\gamma>0$)
\[
  \mathcal L_{\gamma} = \sqrt{\gamma} \mathcal{A}_0 + \mathcal{A}_1,
  \qquad
  \mathcal{A}_0 = \frac1\nu \left( z \derivative{1}{p} - p \derivative{1}{z}\right),
  \qquad
  \mathcal{A}_1 = p \derivative{1}{q} - \derivative*{1}[V]{q}(q) \derivative{1}{p}
  + \frac{1}{\nu^2}\left( - z \derivative{1}{z} +  \frac1\beta \derivative{2}{z^2}\right).
\]
The resolvent estimates~\eqref{eq:resolvent_bound_GL1} suggest that the solution to the Poisson equation $- \mathcal L_{\gamma} \phi_{\gamma} = p$ is of order~$\gamma$ as~$\gamma \to \infty$. However, by analogy with asymptotic calculations for the Langevin equation~\cite{MR2427108} where the leading order term in the series expansion in inverse powers of~$\gamma$ of the solution to the Poisson equation~$-\mathcal{L}_{\rm Lang}\phi_{\rm Lang} = p$ scales as $\mathcal O(1)$ (which can also be seen from the expressions of~$\mathcal{L}_{\rm Lang}$ provided in~\cite[Theorem~2.5]{LMS16} and~\cite{BFLS20}), we formally expand~$\phi_\gamma$ as
\[
  \phi_{\gamma} = \overline{\phi}_0 + \gamma^{-1/2} \overline{\phi}_1 + \gamma^{-1} \, \overline{\phi}_2 + \dotsb.
\]
The correctness of this assumption can be checked \emph{a posteriori}, using formal asymptotics (the details of which are presented in~\cite{thesis_urbain}) to calculate the functions $\overline{\phi}_i$ in a systematic way. This is made precise in the following technical result, where $L^2_0(\mu_{\rm ovd})$ is the subset of functions of~$L^2(\mu_{\rm ovd})$ with mean~0 with respect to~$\mu_{\rm ovd}(\d q)$.
For simplicity of notation we assume that $\nu = 1$.

\begin{lemma}
\label{lem:asymptotic_expansion_ovd}
Denote by $\phi_{\gamma}$ the solution to the Poisson equation~\eqref{eq:introduction:poisson_equation} for GL1, and by
$\psi$ the unique solution in~$L^2_0(\mu_{\rm ovd})$ of the Poisson equation
\begin{equation}
  \label{eq:auxiliary_poisson_overdamped}%
  \mathcal{L}_{\rm ovd} \psi =
  \left|\derivative*{1}[V]{q}\right|^{2} \derivative*{2}[\phi_{\rm ovd}]{q^2}
  - \frac{3}{2 \beta} \, \derivative*{1}[V]{q} \derivative*{3}[\phi_{\rm ovd}]{q^3}
  - \frac{1}{\beta} \, \derivative*{2}[V]{q^2} \derivative*{2}[\phi_{\rm ovd}]{q^2}
  + \frac{1}{2 \beta^2} \derivative*{4}[\phi_{\rm ovd}]{q^4}.
\end{equation}
Define the function
\begin{align*}
  \widehat \phi =
  \phi_{\rm ovd}
  & + \gamma^{- \frac{1}{2}} \, z(\derivative*{1}[\phi_{\rm ovd}]{q} + 1)
  + \gamma^{-1} \left( \frac{z^2}{2} \, \derivative*{2}[\phi_{\rm ovd}]{q^2} + p \, ( \derivative*{1}[\phi_{\rm ovd}]{q} + 1) + \psi \right) + \gamma^{- \frac{3}{2}} \Phi_{3/2} + \gamma^{-2} \Phi_2 + \gamma^{ - \frac{5}{2} } \Phi_{5/2},
\end{align*}
with
\[
  \begin{aligned}
    \Phi_{3/2} & = p z \, \derivative*{2}[\phi_{\rm ovd}]{q^2} + \frac{z^3}{6} \derivative*{3}[\phi_{\rm ovd}]{q^3} + z \derivative*{1}[\psi]{q}, \\
    \Phi_2 & = \left( p \derivative*{1}[V]{q} + \frac{p^2}{2}
      + \frac{p^{2}}{2} \derivative*{2}[V]{q^2}+ \frac{z^{2}}{2} \derivative*{2}[V]{q^2}
    \right) \derivative*{2}[\phi_{\rm ovd}]{q^2}
    + \left(\frac{p z^{2}}{2} - \frac{p}{\beta}
    + \frac{p^{2}}{4} \derivative*{1}[V]{q}+ \frac{z^{2}}{4} \derivative*{1}[V]{q} \right) \derivative*{3}[\phi_{\rm ovd}]{q^3} \\
           & \ \ + \left( \frac{z^{4}}{24} - \frac{p^{2}}{4 \beta} - \frac{z^{2}}{4 \beta}  \right)  \, \derivative*{4}[\phi_{\rm ovd}]{q^4} + p \derivative*{1}[\psi]{q} + \frac{z^{2}}{2}  \derivative*{2}[\psi]{q^2}, \\
    \end{aligned}
  \]
  and
  \[
    \begin{aligned}
      \Phi_{5/2} & =
      \left( p z \, \derivative*{2}[\psi]{q^2} + \frac{z^{3}}{6} \derivative*{3}[\psi]{q^3} \right)
      + \left(
        \frac{p^{2} z}{2} \, \derivative*{3}[V]{q^3}
        + p z \derivative*{2}[V]{q^2}
        + \frac{z^{3}}{2} \, \derivative*{3}[V]{q^3}
        - z \, \derivative*{1}[V]{q} \derivative*{2}[V]{q^2}
        - z \, \derivative*{1}[V]{q}
      \right)
      \derivative*{2}[\phi_{\rm ovd}]{q^2} \\
      & \ \ +
    \left(
      \frac{3 p^{2} z}{4} \, \derivative*{2}[V]{q^2}
      + \frac{p^{2} z}{2}
      + p z \, \derivative*{1}[V]{q}
      + \frac{3 z^{3}}{4} \derivative*{2}[V]{q^2}
    - \frac{z}{2} \, \left| \derivative*{1}[V]{q} \right|^{2} + \frac{z}{\beta}
  \right)
  \derivative*{3}[\phi_{\rm ovd}]{q^3} \\
  & \ \  +
  \left(
    \frac{p^{2} z}{4} \, \derivative*{1}[V]{q}
    + \frac{p z^{3}}{6} - \frac{p z}{\beta}
    + \frac{z^{3}}{4} \,  \derivative*{1}[V]{q}
    + \frac{z \derivative*{1}[V]{q}}{2\beta}
  \right)
  \derivative*{4}[\phi_{\rm ovd}]{q^4}
  +
  \left(
    - \frac{p^{2} z}{4 \beta} + \frac{z^{5}}{120} - \frac{z^{3}}{4 \beta}
  \right)
  \derivative*{5}[\phi_{\rm ovd}]{q^5}.
  \end{aligned}
  \]
  \revision{Let also $\widehat \phi_0 := \widehat \phi - \int_{\torus \times \real \times \real} \widehat \phi \, \d \mu$, so that $\widehat \phi_0 \in L^2_0(\mu)$.}
  Then $\mathcal{R} = \gamma^{\frac{5}{2}} \mathcal L_{\gamma} \left(\revision{\widehat \phi_0} - \phi_{\gamma}\right)$ is a well defined function of~$L^2_0(\mu)$ which is independent of~$\gamma>0$.
\end{lemma}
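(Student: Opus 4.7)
The plan is to read the function $\widehat \phi$ in the statement as the truncation of a formal asymptotic expansion
\[
  \phi_\gamma \sim \sum_{k \ge 0} \gamma^{-k/2} \overline \phi_k
\]
inserted into $-\mathcal L_\gamma \phi_\gamma = p$. Gathering powers of $\sqrt \gamma$ produces the hierarchy
\[
  \mathcal A_0 \overline \phi_0 = 0, \qquad
  \mathcal A_0 \overline \phi_1 + \mathcal A_1 \overline \phi_0 = -p, \qquad
  \mathcal A_0 \overline \phi_{k+1} + \mathcal A_1 \overline \phi_k = 0 \quad (k \ge 1).
\]
The structural observation driving everything is that $\mathcal A_0 = z\,\partial_p - p\,\partial_z$ (setting $\nu=1$) is skew-adjoint on $L^2(\mu)$ and generates rotations in the $(p,z)$-plane. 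Its kernel consists of functions that are rotationally invariant in $(p,z)$, and by skew-adjointness its range is the orthogonal complement of this kernel. The equation $\mathcal A_0 g = f$ is therefore solvable iff the angular average of $f$ at fixed $q$ and fixed $r = \sqrt{p^2+z^2}$ vanishes, in which case a solution is obtained by antidifferentiation in the angular variable.

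The concrete plan is to verify, level by level, that the functions $\overline \phi_0=\phi_{\rm ovd}$, $\overline \phi_1=z(\partial_q\phi_{\rm ovd}+1)$, $\overline \phi_2=\tfrac{z^2}{2}\partial_q^2\phi_{\rm ovd}+p(\partial_q\phi_{\rm ovd}+1)+\psi$, $\overline \phi_3=\Phi_{3/2}$, $\overline \phi_4=\Phi_2$, $\overline \phi_5=\Phi_{5/2}$ solve the hierarchy up to $k=5$. The level-$0$ equation is immediate since $\phi_{\rm ovd}$ depends only on $q$. At the $\gamma^0$ level, the source is $-p(\partial_q\phi_{\rm ovd}+1)$, whose angular average vanishes; a direct computation confirms that $\mathcal A_0\overline\phi_1 = -p(\partial_q\phi_{\rm ovd}+1)$. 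Crucially, at the $\gamma^{-1}$ level one computes that the angular average of $\mathcal A_1 \overline \phi_2$ equals $\beta^{-1}\partial_q^2\phi_{\rm ovd}-\partial_q V\,(\partial_q\phi_{\rm ovd}+1)$, which vanishes precisely because $\phi_{\rm ovd}$ solves \eqref{eq:poisson_equation_overdamped}; the analogous computation at level $\gamma^{-2}$ produces the right-hand side of \eqref{eq:auxiliary_poisson_overdamped} and thereby identifies $\psi$. The correctors $\Phi_{3/2}$, $\Phi_2$, $\Phi_{5/2}$ are then obtained by inverting $\mathcal A_0$ on the complement of its kernel, which for polynomial sources reduces to explicit antidifferentiation in the angular variable.

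Once all six hierarchy equations hold, only the term $\gamma^{-5/2}\mathcal A_1 \Phi_{5/2}$ remains in $\mathcal L_\gamma\widehat \phi + p$, so that
\[
  \mathcal R \;=\; \gamma^{5/2}\mathcal L_\gamma(\widehat \phi_0 - \phi_\gamma) \;=\; \mathcal A_1 \Phi_{5/2}
\]
is manifestly independent of $\gamma$ (note that $\widehat \phi$ and $\widehat \phi_0$ differ by a constant, which is annihilated by $\mathcal L_\gamma$). For integrability, $\Phi_{5/2}$ is polynomial in $(p,z)$ whose coefficients involve only finitely many $q$-derivatives of $\phi_{\rm ovd}$ and $\psi$, both of which belong to $C^\infty(\torus)$ by elliptic regularity applied to $\mathcal L_{\rm ovd}$. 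Since $\mathcal A_1$ preserves polynomial-in-$(p,z)$ growth with smooth $q$-coefficients, the function $\mathcal A_1\Phi_{5/2}$ lies in $L^2(\mu)$. The zero-mean property of $\mathcal R$ follows from the invariance identity $\int \mathcal L_\gamma f \, \d\mu = 0$ applied to $f = \widehat\phi_0 - \phi_\gamma \in L^2(\mu)$.

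The main obstacle is not conceptual but purely algebraic: verifying the hierarchy at the higher orders $\gamma^{-3/2},\gamma^{-2},\gamma^{-5/2}$ requires expanding $\mathcal A_1\Phi_{3/2}$ and $\mathcal A_1\Phi_2$, extracting the angular-mean component (which must match the forcing of the $\psi$-equation or its $q$-derivatives, giving another consistency check) and inverting $\mathcal A_0$ on the remainder. The resulting polynomial expressions in $(p,z)$ carry coefficients involving up to fifth-order $q$-derivatives of $\phi_{\rm ovd}$ and second-order $q$-derivatives of $\psi$, so the bookkeeping is lengthy but mechanical; as indicated in the statement, a symbolic computer algebra system is the natural way to perform the verification, following the scheme laid out in~\cite{thesis_urbain}.
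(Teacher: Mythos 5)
Your proposal follows essentially the same route as the paper: the displayed $\widehat\phi$ is treated as the truncation of the formal expansion $\phi_\gamma \sim \sum_k \gamma^{-k/2}\overline\phi_k$, the hierarchy in powers of $\sqrt\gamma$ is verified order by order so that the only surviving term is $\gamma^{-5/2}\mathcal A_1\Phi_{5/2}$, membership of $\mathcal R=\mathcal A_1\Phi_{5/2}$ in $L^2(\mu)$ follows from smoothness of $\phi_{\rm ovd}$ and $\psi$ on the torus together with polynomial growth in $(p,z)$, and the centering follows from invariance of $\mu$ (the paper phrases it as $\mathcal R$ lying in the image of $\mathcal A_1$, you phrase it as $\int \mathcal L_\gamma f\,\d\mu=0$; these are equivalent here). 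Your bookkeeping via the rotational structure of $\mathcal A_0$ in the $(p,z)$-plane is a clean way to organize the ``straightforward computation'' the paper invokes, and the consistency checks you identify (the angular average at order $\gamma^{-1}$ reproducing \eqref{eq:poisson_equation_overdamped}, and at order $\gamma^{-2}$ reproducing \eqref{eq:auxiliary_poisson_overdamped}) are correct.

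One step present in the paper's proof is missing from yours: the statement presupposes that $\psi$ exists, i.e.\ that \eqref{eq:auxiliary_poisson_overdamped} is solvable in $L^2_0(\mu_{\rm ovd})$, and this requires checking that its right-hand side has mean zero with respect to $\mu_{\rm ovd}$ (Fredholm alternative for $\mathcal L_{\rm ovd}$ on the torus). Identifying the equation for $\psi$ as the solvability condition of the $\mathcal A_0$-equation at the appropriate order does not by itself give this; the paper verifies it directly by integrating the term $\frac{1}{2\beta^2}\partial_q^4\phi_{\rm ovd}$ by parts against $\mu_{\rm ovd}$ and matching it with the remaining terms. You should add this (short) verification to make the argument complete.
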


Including the term multiplying $\gamma^{\frac{5}{2}}$ adds a significant amount of complexity,
but this is required for rigorously proving the convergence of $\gamma D_{\gamma, \nu}$ to $D_{\rm ovd}$ in the limit as $\gamma \to \infty$.

\begin{proof}
  Note first that~\eqref{eq:auxiliary_poisson_overdamped} admits a solution because the right-hand side belongs to $L^2_0(\mu_{\rm ovd})$,
  which can be shown by using integration by part on the last term:
  \begin{align*}
    &\int_{\torus} \frac{\derivative*{4}[\phi_{\rm ovd}]{q^4}}{2 \beta^2} \, \d \mu_{\rm ovd}
      = \int_{\torus} \frac{\derivative*{1}[V]{q} \, \derivative*{3}[\phi_{\rm ovd}]{q^3}}{2 \beta} \, \d \mu_{\rm ovd}
      = \int_{\torus} \frac{3}{2 \beta} \, \derivative*{1}[V]{q} \, \derivative*{3}[\phi_{\rm ovd}]{q^3} \, \d \mu_{\rm ovd}
      - \int_{\torus} \left( \abs{\derivative*{1}[V]{q}}^2 - \frac{\derivative*{2}[V]{q^2}}{\beta}  \right) \, \derivative*{2}[\phi_{\rm ovd}]{q^2} \, \d \mu_{\rm ovd}.
  \end{align*}
  A straightforward computation shows that
  $\dps \mathcal L_{\gamma} \left(\revision{\widehat \phi_0} - \phi_{\gamma}\right) = \gamma^{-5/2} \mathcal{A}_1 \Phi_{5/2}$.
  The function $\mathcal{R}$ is in~$L^2(\mu)$ by direct inspection,
  since $\phi_{\rm ovd}$ and~$\psi$ are smooth and defined on a compact domain.
  Finally, $\mathcal{R}$ has average~0 with respect to~$\mu$ since
  it is in the image of~$\mathcal{A}_1$,
  which is the sum of two generators of stochastic dynamics leaving~$\mu$ invariant.
\end{proof}

 We can conclude this section with the proof of \cref{prop:ovd_diff}.

 \begin{proof}[Proof of \cref{prop:ovd_diff}]
   Note first that,
   by an integration by parts in~\eqref{eq:D_ovd},
   \begin{equation}
     \label{eq:reformulation_D_ovd}
     D_{\rm ovd} = \beta^{-1} \left( 1 + \int \phi_{\rm ovd}'  \, \d\mu \right).
   \end{equation}
   The resolvent estimate~\eqref{eq:resolvent_bound_GL1} and Lemma~\ref{lem:asymptotic_expansion_ovd} next imply that
  \begin{equation}
    \label{eq:norm_phi_gamma_hat_phi}
    \norm{\revision{\widehat \phi_0} - \phi_{\gamma}}
    \leq C\left\| \mathcal{R} \right\| \gamma^{- \frac{3}{2}}.
  \end{equation}
  Moreover, using that~$\phi_{\rm ovd}$ and~$\psi$ have average~0 with respect to~$\mu$,
  and that functions with odd powers of~$p$ and~$z$ also have vanishing averages with respect to~$\mu$,
  \begin{align}
    \notag
    \revision{\int_{\torus \times \real \times \real} \widehat{\phi}_0 p \, \d\mu}
    &= \int_{\torus \times \real \times \real} \widehat{\phi} p \, \d\mu \\
    \label{eq:diff_hat_phi}
    &= \frac{1}{\beta\gamma} \int_{\torus \times \real \times \real} \left(1+\phi_{\rm ovd}'\right) \d\mu + \frac{1}{\gamma^2} \int_{\torus \times \real \times \real} \left( \Phi_2 + \gamma^{-1/2} \Phi_{5/2}\right)p \, \d\mu
  \end{align}
  The result then follows by combining the previous equality with~\eqref{eq:reformulation_D_ovd} and~\eqref{eq:norm_phi_gamma_hat_phi}.
\end{proof}

\begin{remark}
  \label{rmk:order_2_ovd}
  Note that the term of order~$\gamma^{-3/2}$ vanishes in the effective diffusion coefficient~\eqref{eq:diff_hat_phi}. We therefore expect that the bound in \cref{prop:ovd_diff} can be improved to~$\gamma^{-2}$, as for Langevin dynamics. There is no conceptual obstruction to this end, but this would require going to an extra order in \cref{lem:asymptotic_expansion_ovd} by making explicit the term $\gamma^{-3}\Phi_3$ and changing~$\Phi_{5/2}$ in the definition of~$\widehat{\phi}$, which is algebraically cumbersome.
\end{remark}

\subsection{The underdamped limit}%
\label{sub:gle:the_underdamped_limit}
We consider in this section the underdamped limit~$\gamma\to 0$ for the GL1 model.
\revision{We will assume without loss of generality that $\min_{q \in [-\pi, \pi]} V(q) = 0$.}
The underdamped limit of Langevin dynamics in one dimension was carefully studied in~\cite{MR2394704}, see also~\cite{MR2427108},
where it is shown that the effective diffusion coefficient associated with the Langevin dynamics~\eqref{eq:model:langevin},
multiplied by the factor $\gamma$,
converges in the limit as $\gamma \to 0$ to
\begin{equation}
  \label{eq:effective_underdamped_Langevin}
  D_{\rm und} = \frac{8 \pi^2}{\beta Z_\beta} \int_{E_0}^{\infty} \frac{\e^{- \beta E}}{S_{\rm und}(E)} \, \d E,
  \qquad Z_\beta = \int_{\torus \times \real} \e^{-\beta H} = \sqrt{\frac{2\pi}{\beta}} \int_\torus \e^{-\beta V},
\end{equation}
where~$E_0 =  \max_{q \in [-\pi, \pi]} V(q)$, and
\begin{equation}
  \label{eq:def_S_0}
  S_{\rm und}(E) = \int_{-\pi}^{\pi} P(q, E) \, \d q, \qquad P(q,E) := \sqrt{2(E - V(q))}.
\end{equation}
In this section, we show that a similar result holds for the underdamped limit of GL1.
In particular, we motivate the following result.
\begin{result}
\label{result:underdamped}
In the limit as $\gamma \to 0$,
the effective diffusion coefficient~$D_{\gamma,\nu}$ for GL1 scales as $D^*_{\nu}/\gamma$ in the limit~$\gamma \to 0$,
for some limiting coefficient $D^*_{\nu}$.
More precisely, it holds
\[
  \abs{\gamma D_{\gamma,\nu} - D^*_{\nu}} \xrightarrow[\nu \to 0]{} 0.
\]
The effective diffusion coefficient is given by
\begin{align}
  \label{eq:effective_diffusion_GLE_underdamped}
  D_{\nu}^*
  = \frac{8 \pi^2}{\beta Z_\beta} \int_{E_0}^{\infty} \frac{\nu^2 \e^{-\beta \mathcal E}}{S_{\nu}(\mathcal E)} \, \d \mathcal E.
\end{align}
Here
\begin{equation}
  \label{eq:S_nu_integral}
  S_\nu(E) = \int_\torus s_\nu(q,E) \, \d q,
\end{equation}
with~$s_\nu(\dummy, E)$, for $E > E_0$,
the unique smooth periodic solution to the following first order differential equation (see \cref{lemma:auxiliary_underdamped} in \cref{sec:estimates_underdamped_limit}):
\begin{align}
  \label{eq:ode_for_s_nu}%
  \forall q \in \torus, \qquad P(q, E) \derivative{1}[s_{\nu}]{q} (q, E) - \frac{1}{\nu^2} s_{\nu}(q, E) = - P(q, E).
\end{align}
\end{result}
\begin{remark}
  We use here the environment \emph{Result}, as opposed to \emph{Proposition} or \emph{Theorem},
  because our proof of the result relies on formal asymptotics,
  and additional work would be required to turn these asymptotics into a rigorous proof.
\end{remark}

The derivation of this result using formal asymptotics is presented at the end of this section.
Note that the integral on the right-hand side of~\eqref{eq:effective_diffusion_GLE_underdamped} is well defined since
\begin{equation}
    \label{eq:proof_bounds_underdamped}
    \forall  q \in \torus, \qquad \nu^2 \sqrt{2(E - E_0)} \leq \nu^2 \inf_{q \in \torus} P(q,E) \leq s_{\nu}(q, E) \leq \nu^2 \sup_{q \in \torus} P(q,E) = \nu^2 \, \sqrt{2E}.
\end{equation}
The latter inequalities are obtained from the ordinary differential equation (ODE)~\eqref{eq:ode_for_s_nu} satisfied by~$s_\nu$, since $\partial_q s_\nu(q,E)$ vanishes at the extrema of $q \mapsto s_{\nu}(q, E)$.
The relationship between~$D_{\nu}^*$ and the diffusion coefficient~$D_{\rm und}$ given by~\cref{eq:effective_underdamped_Langevin} is made precise in the following result.

\begin{proposition}
  There exists $C > 0$ depending only on $V$ such that
  \begin{equation}
    \label{eq:limit_underdamped_GL1}
      \forall \nu > 0, \qquad
      \left| D^*_{\nu} - D_{\rm und} \right| \leq C \nu^4 \, (1 + \nu^2).
  \end{equation}
\end{proposition}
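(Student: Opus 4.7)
The plan is to construct an explicit approximation $\widehat{s}_\nu$ of $s_\nu$ by asymptotic expansion in powers of $\nu^2$, control the remainder by the extremum-type argument already used to derive~\eqref{eq:proof_bounds_underdamped}, integrate over $q \in \torus$ to exploit cancellations from periodicity, and finally integrate the resulting bound on $\nu^2/S_\nu - 1/S_{\rm und}$ against the exponential weight $e^{-\beta E}$ in~\eqref{eq:effective_diffusion_GLE_underdamped}.

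First, substituting the ansatz $\widehat{s}_\nu = \nu^2 P + \nu^4 \sigma_1 + \nu^6 \sigma_2$ into~\eqref{eq:ode_for_s_nu} and matching powers of $\nu^2$, while using the key identity $P \partial_q P = \partial_q(P^2/2) = -V'$, yields $\sigma_1 = -V'$ and $\sigma_2 = P \partial_q \sigma_1 = -PV''$. The residual that remains when $\widehat{s}_\nu$ is plugged into the ODE is then $\nu^6 P \partial_q \sigma_2 = \nu^6(V'V'' - P^2 V''')$, so the error $r_3 := s_\nu - \widehat{s}_\nu$ solves an ODE of the same form as~\eqref{eq:ode_for_s_nu} but with right-hand side $-\nu^8 P \partial_q \sigma_2$ replacing $-P$. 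Since $r_3$ is smooth and periodic in $q$, at any $q$-extremum one has $\partial_q r_3 = 0$ and hence $r_3 = \nu^8 P \partial_q \sigma_2$, giving $\|r_3(\cdot, E)\|_\infty \leq \nu^8 \|P \partial_q \sigma_2(\cdot, E)\|_\infty \leq C_V \, \nu^8 (1 + E)$, where the last step uses $P^2 \leq 2E$ and smoothness of $V$.

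Next, integrating over $\torus$ and using $\int_\torus V' \, dq = 0$ together with the integration-by-parts identity $\int_\torus P V'' \, dq = -\int_\torus P' V' \, dq = \int_\torus V'^2/P \, dq =: J(E)$, I obtain the clean decomposition $S_\nu(E) = \nu^2 S_{\rm und}(E) - \nu^6 J(E) + \int_\torus r_3 \, dq$, in which the $O(\nu^4)$ correction has vanished thanks to periodicity. The elementary identity $\frac{\nu^2}{S_\nu} - \frac{1}{S_{\rm und}} = \frac{\nu^2 S_{\rm und} - S_\nu}{S_\nu S_{\rm und}}$, combined with the pointwise lower bound $S_\nu \geq 2\pi \nu^2 \sqrt{2(E - E_0)}$ from~\eqref{eq:proof_bounds_underdamped} and with $S_{\rm und}(E) \geq S_{\rm und}(E_0) > 0$ (by monotonicity of $S_{\rm und}$ in $E$ and strict positivity at the endpoint), then yields, for $\nu$ small enough that $S_\nu \geq \nu^2 S_{\rm und}/2$,
\[
\left| \frac{\nu^2}{S_\nu(E)} - \frac{1}{S_{\rm und}(E)} \right| \leq \frac{C \bigl(\nu^4 J(E) + \nu^6 (1 + E)\bigr)}{S_{\rm und}(E)^2}.
\]

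Finally, multiplying this by $e^{-\beta E}$ and integrating over $[E_0, \infty)$ produces a bound of the form $C \nu^4 + C' \nu^6$: both integrals converge because $J$ is uniformly bounded on $[E_0, \infty)$ (integrability at $E_0$ relies on the non-degeneracy of each maximum $q_\star$ of $V$, near which $V'^2/P \sim |q - q_\star|$ is integrable, while $J(E) \to 0$ at infinity from the explicit bound $J(E) \leq 2\pi \|V'\|_\infty^2/\sqrt{2(E-E_0)}$), and the $(1+E)$ factor is absorbed by the exponential decay. For $\nu$ bounded away from $0$, the same pointwise lower bound on $S_\nu$ shows that $D_\nu^*$ is uniformly bounded in $\nu$, so $|D_\nu^* - D_{\rm und}| \leq C \nu^4 (1 + \nu^2)$ holds trivially after enlarging the constant. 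I expect the main obstacle to be the uniform integrability near the endpoint $E = E_0$, where $P$ vanishes at critical points of $V$; this is overcome by combining the non-degeneracy of the extrema of $V$ with the strict positivity $S_{\rm und}(E_0) > 0$, so that the ratio $J/S_{\rm und}^2$ remains bounded up to the lower endpoint of integration.
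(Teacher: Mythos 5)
Your overall strategy is the same as the paper's: expand $s_\nu$ in powers of $\nu^2$, control the remainder of the ODE~\eqref{eq:ode_for_s_nu} by evaluating at an extremum in $q$ (this is exactly the content of \cref{lemma:auxiliary_underdamped} and~\eqref{eq:underdamped_estimate_ode}), use that the $O(\nu^4)$ correction $-V'$ integrates to zero over $\torus$, and then integrate in $E$ against $\e^{-\beta E}$; you simply push the expansion one order further than necessary. Two small slips in that part: the ODE satisfied by $r_3$ has right-hand side $-\nu^6 P\,\partial_q\sigma_2$, not $-\nu^8 P\,\partial_q\sigma_2$ (your conclusion $\|r_3(\cdot,E)\|_\infty\le \nu^8\|P\,\partial_q\sigma_2(\cdot,E)\|_\infty$ is nevertheless correct, the extra $\nu^2$ coming from the coefficient $\nu^{-2}$ of $r_3$), and the constant-$V$ case, where $S_{\rm und}(E_0)=0$ and your denominators degenerate, should be treated separately as in the paper (there $s_\nu=\nu^2 P$ exactly).

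The genuine gap is the step ``for $\nu$ small enough that $S_\nu\ge \nu^2 S_{\rm und}/2$'', which you assert but never prove, and which does not follow from the ingredients you cite: the pointwise bound~\eqref{eq:proof_bounds_underdamped} only gives $S_\nu(E)\ge 2\pi\nu^2\sqrt{2(E-E_0)}$, which degenerates as $E\to E_0^+$ precisely where you need it, while the decomposition $S_\nu=\nu^2 S_{\rm und}-\nu^6 J+\int_\torus r_3$ cannot deliver it uniformly for large $E$ either, since the remainder is only $O(\nu^8(1+E))$ whereas $\nu^2 S_{\rm und}(E)$ grows like $\nu^2\sqrt{E}$. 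A lower bound on $S_\nu$ that is uniform in $E$ is exactly what the paper's \cref{lemma:auxiliary_behavior_Snu} provides, namely~\eqref{eq:bounds_sn}, $S_\nu(E)\ge \frac{\nu^2}{1+M\nu^2}\,S_{\rm und}(E_0)$, and this is also where the factor $(1+\nu^2)$ in the statement originates; alternatively you could keep the degenerate bound and exploit that $(E-E_0)^{-1/2}\e^{-\beta E}$ is integrable on $(E_0,\infty)$. A related issue is your justification that $J$ is uniformly bounded via non-degeneracy of the maxima of $V$: this is not among the paper's assumptions ($V$ is only smooth). The boundedness is true anyway, because $|V'|/P\le M$ uniformly in $(q,E)$ by~\eqref{eq:bound_derivative} (equivalently, $|V'|^2\le 2\|V''\|_\infty(E_0-V)$ for the nonnegative smooth function $E_0-V$), whence $J(E)\le M\int_\torus|V'|$; citing this instead removes the unwarranted hypothesis. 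With these two repairs your argument closes and essentially reproduces the paper's proof.
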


\begin{proof}
If the potential $V$ is constant, then $s_{\nu}(q, E) = \nu^2 P(q, E)$, so $\nu^{-2}S_{\nu}(E) = S_{\rm und}(E)$ and,
consequently, $D^*_{\nu} = D_{\rm und}$.
If $V$ is not constant, then by~\eqref{eq:underdamped_estimate_ode} below, it holds
\begin{align*}
  & \norm{\nu^{-2} s_{\nu}(\dummy, E) - P(\dummy, E) - \nu^2 \, P(\dummy, E) \partial_q P (\dummy, E)}_{\infty} \\
  & \qquad \qquad \leq \nu^4 \left\| P(\dummy,E) \partial_q\left[ P(\dummy,E) \partial_q P(\dummy,E) \right] \right\|_\infty = \nu^4 \left\|P(\dummy, E) V'' \right\|_\infty.
\end{align*}
Since $P(\dummy, E) \partial_q P (\dummy, E)$ integrates to zero over~$\torus$ and $0 \leq P(\dummy, E) \leq \sqrt{2E}$ \revision{by~\eqref{eq:def_S_0}}, there exists $K > 0$ independent of $\nu$ and $E$ such that
\[
  \forall E > E_0, \quad \forall \nu > 0, \qquad \abs{\nu^{-2} S_\nu(E) - S_{\rm und}(E)} \leq K \nu^4 \, \sqrt{E},
\]
in view of the definition~\eqref{eq:def_S_0} of~$S_{\rm und}$.
By~\eqref{eq:bounds_sn} in the appendix (where we use the assumption that $V$ is not constant)
and the fact that $S_{\rm und}(E)$ is an increasing function of $E$ with $S_{\rm und}(E_0) > 0$,
it holds
\begin{align*}
  \forall E > E_0, \quad \forall \nu > 0,\qquad
  \abs{\frac{\nu^2}{S_{\nu}(E)} - \frac{1}{S_{\rm und}(E)}}
  &= \abs{\frac{\nu^2}{S_{\nu}(E) \, S_{\rm und}(E)}} \, \abs{\nu^{-2} S_\nu(E) - S_{\rm und}(E)}  \\
  &\leq \frac{1 + M \nu^2}{\abs{S_{\rm und}(E_0)}^2} K \nu^4 \, \sqrt{E},
\end{align*}
which, given the definitions~\cref{eq:effective_underdamped_Langevin,eq:effective_diffusion_GLE_underdamped}, shows~\eqref{eq:limit_underdamped_GL1}.
\end{proof}

\medskip

We conclude this section by first presenting some numerical results
confirming that~\eqref{eq:limit_underdamped_GL1} \revision{holds and} then motivating \cref{result:underdamped}.

\subsubsection{Numerical illustration}
In order to estimate $D^*_{\nu}$ for $\nu > 0$,
the last integral in~\eqref{eq:effective_diffusion_GLE_underdamped} can be truncated and approximated by numerical quadrature.
This requires to numerically approximate the integrand,
in particular the term $S_{\nu}(E)$, for discrete values of $E$.
To this end,
we employ the function \verb?solve_bvp? from the \emph{SciPy} module \verb?scipy.integrate?.
This function implements a solver for boundary value problems (BVP) using the approach proposed in~\cite{MR1918120},
and we employ it in order to calculate an approximation $\hat{\vect y} = (\hat y_1, \hat y_2)^\t$ of the solution
to the following first order system of ODEs:
\[
  \derivative*{1}[\vect y]{q}(q) =
   \begin{pmatrix}
     \dps \frac{y_1(q)}{\nu^2 \, P(q, E)} - 1 \\
     y_1(q)
   \end{pmatrix} =: \vect f(q), \qquad -\pi \leq q \leq \pi,
\]
subject to the boundary condition
\[
   \vect g(\vect y(-\pi), \vect y(\pi)) :=
   \begin{pmatrix}
     y_1(-\pi) - y_1(\pi) \\
     y_2(-\pi)
   \end{pmatrix}
   = \vect 0.
 \]
The first line in the ODE for $\vect y$ is~\eqref{eq:ode_for_s_nu} divided by~$P$, while the second one corresponds to~\eqref{eq:S_nu_integral}. Since, given $E > E_0$,
the unique exact solution of this problem is given by
\[
  \vect y(q) =
  \begin{pmatrix}
  s_{\nu}(q, E) \\
  \dps \int_{-\pi}^{q} s_{\nu}(Q, E) \, \d Q
  \end{pmatrix},
\]
an approximation of $S_{\nu}(E)$ is obtained by simply evaluating $\hat y_2(\pi)$.

Results from the numerical simulation are illustrated in \cref{fig:underdamped:effective_diffusion}
for the case of the cosine potential $V(q) = \frac{1}{2} \, (1 - \cos q)$,
which was also employed in \cref{par:numerical_results}. Recall that it is difficult to compare $D^*_\nu$ with~\eqref{eq:introduction:effective_diffusion} since the spectral method we use in Section~\ref{par:numerical_results} cannot tackle values of $\gamma$ smaller than~$0.01$, and such values are not sufficiently small to see the asymptotic regime~$\gamma \to 0$ (as evidenced by Figure~\ref{fig:numerics:diffusion_coefficient}, Right).

To generate the results in \cref{fig:underdamped:effective_diffusion},
the integral in~\eqref{eq:effective_diffusion_GLE_underdamped} was truncated at $E = 25$
and approximated using the \emph{SciPy} function \verb?scipy.integrate.quad? with a relative tolerance equal to $10^{-12}$. The tolerance used in \verb?scipy.integrate.solve_bvp? was equal to $10^{-11}$.
The limiting coefficient $D_{\rm und}$ was computed by truncating and approximating the integral in~\eqref{eq:effective_underdamped_Langevin} with the same parameters, and by using the explicit expression of $S(E)$ derived in~\cite{MR2427108}.

We observe that, although $D_{\nu}^*$ does vary with $\nu$,
the relative variation is very small ($< 5\%$ over the interval $\nu \in [0, 1]$).
We also notice that $D_{\nu}^* \to D_{\rm und}$ as $\nu \to 0$, as expected from~\eqref{eq:limit_underdamped_GL1}.
In fact, the difference $D_{\nu}^* -D_{\rm und}$ clearly scales as~$\nu^4$ in the limit $\nu \to 0$,
consistently with~\eqref{eq:limit_underdamped_GL1}.
\begin{figure}[ht]
  \centering
  \includegraphics[width=.8\linewidth]{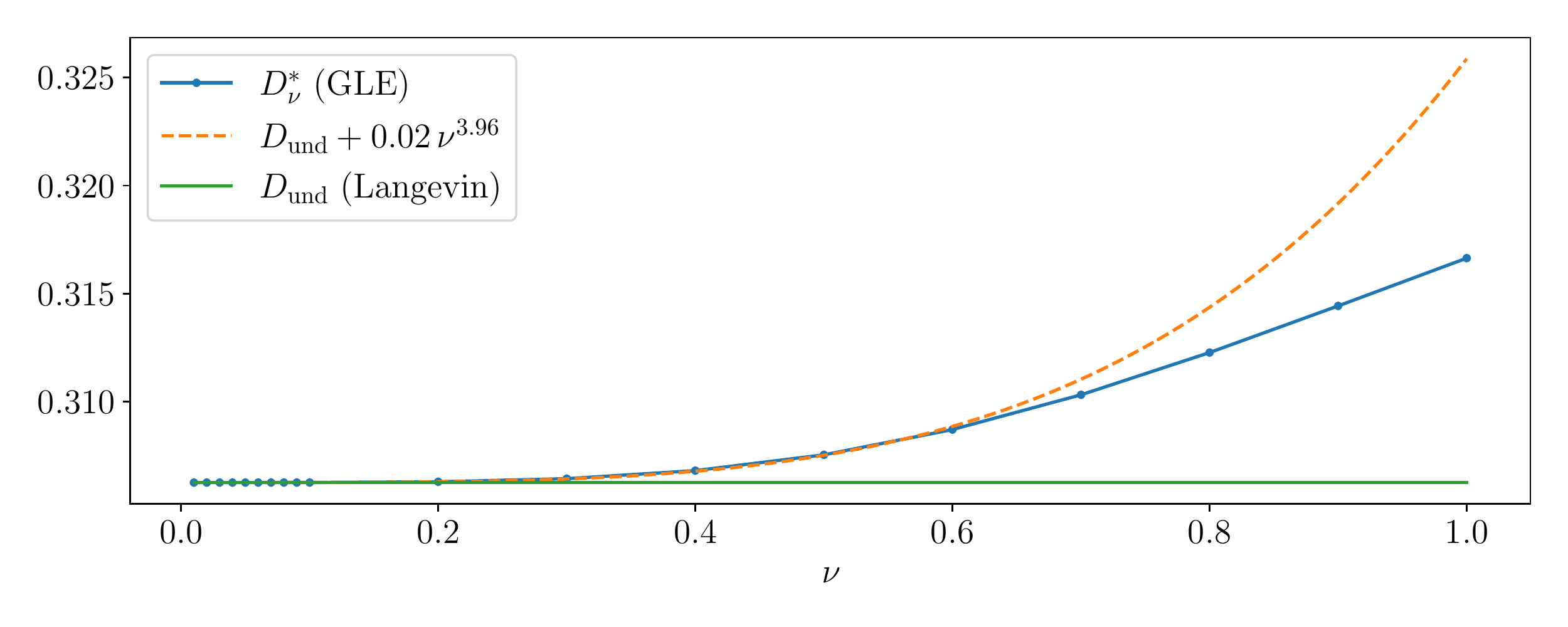}
  \includegraphics[width=.8\linewidth]{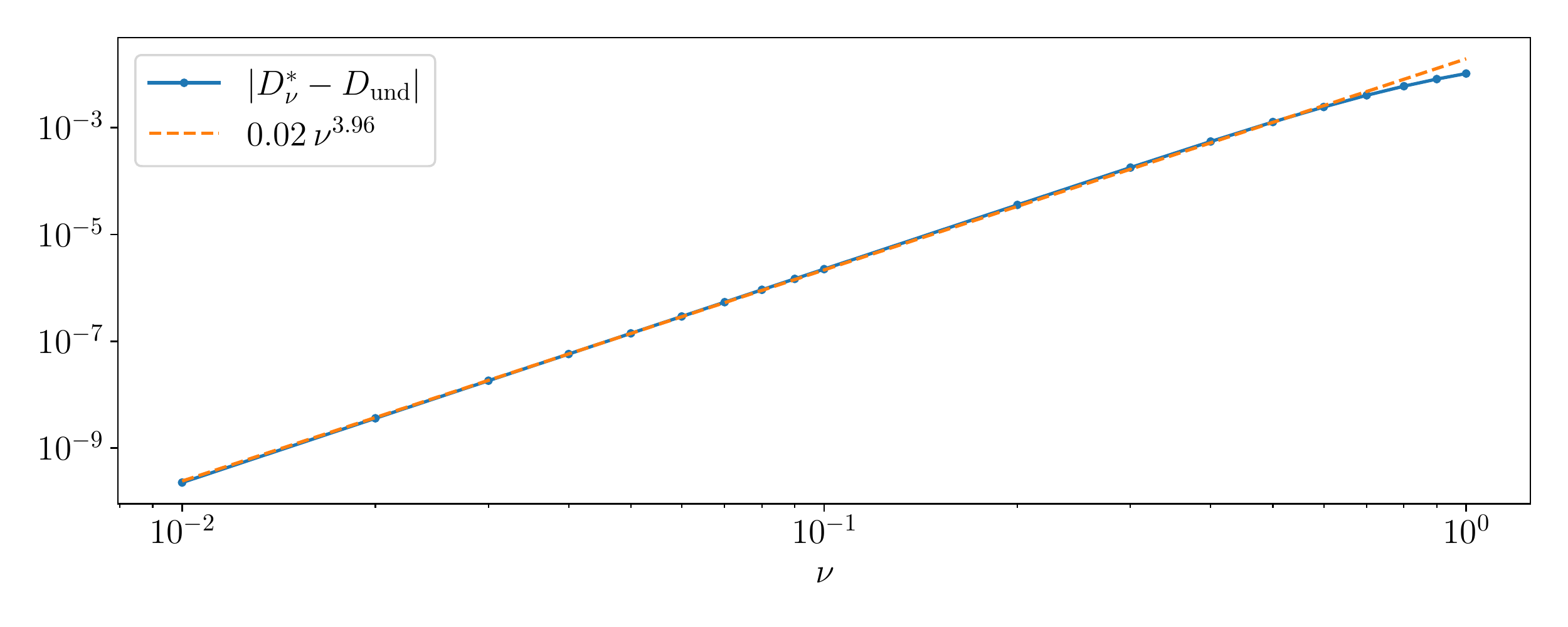}
  \caption{%
    Comparison between the effective diffusion coefficients (multiplied by $\gamma$) for the Langevin and the generalized Langevin dynamics in the underdamped limit, in linear (top) and logarithmic (bottom) scales.
  }%
  \label{fig:underdamped:effective_diffusion}
\end{figure}

\subsubsection{Formal derivation of \texorpdfstring{\cref{result:underdamped}}{Result 4.1}}

In order to formally obtain \cref{result:underdamped},
we rewrite the generator as $  \mathcal{L}_\gamma = \mathcal A_{0} + \sqrt{\gamma} \mathcal A_{1}$
(indicating explicitly the dependence on the friction~$\gamma > 0$), with
\[
  \mathcal{A}_0 = p \derivative{1}{q} - \derivative*{1}[V]{q}(q)\derivative{1}{p} + \frac{1}{\nu^2} \left(- z \derivative{1}{z} +  \frac{1}{\beta} \, \derivative{2}{z^2} \right),
  \qquad
  \mathcal{A}_1 = \frac{1}{\nu} \, \left(z \, \derivative{1}{p} - p \, \derivative{1}{z}\right).
\]
We next consider the following ansatz on the solution~$\phi_\gamma$ to the Poisson equation~\eqref{eq:introduction:poisson_equation}, motivated by the fact that the leading order of the resolvent is~$\gamma^{-1}$ by~\eqref{eq:resolvent_bound_GL1}:
\[
  \phi_{\gamma} = \frac{1}{\gamma} \, \overline{\phi}_0 + \frac{1}{\sqrt{\gamma}} \, \overline{\phi}_1 + \overline{\phi}_2 + \dotsb
\]
By substituting into the Poisson equation~\eqref{eq:introduction:poisson_equation} and successively identifying terms of order~$\gamma^{-1},\gamma^{-1/2},1,\dotsb$,
we obtain
\begin{subequations}
\begin{align}
  \label{eq:underdamped:equation_1}
  \mathcal A_0 \overline{\phi}_0 = 0, \\
  \label{eq:underdamped:equation_2}
  \mathcal A_0 \overline{\phi}_1 + \mathcal A_1 \overline{\phi}_0 = 0, \\
  \label{eq:underdamped:equation_3}
  \mathcal A_0 \overline{\phi}_2 + \mathcal A_1 \overline{\phi}_1 = - p;
\end{align}
\end{subequations}
while $\mathcal A_0 \overline{\phi}_{i+1} + \mathcal A_1 \overline{\phi}_i = 0$ for $i \geq 2$.
We motivate below that
\begin{equation}
  \label{eq:phi_und}
  \overline{\phi}_0(q,p,z) = \sign(p) \, \psi_0(H(q,p)),
\end{equation}
where $\sign(p) = \mathbbm 1_{[0, \infty)}(p) - \mathbbm 1_{(-\infty, 0]}(p)$ and
\begin{equation}
  \label{eq:expression_psi0}
  \psi_0(E) = \mathbbm 1_{[E_0, \infty)}(E) \, 2 \pi \nu^2 \int_{E_0}^{E} \frac{1}{S_{\nu}(\mathcal{E})} \, \d \mathcal{E}.
\end{equation}
Unfortunately, $\derivative*{1}[\psi_0]{E}$ is discontinuous at $E = E_0$,
so $\mathcal L_{\gamma} \overline{\phi}_1$ does not make sense as a function.
This invalidates a posteriori the assumed asymptotic expansion~\eqref{eq:proof_bounds_underdamped}. Despite the breakdown of the naive expansion,
which was also noted in~\cite{MR2427108} and~\cite{MR2394704} for the Langevin equation in the underdamped regime,
we assume that $\gamma^{-1} \overline{\phi}_0$ still captures $\phi_{\gamma}$ at dominant order.
For the Langevin equation,
the correctness of the dominant term in the naive expansion can be shown rigorously based on results by Freidlin and Weber~\cite{MR1722275},
but showing this for GL1 is an open problem that would probably require considerable additional work.

From the above discussion, the effective diffusion coefficient~$D_{\gamma,\nu}$ should scale at dominant order as~$\gamma^{-1} D^*_{\nu}$ with
\begin{align}
  \notag%
  D_{\nu}^*
  & = \int_{\torus \times \real \times \real} \overline{\phi}_0 \, p \, \d \mu = \frac{1}{Z_\beta}\int_{\torus \times \real} \overline{\phi}_0(q,p) p \, \e^{-\beta H(q,p)} \, \d q \, \d p
  = \frac{2}{Z_\beta} \int_{E_0}^{\infty} \int_{\torus} \psi_0(E) \e^{- \beta E} \, \d q \, \d E \\
  \notag%
  & = \frac{4 \pi}{Z_\beta} \int_{E_0}^{\infty} \psi_0(E) \e^{-\beta E} \, \d E
  = \frac{8 \pi^2}{Z_\beta} \int_{E_0}^{\infty} \int_{E_0}^E \frac{\nu^2 \e^{-\beta E}}{S_{\nu}(\mathcal{E})} \,  \d \mathcal{E} \, \d E \\
  \notag%
  & = \frac{8 \pi^2}{Z_\beta} \int_{E_0}^{\infty} \frac{\nu^2}{S_{\nu}(\mathcal{E})} \left( \int_{\mathcal{E}}^{\infty}  \e^{-\beta E}\,  \d E \right) \d \mathcal{E}
  = \frac{8 \pi^2}{\beta Z_\beta} \int_{E_0}^{\infty} \frac{\nu^2 \e^{-\beta \mathcal E}}{S_{\nu}(\mathcal E)} \, \d \mathcal E,
\end{align}
where we used in the first line that the change of variables~$(q,p) \mapsto (q,E)$ has Jacobian~$P(q,E)$.
This concludes the derivation of \cref{result:underdamped}.

\paragraph{Motivation for~\eqref{eq:phi_und}.}
We assume for simplicity,
in addition to \cref{assumption:assumption_potential},
that $V(q)$ is an even function around $q = 0$.
This is not required to obtain the final result,
but it leads to a simplified ansatz for $\overline{\phi}_0$,
which for general potentials can be justified only \emph{a posteriori}.
Under this additional assumption,
one can check by substitution that,
if $\phi_{\gamma}$ denotes the solution to the Poisson equation $-\mathcal L_{\gamma} \phi_{\gamma} = p$,
then $\psi_{\gamma}(q, p, z) = - \phi_{\gamma}(-q, -p, -z)$ is also a solution to the equation,
so $\psi_{\gamma} = \phi_{\gamma}$ by uniqueness of the solution.
Therefore, $\phi_{\gamma}$ and all the summands $\{\overline{\phi}_i\}_{i=0, 1, \dotsc}$ must satisfy the symmetry relation
\begin{equation}
  \label{eq:symmetry}
  u(q, p, z) = -u(-q, -p, -z).
\end{equation}

Multiplying \cref{eq:underdamped:equation_1} by $\overline{\phi}_0$, integrating with respect to~$\mu$,
and taking into account that the contribution of the antisymmetric part of $\mathcal A_0$ vanishes,
we obtain
\begin{equation}
  \label{eq:dz_vanish}
  \frac{1}{\beta}\int_{\torus \times \real \times \real} \left(\derivative{1}[\overline{\phi}_0]{z} \right)^2 \, \mu (\d q \, \d p \, \d z) = 0,
\end{equation}
which shows that $\overline{\phi}_0 = \overline{\phi}_0(q, p)$.
Substituting again in \cref{eq:underdamped:equation_1},
we deduce that $\overline{\phi}_0$ must lie in the kernel of the operator $p \, \derivative{1}{q} - \derivative*{1}[V]{q}(q) \, \derivative{1}{p}$,
which consists of functions that are constant on the contour lines of the Hamiltonian $H(q,p)$.
Together with the symmetry relation~\eqref{eq:symmetry}, and distinguishing closed and open orbits (corresponding respectively to $H(q,p)<E_0$ and $H(q,p)>E_0$), this motivates that
\[
  \overline{\phi}_0(q, p) =
  \begin{cases}
    \psi_0 (H(q,p)), & \text{if } H(q,p) \geq E_0 \text{ and } p \geq 0, \\
    0 & \text{if } H(q,p) < E_0, \\
    - \psi_0 (H(q,p)), & \text{if } H(q,p) \geq E_0 \text{ and } p \leq 0,
  \end{cases}
\]
for some function $\psi_0$ still to be determined.

For the next order we use the ansatz $\overline{\phi}_1(q, p, z) = z \, \psi_1(\sign(p) \, q,H(q,p))$.
We remark that any function in the kernel of $\mathcal A_0$,
\emph{i.e.} any function of only $(q,p)$ that is constant on the contour lines of the Hamiltonian,
could in principle be added to this ansatz.
However, this will not be necessary for our purposes,
because any constant-in-$z$ part of $\overline{\phi}_1$ cancels out in the equation~\eqref{eq:eq_phi0_times_Snu}
for~$\psi_0$ derived below. Restricting our attention first to the region where $H(q,p) \geq E_0$ and $p \geq 0$,
we obtain the following equation for the function~$(q,E)\mapsto\psi_1(q,E)$ from~\cref{eq:underdamped:equation_2}:
\[
  p \derivative{1}[\psi_1]{q} \left(q, V(q) + \frac{p^2}{2}\right)
  - \frac{1}{\nu^2} \psi_1 \left(q, V(q) + \frac{p^2}{2}\right)
  + \frac{1}{\nu} p \, \derivative*{1}[\psi_0]{E}\left(V(q) + \frac{p^2}{2}\right) = 0.
\]
This equation is satisfied pointwise provided
\begin{equation}
  \label{eq:underdamped:equation_for_psi1}
  \forall (q, E) \in \torus \times (E_0, + \infty),
  \qquad P(q, E) \, \derivative{1}[\psi_1]{q} (q, E) - \frac{1}{\nu^2} \psi_1(q, E) + \frac{1}{\nu} \, P(q, E) \, \derivative*{1}[\psi_0]{E}(E) = 0.
\end{equation}
In view of the definition~\eqref{eq:ode_for_s_nu} of $s_{\nu}(q, E)$, it holds
\begin{equation}
  \label{eq:solution_psi1}
  \psi_1 (q, E) = \frac{1}{\nu} s_{\nu}(q, E) \, \derivative*{1}[\psi_0]{E}(E).
\end{equation}
In the region $E < E_0$,
equation \cref{eq:underdamped:equation_2} simplifies to $\mathcal A_0 \overline{\phi}_1 = 0$. We can follow the treatment of~\eqref{eq:dz_vanish}, by taking into account the domain in the~$(q,p)$ variables. This is done by multiplying~\eqref{eq:solution_psi1} by~$\overline{\phi}_1$, integrating over the set $A := \{(q, p, z) \in  \torus \times \real \times \real: H(q,p) \leq E_0\}$
with respect to the Gaussian weight~$g(z) := \sqrt{\frac{\beta}{2\pi}} \, \e^{-\beta z^2/2}$,
and employing the formal antisymmetry of $p \, \derivative{1}{q} - \derivative*{1}[V]{q}(q) \, \derivative{1}{p}$,
which leads to
\begin{equation*}
  \frac{1}{\beta}\int_{A} \left(\derivative{1}[\overline{\phi}_1]{z} \right)^2 \, g(z) \, \d q \, \d p \, \d z = 0,
\end{equation*}
so necessarily $\overline{\phi_1} = \overline{\phi_1}(q,p)$ in that region.
Consequently, $\overline{\phi_1}(q, p)$ must lie in the kernel of the operator $p \, \derivative{1}{q} - \derivative*{1}[V]{q}(q) \, \derivative{1}{p}$,
so it is a function of~$H(q,p)$ only.
By the symmetry relation~\eqref{eq:symmetry}, we obtain that $\revision{\psi_1}(q,E) = 0$ \revision{in that region}.

Substituting the expression of $\overline{\phi_1}$ in~\eqref{eq:underdamped:equation_3}
and integrating in the $z$ direction with respect to the Gaussian weight~$g(z)$,
we obtain
\begin{equation}
  \label{eq:equation_for_phi2_underdamped}
  \left(p \derivative{1}{q} - \derivative*{1}[V]{q}(q) \, \derivative{1}{p} \right) \int_{\real}\overline{\phi}_2(q,p,z) \, g(z) \, \d z
  +  \frac{p}{\nu}\left[\left( \frac{1}{\beta}\derivative{1}{E} - 1 \right) \psi_1\right]\left(q, H(q,p) \right) = - p,
\end{equation}
which can be viewed as a PDE on $\torus \times \real$ for $(q,p) \mapsto \int_{\real}\overline{\phi}_2(q,p,z) \, g(z) \, \d z$.
Since the operator \revision{acting on} this function is formally antisymmetric in $\lp{2}{B}$,
with $B := \{(q,p) \in \torus \times \real: p \geq 0~\text{and}~H(q,p) > E_0\}$,
the solvability condition associated with this equation is
\[
  \int_{B} \left( \frac{p}{\nu} \left[\left( \frac{1}{\beta} \derivative{1}{E} - 1 \right) \psi_1\right]\left( q, H(q,p) \right) + p \right) \, f(H(q,p)) \, \d q \, \d p = 0
\]
for any smooth function $E \mapsto f(E)$ with compact support \revision{in $(E_0,\infty)$}.
Using again the change of variables~$(q,p) \mapsto (q,E)$,
\revision{the latter} equation reads
\[
   \int_{\torus \times (E_0, \infty)} \left( \frac{1}{\nu} \left[\left( \frac{1}{\beta} \derivative{1}{E} - 1 \right) \psi_1\right](q, E) + 1 \right) \, f(E) \, \d q \, \d E = 0.
\]
In order for this equation to be satisfied for any choice of $f$,
it is necessary that
\[
  \forall E > E_0, \qquad
  \int_{\torus} \left[ \frac{1}{\nu} \left( \frac{1}{\beta} \derivative{1}[\psi_1]{E}(q, E) - \psi_1(q, E) \right) + 1 \right] \d q = 0,
\]
which, by substituting the expression of $\psi_1$ given by~\eqref{eq:solution_psi1},
gives
\begin{align}
  0 &= \int_{\torus} \left[\frac{1}{\nu^2} \left( \frac{1}{\beta} \derivative{1}{E} \left(s_{\nu}(q, E) \, \derivative*{1}[\psi_0]{E}(E) \right) - s_{\nu}(q, E) \, \derivative*{1}[\psi_0]{E}(E) \right) + 1 \right] \d q \notag \\
    &= \frac{1}{\nu^2} \left( \frac{1}{\beta}\derivative*{1}{E} \left( \derivative*{1}[\psi_0]{E}(E)  S_{\nu}(E)  \right) - \derivative*{1}[\psi_0]{E}(E) S_{\nu}(E) \right) + 2 \pi.
      \label{eq:eq_phi0_times_Snu}
\end{align}
The latter equation is similar to that obtained for the Langevin dynamics in~\cite{MR2427108}.
Viewed as an ODE for $\derivative*{1}[\psi_0]{E}(E)  S_{\nu}(E)$,
equation \cref{eq:eq_phi0_times_Snu} admits the general solution
\begin{align}
  \label{eq:expression_derivative}
  \derivative*{1}[\psi_0]{E}(E)  S_{\nu}(E) = 2\pi \nu^2 + C \e^{\beta E},
\end{align}
for any constant $C$.
A \revision{necessary} condition for $\overline{\phi}_0$ to belong to~$H^1(\mu)$
is that $C = 0$ and that $\overline{\phi}_0$ be continuous at the homoclinic orbit $H(q, p) = E_0$,
which leads to~\eqref{eq:expression_psi0}. It is in fact possible to prove that $\overline{\phi}_0$ is in~$H^1(\mu)$, see
Appendix~\ref{sec:estimates_underdamped_limit}.

\begin{remark}
The above formal calculations,
which could be replicated at the level of the backward Kolmogorov equation associated with the GL1 dynamics,
suggest that the Hamiltonian of the rescaled process~$H(q^{\gamma}(t/\gamma), p^{\gamma}(t/\gamma))$,
where $(q^{\gamma}, p^{\gamma}, z^{\gamma})$ is the solution to~\eqref{eq:markovian_approximation},
converges weakly to a Markov process on a graph,
with a generator similar to that in the case of underdamped Langevin dynamics; see~\cite{MR1722275,MR2394704,pavliotis2011applied}.
\end{remark}

\section{Conclusions}%
\label{sec:conclusions}
In this work,
we studied quasi-Markovian approximations of the GLE,
and we scrutinized in particular two finite-dimensional models of the noise:
the scalar Ornstein--\revision{Uhlenbeck} noise and the harmonic noise.
For the former model,
we obtained decay estimates with explicit scalings with respect to the parameters,
and we investigated the asymptotic behavior of the associated effective diffusion coefficient in several limits of physical relevance.
We also employed an efficient Fourier/Hermite spectral method to verify most of our findings numerically,
thereby complementing previous numerical works~\cite{igarashi1988non,igarashi1992velocity} on the subject.

Exciting questions remain open for future work.
\begin{itemize}
\item On the theoretical front,
it is not clear whether a direct $\lp{2}{\mu}$ hypocoercivity approach of the type introduced in~\cite{MR2576899,MR3324910} can be applied to the GLE.
If this was the case,
we are hopeful that
the approach could be replicated at the discrete level to obtain a hypocoercivity estimate for the Fourier/Hermite numerical method,
which would enable the calculation of bounds on the numerical error, as done in~\cite{roussel2018spectral} for Langevin dynamics.
It would also be interesting to investigate whether the approach of~\cite{BFLS20},
based on Schur complements,
could be generalized in order to more directly obtain the resolvent bounds~\cref{eq:resolvent_bound_GL1,eq:resolvent_bound_GL2}.
Finally, it would be interesting to investigate the longtime behavior of more general generalized Langevin equations and,
in particular, the application of hypocoercivity techniques to the case of arbitrary stationary Gaussian noise processes.
In principle, this would require taking into account an infinite number of auxiliary Ornstein-Uhlenbeck processes
and it is related to the problem of stochastic realization theory~\cite{LindquistPicci1985}.

\item On the numerical front,
it would be interesting to investigate carefully the underdamped limit of systems in dimension greater or equal to~2 with Monte Carlo simulations,
which could be made more efficient with the variance reduction technique based on control variates recently developed in~\cite{roussel2017perturbative}.
\end{itemize}

\paragraph{Acknowledgements.}
The authors are grateful to the anonymous referees for their careful reading of our work and their very useful suggestions.
The work of G.S.\ was partially funded by the European Research Council (ERC) under the European Union's Horizon 2020 research and innovation programme (grant agreement No 810367), and by the Agence Nationale de la Recherche under grant ANR-19-CE40-0010-01 (QuAMProcs).
The work of G.P.\ and U.V. was partially funded by EPSRC through grants number EP/P031587/1, EP/L024926/1, EP/L020564/1 and EP/K034154/1.
\revision{The work of U.V.\ was partially funded by the Fondation Sciences Mathématique de Paris (FSMP),
through a postdoctoral fellowship in the ``mathematical interactions'' program.}
The work of G.P.\ was partially funded by JPMorgan Chase \& Co.
Any views or opinions expressed herein are solely those of the authors listed,
and may differ from the views and opinions expressed by JPMorgan Chase \& Co.\ or its affiliates.
This material is not a product of the Research Department of J.P.\ Morgan Securities LLC.
This material does not constitute a solicitation or offer in any jurisdiction.

\appendix
\setcounter{secnumdepth}{4}
\renewcommand*\theparagraph{(\roman{paragraph})}

\section{Confirmation of the rate of convergence in the quadratic case}%
\label{sec:confirmation_of_the_rate_of_convergence_in_the_quadratic_case}

To assess the sharpness of the lower bounds on the convergence rate provided by~\cref{thm:hypocoercivity_h1},
we consider the generalized Langevin dynamics confined by the quadratic potential $V(q) = k q^2/2$ (with $k>0$) in $\cX = \real$.
In this case, \cref{eq:markovian_approximation} can be written as~\eqref{eq:OU_form_GLE} with
\begin{equation}
  \label{eq:matrix_D}
  \mat D = \begin{pmatrix} 0 & 1 & 0 \\ -k & 0 & \dps \frac{\sqrt{\gamma}}{\nu} \\ 0 & \dps -\frac{\sqrt{\gamma}}{\nu} & \dps -\frac{1}{\nu^2} \\\end{pmatrix} = \mat A + \frac{\sqrt{\gamma}}{\nu} \mat B + \frac{1}{\nu^2} \mat C,
\end{equation}
with
\[
\mat A = \begin{pmatrix} 0 & 1 & 0 \\ -k & 0 & 0 \\ 0 & 0 & 0 \\\end{pmatrix}, \qquad \mat B = \begin{pmatrix} 0 & 0 & 0 \\ 0 & 0 & 1 \\ 0 & -1 & 0 \\\end{pmatrix}, \qquad \mat C = \begin{pmatrix} 0 & 0 & 0 \\ 0 & 0 & 0 \\ 0 & 0 & -1 \\\end{pmatrix}.
\]
In order to determine the elements in~\eqref{eq:sigma_OU_form_GLE}, it suffices to compute the three eigenvalues of~$\mat D$, and combine them linearly with positive coefficients. This means that the spectral bound corresponds to the smallest absolute value of the real parts of the eigenvalues of~$\mat D$. For the model GL1, the eigenvalues $x_j$ (for $1 \leq j \leq 3$) are the roots of the polynomial $p(x) = \mathrm{det}(x \mat I - \mat D)$, which reads
\begin{equation}
  \label{eq:charact_pol}
  p(x) = x^3 + \frac{x^2}{\nu^2} + \left(\frac{\gamma}{\nu^2} + k \right)x + \frac{k}{\nu^2}.
\end{equation}

The spectral gap is easily seen to be a continuous function of~$(\gamma,\nu) \in (0,+\infty) \times (0,+\infty)$, with positive values.
To obtain the scaling of the spectral gap as a function of the parameters, it therefore suffices to consider the various limiting regimes where at least one of the parameters goes to~0 or~$+\infty$. The expression~\eqref{eq:matrix_D} suggests that the eigenvalues can be expanded in series of (inverse or fractional) powers of~$\gamma$ and~$\nu$. The leading order term depends on the asymptotic regime which is considered. We start by investigating regimes where only one of the parameters goes to~0 or~$+\infty$, and then discuss regimes where both parameters are varied.
The organization of this appendix is illustrated in~\cref{figure:limits_appendix}.
\begin{figure}[ht!]
  \centering
  \resizebox{.94\textwidth}{!}{%
  \begin{tikzpicture}[xscale=14.5,yscale=8]
    \draw[black,thick] (.7,.7) -- (1,1);
    \draw[black,thick] (.7,.3) -- (1,0);
    \node[rotate=35,black] at (.8, .76) {$\nu^2 = \gamma$};
    \node[rotate=-30,black] at (.8, .25) {$\nu^2 = 1/\gamma$};

    \draw[thick,->] (0,0) -- (1.05,0);
    \draw[-] (0,0) -- (1,0);
    \draw[thick,->] (0,0) -- (0,1.1);
    \draw[dashed,thick] (1,0) -- (1,1);
    \draw[dashed,thick] (0,1) -- (1,1);
    \node[below left] at (0, 0) {$0$};
    \node[below] at (1, 0) {$1$};
    \node[left] at (0, 1) {$1$};
    \node[below] at (1.05, -.01) {$\dps \frac{\gamma}{1 + \gamma}$};
    \node[left] at (0, 1.1) {$\dps \frac{\nu^2}{1 + \nu^2}$};

    \node at (.5, .94) {\cref{par:limit_nu_to_infty_with_gamma_fixed_}: $\dps \lambda \approx \frac{\gamma}{2k\nu^4}$};
    \node[rotate=90] at (.96, .5) {\cref{par:limit_gamma_to_infty_for_nu_0_fixed_}: $\dps \lambda \approx \frac{k}{\gamma}$};
    \node[rotate=90] at (.04, .5) {\cref{par:limit_gamma_to_0_with_nu_fixed_}: $\dps \lambda \approx \frac{\gamma}{2(k\nu^4 + 1)}$};
    \node at (.5, .08) {\cref{par:limit_nu_to_0_for_gamma_fixed_}: $\dps \lambda \approx \frac{\gamma}{2}\left(1-\mathbbm{1}_{\gamma^2 > 4k}\sqrt{1-\frac{4k}{\gamma^2}}\right)$};
    \node at (.03, .95) {\cref{par:joint_limits_gammato0_and_nuto_infty_or_nu_to_infty_with_gamma_nu_2_to_0_}};
    \node at (.9, .96) {\cref{par:joint_limits_gammato0_and_nuto_infty_or_nu_to_infty_with_gamma_nu_2_to_0_}};
    \node at (.96, .1) {\cref{par:joint_limits_gammato_infty_with_gamma_nu_2_to_infty_and_gamma_nu_2_to_infty_}};
    \node at (.96, .9) {\cref{par:joint_limits_gammato_infty_with_gamma_nu_2_to_infty_and_gamma_nu_2_to_infty_}};
    \node at (.88, .04) {\cref{par:joint_limits_nu_to_0_and_gamma_to_infty_with_gamma_nu_2_to_0_}};
    \node at (.04, .05) {\cref{par:limit_g_to_0_and_nu_to_0_}};
  \end{tikzpicture}}
  \caption{%
      Illustration of the different limits considered in this appendix.
      The joint limit $\gamma \to \infty$ and $\nu \to \infty$ with $\gamma / \nu^2$ bounded from above and below,
      as well as the joint limit $\gamma \to \infty$ and $\nu \to 0$ with $\gamma \nu^2$ bounded from above and below,
      are omitted for conciseness.
  }%
  \label{figure:limits_appendix}
\end{figure}
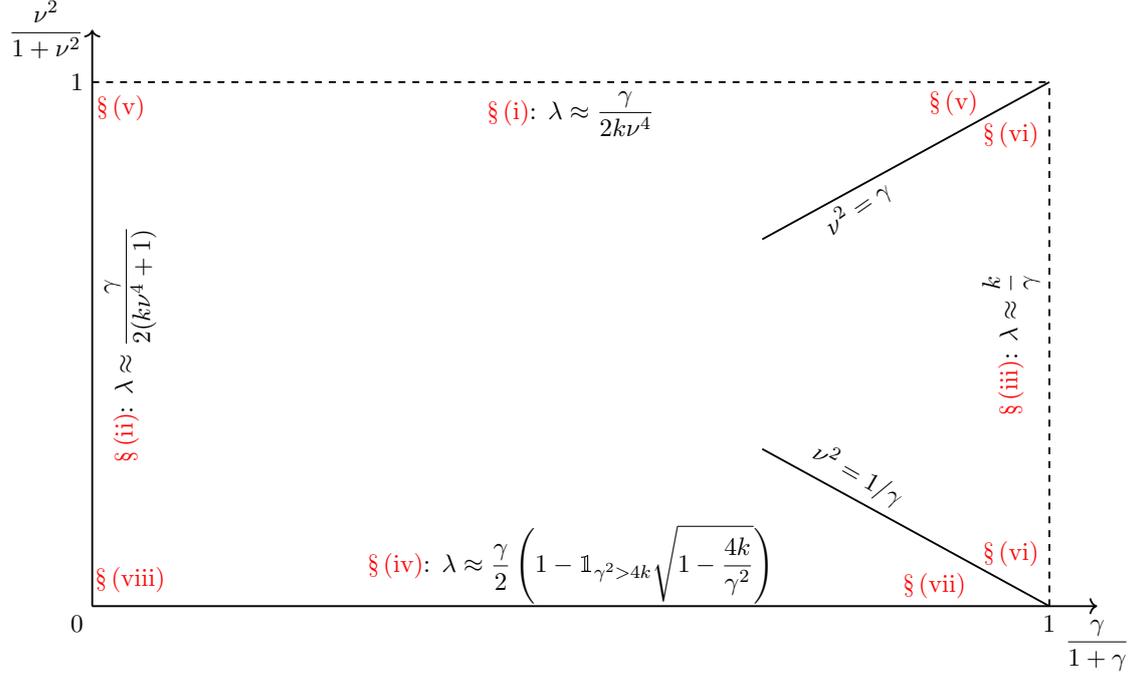

\paragraph{Limit $\nu \to +\infty$ with $\gamma$ fixed.}%
\label{par:limit_nu_to_infty_with_gamma_fixed_}
We denote by~$\varepsilon = \nu^{-1}$ and rewrite~\eqref{eq:matrix_D} as $\mat A + \eps \sqrt{\gamma}\mat B + \eps^2 \mat C$. The matrix~$\mat A$ is diagonalizable and its eigenvalues~0 and~$\pm \ri \sqrt{k}$ are isolated and non-degenerate. Perturbation theory~\cite[Chapter~II]{Kato} then shows that the eigenvalues~$x_j(\eps)$ are analytic functions of~$\eps$ for~$\eps$ sufficiently small. We write
\begin{equation}
  \label{eq:analytic_expansion_x_j}
x_j(\eps) = x_j^0 + x_j^1 \eps + x_j^2 \eps^2 + \dots
\end{equation}
To identify the coefficients~$x_j^k$, we plug the above expansion in the characteristic polynomial, which reads
\begin{equation}
  \label{eq:charact_pol_reference}
  p(x) = x^3 + \eps^2 x^2 + \left(\gamma\eps^2 + k \right)x + k \eps^2,
\end{equation}
and identify terms with the same powers of~$\eps$ in~$p(x_j(\eps) ) = 0 $. Some straightforward computations show that~$x_j^{2n+1} = 0$ (in fact, the expansion in~\eqref{eq:analytic_expansion_x_j} could be immediately restricted to even powers of~$\eps$ by~\cite[Theorem~XII.2]{ReedSimon4} since the coefficients of the polynomial are analytic in~$\eps^2$), and
\[
  x_j^2 = - \frac{k+\gamma x_j^0+ (x_j^0)^2}{3(x_j^0)^2 + k},
  \qquad
  x_j^4 = - \frac{(\gamma+2x_j^0)x_j^2+ 3x_j^0 (x_j^2)^2}{3(x_j^0)^2 + k},
\]
\revision{so}
\begin{equation}
  \label{eq:x_j_nu_infty}
  x_1(\eps) = -\eps^2 + \bigo{\eps^4},
  \qquad
  x_2(\eps) = \ri \sqrt{k}\left(1 + \frac{\gamma}{2k} \eps^2 - \frac{\gamma^2}{8k^2} \eps^4\right) - \frac{\gamma}{2k}\eps^4 + \bigo{\eps^6},
\end{equation}
and a similar expression for~$x_3(\eps)$ with an imaginary part of opposite sign. This shows that the spectral gap~$\lambda$ is $\gamma/(2k\nu^4) + \bigo{\nu^{-6}}$ as $\nu \to +\infty$ with~$\gamma>0$ fixed.

\paragraph{Limit $\gamma \to 0$ with $\nu$ fixed.}%
\label{par:limit_gamma_to_0_with_nu_fixed_}
We denote by~$\varepsilon = \sqrt{\gamma}$ and rewrite~\eqref{eq:matrix_D} as $\mat A + \nu^{-2}\mat C + \eps \nu^{-1}\mat B$. The dominant part~$\mat A + \nu^{-2} \mat C$ is diagonalizable with isolated and non-degenerate eigenvalues~$x_1^0 = -\nu^{-2}$, $x_2^0 = \ri \sqrt{k}$ and~$x_3^0 = -\ri \sqrt{k}$. The eigenvalues~$x_j(\eps)$ of~\eqref{eq:matrix_D} are therefore analytic in~$\eps$ for~$\eps$ small enough, and an expansion of the form~\eqref{eq:analytic_expansion_x_j} also holds in this case. By an analysis similar to the one leading to~\eqref{eq:x_j_nu_infty}, we obtain $x_j^1 = x_j^3 = 0$ and $x_j^2 = -x_j^0/(3\nu^2(x_j^0)^2+2x_j^0+k\nu^2)$, so
\begin{equation}
  \label{eq:x_gamma_0}
  x_1(\eps) = -\frac{1}{\nu^2} + \bigo{\gamma}, 
  \qquad
  x_2(\eps) = \ri\sqrt{k}-\frac{1-\ri \sqrt{k} \nu^2}{2(1+k\nu^4)} \gamma + \bigo{\gamma^2},
\end{equation}
and a similar expression for~$x_3(\eps)$ with an imaginary part of opposite sign. This shows that the spectral gap~$\lambda$ is $\gamma/(2(1+k\nu^4)) + \bigo{\gamma^{2}}$ as $\gamma \to 0$ with~$\nu>0$ fixed.

\paragraph{Limit $\gamma \to +\infty$ for $\nu>0$ fixed.}%
\label{par:limit_gamma_to_infty_for_nu_0_fixed_}
In order to treat the situation when $\gamma$ diverges, we introduce $\eps = \gamma^{-1/2}$ and rewrite~\eqref{eq:matrix_D} as $\eps^{-1} \left[ \nu^{-1}\mat B + \eps \left(\mat A + \nu^{-2}\mat C\right) \right]$. The eigenvalues~$x_j(\eps)$ of~$\mat D$ are then obtained by rescaling the eigenvalues~$y_j(\eps)$ of $\nu^{-1}\mat B + \eps \left(\mat A + \nu^{-2}\mat C\right)$ as $x_j(\eps) = y_j(\eps)/\eps$. The eigenproblem associated with $\nu^{-1}\mat B + \eps \left(\mat A + \nu^{-2}\mat C\right)$ has the form of a standard perturbation problem, with associated characteristic polynomial 
\[
  P(y) = y^3 + \eps\nu^{-2}y^2 + (\nu^{-2}+k\eps^2)y + k\nu^{-2}\eps^3.
\]
The dominant part~$\nu^{-1} \mat B$ is diagonalizable with isolated and non-degenerate eigenvalues~$y_1^0 = 0$, $y_2^0 = \ri/\nu$ and~$y_3^0 = -\ri/\nu$. The eigenvalues~$y_j(\eps)$ of~ $\nu^{-1}\mat B + \eps \left(\mat A + \nu^{-2}\mat C\right)$ are therefore again analytic in~$\eps$ for~$\eps$ small enough, and an expansion of the form~\eqref{eq:analytic_expansion_x_j} still holds, with~$x$ replaced by~$y$. By gathering terms with the same powers of~$\eps$ in~$P(y_j(\eps)) = 0$, we find after some computations that
\[
y_1(\eps) = -k \gamma^{-3/2}+ \bigo{\frac{1}{\gamma^{5/2}}}, \qquad y_2(\eps) = \frac{\ri}{\nu} - \frac{1}{2\nu^2 \sqrt{\gamma}} + \bigo{\frac1\gamma},
\]
and a similar expression for~$y_3(\eps)$ with an imaginary part of opposite sign. This shows that the spectral gap~$\lambda$ is $k/\gamma + \bigo{\gamma^{-2}}$ as $\gamma \to +\infty$ with~$\nu>0$ fixed.

\paragraph{Limit $\nu \to 0$ for $\gamma$ fixed.}%
\label{par:limit_nu_to_0_for_gamma_fixed_}
We introduce here $\eps = \nu$ and rewrite~\eqref{eq:matrix_D} as $\eps^{-2} \left[ \mat C + \eps \sqrt{\gamma} \mat B + \eps^2 \mat A \right]$. As in the previous situation, we look for the eigenvalues~$y_j(\eps)$ of~$\mat C + \eps \sqrt{\gamma} \mat B + \eps^2 \mat A$, whose characteristic polynomial reads 
\begin{equation}
  \label{eq:P_y_nu_0}
P(y) = y^3 + y^2 + (\gamma\eps^2 + k\eps^4)y + k\eps^4.
\end{equation}
The dominant part~$\mat C$ of the eigenvalue problem is diagonalizable with eigenvalues~$y_1^0 = -1$ and~$y_2^0 = y_3^0 = 0$. Since the latter eigenvalue is doubly degenerate, the results of~\cite[Chapter~II]{Kato} show, by reducing the eigenvalue problem to the subspace associated with~$y_2(\eps)$ and~$y_3(\eps)$, that
$y_1(\eps)$ is analytic in~$\eps$ for~$\eps$ sufficiently small, while~$y_2(\eps)$ and~$y_3(\eps)$ are analytic in~$\sqrt{\eps}$:
\begin{equation}
  \label{eq:y_2_nu_0}
y_2(\eps) = y_2^{1/2} \sqrt{\eps} + y_2^1 \eps + y_2^{3/2} \eps^{3/2} + \dots,
\end{equation}
and a similar expansion for~$y_3(\eps)$. Note that $y_1(\eps) = -1 + \bigo{\eps}$, so the spectral gap is determined by the lowest order terms in the expansions of~$\lambda_2(\eps)$ and~$\lambda_3(\eps)$. By plugging~\eqref{eq:y_2_nu_0} into~\eqref{eq:P_y_nu_0} we find that the first non-zero terms in the expansions satisfy
\[
(y_j^2)^2 + \gamma y_j^2 + k = 0, \qquad y_j^{5/2}(2y_j^2+\gamma) = 0.
\]
We need to distinguish three situations at this stage:
\begin{enumerate}[(i)]
\item When $\gamma^2 > 4k$, one finds $y_2^2 = -\frac\gamma2 \left(1+\sqrt{1-\frac{4k}{\gamma^2}}\right)$ and $y_3^2 = -\frac\gamma2 \left(1-\sqrt{1-\frac{4k}{\gamma^2}}\right)$, so~$y_j^{5/2} = 0$. In this case the spectral gap of the rescaled matrix $\mat C + \eps \sqrt{\gamma} \mat B + \eps^2 \mat A$ scales as~$\frac\gamma2 \left(1-\sqrt{1-\frac{4k}{\gamma^2}}\right) + \bigo{\nu}$ as $\nu \to 0$ with~$\gamma>0$ fixed.
\item When $\gamma^2 < 4k$, one finds $y_2^2 = -\frac\gamma2 \left(1-\ri \sqrt{\frac{4k}{\gamma^2}-1}\right)$ and $y_3^2 = -\frac\gamma2 \left(1+\ri \sqrt{\frac{4k}{\gamma^2}-1}\right)$, so~$y_j^{5/2} = 0$. In this case the spectral gap of the rescaled matrix scales as~$\frac{\gamma}{2}  + \bigo{\nu}$ as $\nu \to 0$ with~$\gamma>0$ fixed.
\item When $\gamma^2 = 4k$, one finds $y_2^2 = y_3^2 = -\frac\gamma2$, and actually $y_j^{5/2} = 0$ (because of the next order condition in~$\eps$ which reads $(y_j^{5/2})^2+(\gamma+2y_j^2)y_j^3 = 0$). The spectral gap of the rescaled matrix scales as in the previous case as~$\frac{\gamma}{2} + \bigo{\nu}$ as $\nu \to 0$.
\end{enumerate}
In conclusion, the spectral gap~$\lambda$ of~$\mat D$ scales as $\frac{\gamma}{2}\left(1-\mathbbm{1}_{\gamma^2 > 4k}\sqrt{1-\frac{4k}{\gamma^2}}\right)+ \bigo{\nu^{3}}$.

\paragraph{Joint limits $\gamma\to0$ and $\nu\to+\infty$, or $\nu \to +\infty$ with $\gamma/\nu^2 \to 0$.}%
\label{par:joint_limits_gammato0_and_nuto_infty_or_nu_to_infty_with_gamma_nu_2_to_0_}
We denote by $\eta = \frac{\sqrt{\gamma}}{\nu}$ and $\varepsilon = \frac{1}{\nu^2}$. These two parameters are small in the asymptotic regime which is considered. The eigenvalues of the leading order matrix~$\mat A$ in~\eqref{eq:matrix_D} are isolated,
so, by perturbation theory (upon writing the spectral projector using contour integrals and expanding the resolvents which appear), we can write the eigenvalues of~$\mat D = \mat A + \eta \mat B + \eps \mat C$ as
\begin{equation}
  \label{eq:x_j_double_limit}
x_j(\eta,\eps) = x_j^0 + x_j^{1,0} \eta + x_j^{0,1} \eps + x_j^{2,0} \eta^2 + x_j^{1,1} \eta\eps + x_j^{0,2} \eps^2 + \dots,
\end{equation}
with $x_1^0 = 0$, $x_2^0 = \ri \sqrt{k}$ and $x_3^0 = -\ri \sqrt{k}$. We next identify terms with the same powers of~$\eta,\eps$ in~$p(x_j(\eta,\eps) )=0$ with~$p(x) = x^3 + \eps x^2 + (\eta^2+k)x+k\eps$. Some straightforward computations show that
\[
  x_1(\eta,\eps) = -\eps + \frac{1}{k} \eta^2 \eps + \bigo{\eta^2\left(\eps^2 + \eta^2\right)},
  \qquad
  x_2(\eta,\eps) = \ri \left(\sqrt{k} + \frac{\eta^2}{2\sqrt{k}}\right) - \frac{1}{2k} \eta^2\eps + \bigo{\eps^4 + \eta^4},
\]
and a similar expansion for~$x_3(\eta,\eps)$ with imaginary parts of opposite signs. Note that the expressions of~$x_1,x_2$ coincide with~\eqref{eq:x_j_nu_infty}, as well with~\eqref{eq:x_gamma_0} in the limit~$\nu \to +\infty$. It can be shown that~$x_j^{2m,0} \in \ri \mathbb{R}$ and~$x_j^{2m+1,0} = 0$ for any~$m \geq 1$ and $1 \leq j \leq 3$ (it suffices in fact to set~$\eps=0$ to identify these coefficients, which amounts to considering the polynomial~$x(x^2+\eta^2+k)$, whose roots are~0 and $\pm\ri \sqrt{k+\eta^2}$). Similarly, $x_j^{0,m} = 0$ for $m \geq 2$ and $1 \leq j \leq 3$, as seen by setting $\eta = 0$ and factorizing~$p(x)$ as $(x+\varepsilon)(x^2+k)$. The spectral gap is therefore $\dps \frac{\gamma}{2k \nu^4} + \bigo{\eps^2 \eta^2}$.

\paragraph{Joint limits $\gamma\to +\infty$ with $\gamma/\nu^2 \to +\infty$ and $\gamma \nu^2 \to +\infty$.}%
\label{par:joint_limits_gammato_infty_with_gamma_nu_2_to_infty_and_gamma_nu_2_to_infty_}
We denote by $\eta = \nu/\sqrt{\gamma}$ and $\varepsilon = 1/(\nu\sqrt{\gamma})$ the two small parameters in the asymptotic regime considered here. We write~$\mat D$ as~$\eta^{-1}\left(\mat B + \eta \mat A + \eps \mat C\right)$, the characteristic polynomial associated with~$\mat B + \eta \mat A + \eps \mat C$ being $P(y) = y^3 + \eps y^2 + (k\eta^2 +1)y + k\eps\eta^2$. We can use an argument similar to the one used to write~\eqref{eq:x_j_double_limit} with $y_1^0 = 0$, $y_2^0 = \ri$ and~$y_3^0 = -\ri$ to obtain the following expansion for the zeros of the polynomial~$P(y)$:
\[
  y_1(\eta,\eps) = - k\eta^2\eps + \bigo{\eta^4 + \eta^4},
  \qquad
  y_2(\eta,\eps) = \ri - \frac{\eps}{2} + \bigo{\eps^2 + \eta^2},
\]
and a similar expansion for~$y_3(\eta,\eps)$ with imaginary parts of opposite signs.
Note that it can be shown that the remainders for~$y_1(\eta,\eps)$ involve only powers of~$\eta^2$ (since the polynomial itself is analytic in~$\eta^2$, see~\cite[Theorem~XII.2]{ReedSimon4})
and there are no remainders of the form~$\eps^n$ or~$\eta^n$ (as seen by setting respectively $\eta=0$ and $\varepsilon=0$ in the expression of~$P(y)$).
For~$y_2$, it can similarly be shown that~$y_2^{2m,0} \in \ri \mathbb{R}$ and~$y_2^{2m+1,0} = 0$ for any~$m \geq 1$.
The spectral gap is therefore $\dps k\eta\eps + \bigo{\eta \eps^2}$ with $k\eta\eps = k/\gamma$,
which coincides with the limit obtained for~$\gamma\to+\infty$ with~$\nu$ fixed.

\paragraph{Joint limits $\nu \to 0$ and $\gamma \to +\infty$ with $\gamma \nu^2 \to 0$.}%
\label{par:joint_limits_nu_to_0_and_gamma_to_infty_with_gamma_nu_2_to_0_}
It is difficult to rely on a perturbative approach here. Indeed, denoting by $\eta = \nu\sqrt{\gamma}$ and $\varepsilon = \nu^2$ the two small parameters in the asymptotic regime considered here, we write~$\mat D$ as~$\eps^{-1}\left(\mat C + \eta \mat B + \eps \mat A\right)$. The dominant term~$\mat C$ however has degenerate eigenvalues,
so it is not clear whether the eigenvalues can be expanded as in~\eqref{eq:x_j_double_limit} (in fact, it is in general not true that this is the case, see~\cite[Section~II.5.7]{Kato}). 
We use for this case another argument, based on a localization of the zeros of the characteristic polynomial~$P(y) = y^3 + y^2 + (k\eps^2 + \eta^2)y + k\eps^2$ associated with~$\mat C + \eta \mat B + \eps \mat A$. We first note that the discriminant of the third order polynomial~$P$ is
\[
  \left(k\eps^2 + \eta^2\right)^2 \left[1- 4 \left(k\eps^2 + \eta^2\right)\right] - \left(4 + 9 k\eps^2 - 18 \eta^2\right) k\eps^2
  = -4k\eps^2 + \eta^4 + \bigo{\eps^4 + \eta^2 \eps^2 + \eta^6}.
\]
Since $\eta^4/\eps^2 = \gamma^2 \to +\infty$, the discriminant is positive in the limiting regime we consider,
so that~$P$ admits three real zeros. The polynomial~$p$ in~\eqref{eq:charact_pol_reference} therefore also admits three real zeros. We next compute
\[
p\left(-\frac{1}{\nu^2} + \delta \gamma \right) = \frac{(\delta-1)\gamma}{\nu^4} + \delta\gamma \left( \frac{(1-2\delta)\gamma}{\nu^2} + \delta^2\gamma^2 + k\right).
\]
When $\delta = 1$, the right-hand side of the above equality is $-\gamma^2/\nu^2<0$ at dominant order, while, for $\delta = 1+\eps$ with a fixed small parameter~$\eps > 0$, the right-hand side scales as $\eps\gamma/\nu^4 > 0$. This shows that one of the roots is in the interval~$[-\nu^{-2}+\gamma,-\nu^{-2}+(1+\eps)\gamma]$ for~$\gamma$ large enough and $\nu,\gamma\nu^2$ sufficiently small. To localize the second root, we consider
\[
\frac{\nu^2}{\gamma^2} p\left(-(1 + \delta)\gamma\right) = \delta(1+\delta) - \gamma\nu^2(1+\delta)^3-\frac{k\nu^2}{\gamma}(1+\delta) + \frac{k}{\gamma^2}.
\]
The right-hand side converges to~$\delta(1+\delta)$ as $\gamma\to+\infty$ and~$\nu\to 0$ with $\gamma\nu^2\to 0$, which shows that, for any $\varepsilon\in (0,1]$, the second root is in the interval~$[-(1+\eps)\gamma,-(1-\eps)\gamma]$ for~$\gamma$ large enough and $\nu,\gamma\nu^2$ sufficiently small .
Finally,
\[
  \begin{aligned}
    \nu^2 p\left(-\frac{k}{\gamma}-\frac{(1+\delta)k^2}{\gamma^3}\right)
    = -\delta\frac{k^2}{\gamma^2} & - \frac{k\nu^2}{\gamma}\left(k +\frac{(1+\delta)k^2}{\gamma^2}\right)
    - \frac{\nu^2}{\gamma^3}\left(k +\frac{(1+\delta)k^2}{\gamma^2}\right)^3 \\
    & + \frac{(1+\delta)k^3}{\gamma^4}\left(2 +\frac{(1+\delta)k}{\gamma^2}\right).
\end{aligned}
\]
The right-hand side scales as $-\delta k^2/\gamma^2$ as $\gamma\to+\infty$ and~$\nu\to 0$ with $\gamma\nu^2\to 0$, which shows that, for any $\varepsilon\in (0,1]$, the third root is in the interval~$[-k/\gamma-(1+\eps)k^2/\gamma^2,-k/\gamma-(1-\eps)k^2/\gamma^2]$ for~$\gamma$ large enough and $\nu,\gamma\nu^2$ sufficiently small. The spectral gap is dictated by the location of this eigenvalue, and scales therefore as $k/\gamma$ in the limit which is considered here.

\paragraph{Limit $\gamma \to 0$ and $\nu \to 0$.}%
\label{par:limit_g_to_0_and_nu_to_0_}
Here also, it is difficult to rely on a perturbative approach,
so we use an alternative method.
Employing the same \revision{reasoning} as in \cref{par:joint_limits_nu_to_0_and_gamma_to_infty_with_gamma_nu_2_to_0_},
it is possible to show
that the \revision{characteristic} polynomial~$p$ admits only one real root in this limit,
\revision{denoted by $x_1$}, and that this root is, for $\eps \in (0,1]$ fixed, in the interval $[- \nu^{-2} + (1-\eps)\gamma, - \nu^{-2} + (1+\eps) \gamma]$ provided that $\gamma \nu^2$ is sufficiently small.
Inspired by~\eqref{eq:x_gamma_0},
we show that the other two, complex conjugate roots,
which we denote by $x^{\pm}$, with the superscript indicating the sign of the imaginary part,
are close to $\hat x^{\pm} := - \gamma/2 \pm \ri \sqrt{k}$.
To this end, we calculate
\[
  p(\hat x^{\pm}) = k \gamma \pm \frac{3}{4} \ri \sqrt{k} \gamma^{2} - \frac{\gamma^{3}}{8} - \frac{\gamma^{2}}{4 \nu^{2}}.
\]
Given the factorization $p(x) = (x - x_1)(x - x^+)(x - x^-)$,
this implies
\begin{equation}
    \label{eq:bound_xplus_xminus}
     (\hat x^{\pm} - x^{+}) (\hat x^{\pm} - x^{-}) = \frac{\dps k \gamma \pm \frac{3}{4} \ri \sqrt{k} \gamma^{2} - \frac{\gamma^{3}}{8} - \frac{\gamma^{2}}{4 \nu^{2}}}{\hat x^{\pm} - x_1}.
\end{equation}
Since $|\hat x^{\pm} - x_1|$ scales as $\nu^{-2}$,
the modulus of the right-hand side scales as $k \gamma \nu^2 - \frac{\gamma^2}{4}$ in the limit as $\gamma \to 0$ and $\nu \to 0$.
This implies
\[
    |\hat x^{\pm} - x^{\pm}|^2 \leq |\hat x^{\pm} - x^{+}| \, |\hat x^{\pm} - x^{-}| = \bigo {\gamma \nu^2 + \gamma^2},
\]
so $|\hat x^{\pm} - x^{\pm}| = \bigo {\sqrt{\gamma} \nu + \gamma}$.
Therefore, $|\hat x^{\pm} - x^{\mp}|$ converges to~$2\sqrt{k}$ in the limit~$\gamma,\nu \to 0$,
so~\eqref{eq:bound_xplus_xminus} implies that in fact $|\hat x^{\pm} - x^{\pm}| = \bigo {\gamma \nu^2 + \gamma^2}$, which is negligible in front of~$\gamma/2 = -\mathrm{Re}(\hat x^{\pm})$.
In conclusion,
the spectral gap scales as $\gamma/2$ in the limit considered here.

\begin{remark}
\revision{
The scaling of the exponential decay rate can also be obtained from explicit expressions for the roots of the characteristic polynomial,
as we illustrate below in the particular case of the limit~$\nu \to \infty$ with fixed $\gamma$.
For simplicity, we \revision{consider} $k = \gamma = 1$.
In this case, the characteristic polynomial is
\begin{equation*}
  p(x) = \frac{1}{\nu^2} \left( \nu^2 x^3 + x^2 + \left(1 +\nu^2 \right)x + 1 \right).
\end{equation*}
Letting $\lambda = \nu^2$,
we define
\[
  q(x) = \lambda x^3 + x^2 + \left(1 + \lambda \right)x + 1
  =: ax^3 + b x^2 + cx + d.
\]
Using the standard notation for expressing the roots of a cubic polynomial,
we define
\begin{align*}
  \Delta_0 &= b^2 - 3 ac = 1 - 3 \lambda ( 1 + \lambda) = -3 \lambda^2 - 3 \lambda + 1, \\
  \Delta_1 &= 2 b^3 - 9 abc + 27 a^2 d = 2 - 9 \lambda (1 + \lambda) + 27 \lambda^2 = 18 \lambda^2 - 9 \lambda + 2,
\end{align*}
and
\begin{align*}
  C = \sqrt[3]{\frac{\Delta_1 + \sqrt{\Delta_1^2 - 4 \Delta_0^3}}{2}}
  &= \sqrt[3]{\frac{18 \lambda^2 - 9 \lambda + 2 + \sqrt{108 \lambda^{6} + 324 \lambda^{5} + 540 \lambda^{4} - 432 \lambda^{3} + 81 \lambda^{2}}}{2}}.
\end{align*}
By a Taylor expansion, obtained by symbolic calculation, we calculate
\begin{align*}
  \frac{C}{\lambda}
  &= \sqrt{3}
  + \left( \frac{\sqrt{3}}{2} + 1\right) \frac{1}{\lambda}
  - \left( 3 + \frac{\sqrt{3}}{4} \right) \frac{1}{2\lambda^2}
  + \bigo{\frac{1}{\lambda^3}} \\
  &=: \hat C(\lambda) + \bigo{\frac{1}{\lambda^3}}.
\end{align*}
The roots of $q$ (and therefore also of $p$) are given by
\begin{align*}
  x_k &= - \frac{1}{3a}\left(b+\xi^kC+\frac{\Delta_0}{\xi^kC}\right), \\
      &= - \frac{1}{3}\left(\frac{1}{\lambda}+\xi^k \hat C+\frac{\Delta_0}{\xi^k \hat C \lambda^2} + \bigo{\frac{1}{\lambda^3}}\right),
    \qquad \xi = \frac{-1 + i \sqrt{3}}{2},
    \qquad k \in \{0,1,2\}.
\end{align*}
Carrying out the calculation of the leading order terms using symbolic calculations,
we obtain
\begin{align*}
  \mathrm {Re} (x_0) = - \frac{1}{\lambda} + \bigo{\frac{1}{\lambda^2}}, \quad
  \mathrm {Re} (x_1) = - \frac{1}{2\lambda^2} + \bigo{\frac{1}{\lambda^3}}, \quad
  \mathrm {Re} (x_2) = - \frac{1}{2\lambda^2} + \bigo{\frac{1}{\lambda^3}},
\end{align*}
so the decay rate scales as $-1/2 \nu^4$ in the limit,
which coincides with the result found in \cref{par:limit_nu_to_infty_with_gamma_fixed_}.
This example shows that, while feasible,
obtaining the scaling of the spectral gap from the explicit expressions for the roots of the characteristic polynomial
presents a level of difficulty similar to the one encountered above.
One advantage of our approach is that it can easily be extended to higher dimensions,
which is relevant when more auxiliary processes are considered in the GLE.
}
\end{remark}

\section{Longtime behavior for model GL2}%
\label{sec:longtime_behavior_for_model_gl2}

We present here elements on how to obtain results for GL2 similar to the ones proved for GL1 in Section~\ref{sec:convergence_of_the_gle_dynamics},
with choices of coefficients providing a modified $H^1(\mu)$ inner product and guaranteeing hypocoercivity and hypoelliptic regularization. We omit the details of the calculations, which are more cumbersome than for GL1.
Defining the generator associated with GL2 as
\begin{align*}
    \textstyle
    -\mathcal L
    & = \textstyle \nu^{-2} \, \alpha^2 \, \beta^{-1} \, \partial_{z_2}^* \, \derivative{1}{ {z_2} } \\
    & \ \ \textstyle + \nu^{-2} \, \alpha ( z_1 \, \derivative{1}{ {z_2} } - z_2 \, \derivative{1}{ {z_1} } )
    + \sqrt{\gamma} \, \nu^{-1} ( p \, \derivative{1}{ {z_1} } - z_1 \, \derivative{1}{p} )
    + ( \derivative*{1}[V]{q} \, \derivative{1}{p} - p \, \derivative{1}{q} ),
\end{align*}
where $\partial^*_{z_2} := \beta z_2 - \partial_{z_2}$, the approach outlined below leads to the resolvent bound
\begin{align}
  \label{eq:resolvent_bound_GL2}
  \norm{\mathcal L^{-1}}[\mathcal B\left(L^2_0(\mu)\right)] \leq C \max  \left( \gamma, \  \frac{1}{\gamma}, \  \frac{\nu^{4}}{\gamma}, \  \frac{\nu^{8}}{\alpha^{4} \gamma}, \  \frac{\gamma}{\alpha^{2}}, \  \frac{\gamma^{2} \nu^{2}}{\alpha^{4}}\right).
\end{align}
In contrast with our observations in the case of model GL1,
the scaling on the right-hand side of this equation appears not to be sharp.
Indeed, an explicit calculation of the scaling of the spectral bound in the case of a quadratic potential,
which is possible by using the same approach as in~\cref{sec:confirmation_of_the_rate_of_convergence_in_the_quadratic_case},
shows that the resolvent bound scales in fact as $\gamma$ in the limit as $\gamma \to \infty$ for fixed $\alpha$ and $\nu$,
and not as $\gamma^2$ as suggested by~\eqref{eq:resolvent_bound_GL2}.

Let us now make precise how we obtain~\eqref{eq:resolvent_bound_GL2}. Define the coefficients
\begin{align*}
    a_0 &= 2  A^{4} \min \left( 1, \alpha, \frac{\alpha^{2}}{\nu^{2}}, \frac{\alpha^{2}}{\sqrt{\gamma} \nu}\right), \\
    a_1 &= 2 A^{10} \min \left( 1, \alpha, \frac{1}{\nu^{2}}, \frac{\alpha^{4}}{\nu^{6}}, \frac{1}{\sqrt{\gamma} \nu}, \frac{\alpha^ {4}}{\gamma^{\frac{3}{2}} \nu^{3}}\right), \\
    a_2 &= 2 A^{14} \min \left( 1, \alpha, \gamma, \frac{\gamma}{\nu^{4}}, \frac{\alpha^{4} \gamma}{\nu^{8}}, \frac{1}{\sqrt{\gamma} \nu}, \frac{\alpha^{4}}{\gamma^{\frac{3}{2}} \nu^{3}}\right), \\
    a_3 &= 2 A^{16} \min \left( \gamma, \frac{1}{\gamma}, \frac{\gamma}{\nu^{4}}, \frac{\alpha^{2}}{\gamma}, \frac{\alpha^{4} \gamma}{\nu^{8}},\frac{\alpha^{4}}{\gamma^{2} \nu^{2}}\right),
\end{align*}
and
\begin{align*}
    b_0 &= A^{7} \min \left( \alpha, \frac{1}{\alpha}, \frac{\alpha^{3}}{\nu^{4}}, \frac{\alpha^{3}}{\gamma \nu^{2}}\right), \\
    b_1 &= A^{12} \min \left( \sqrt{\gamma} \nu,\frac{1}{\sqrt{\gamma} \nu}, \frac{\sqrt{\gamma}}{\nu^{3}}, \frac{\alpha^{4} \sqrt{\gamma}}{\nu^{7}}, \frac{\alpha^{4}}{\gamma^{\frac{3}{2}} \nu^{3}}\right), \\
    b_2 &= A^{15} \min \left( \gamma, \frac{1}{\gamma}, \frac{\gamma}{\nu^{4}}, \frac{\alpha^{2}}{\gamma}, \frac{\alpha^{4} \gamma}{\nu^{8}},\frac{\alpha^{4}}{\gamma^{2} \nu^{2}}\right).
\end{align*}
It is possible to show that, for any sufficiently small $A$ and for fixed $\beta$,
the following conditions are satisfied for any values of the parameters $\gamma$, $\nu$ and $\alpha$:
\begin{itemize}
    \item
        The following matrix is positive definite:
        \[
            \begin{pmatrix}
                a_0 & -b_0 & 0 & 0 \\
                -b_0 & a_1 & -b_1 & 0 \\
                0 & -b_1 & a_2 & -b_2 \\
                0 & 0 & -b_2 & a_3
            \end{pmatrix}
        \]
        Consequently, the inner product defined by polarization from the norm
        \begin{align*}
          \textstyle
          \iip{h}{h}
          & := \textstyle \norm{h}^2 + a_0 \norm{\derivative{1}[h]{ {z_2} }}^2 + a_1 \norm{\derivative{1}[h]{ {z_2} }}^2 + a_2 \norm{\derivative{1}[h]{p}}^2 + a_3 \norm{\derivative{1}[h]{q}}^2 \\
           &  \ \ \textstyle
           - 2 b_0 \ip{\derivative{1}[h]{ {z_2} }}{\derivative{1}[h]{ {z_1} }}
           - 2 b_1 \ip{\derivative{1}[h]{ {z_1} }}{\derivative{1}[h]{p}}
           - 2 b_2 \ip{\derivative{1}[h]{p}}{\derivative{1}[h]{q}},
        \end{align*}
        is equivalent to the usual $\sobolev{1}{\mu}$ inner product,
        by an inequality similar to~\eqref{eq:hypocoercivity:equivalence_with_sobolev_norm}.

    \item The operator $- \mathcal L$ is coercive in $H^1_0(\mu)$ endowed with the inner product $\iip{\dummy}{\dummy}$.
        More precisely, it holds for any $h \in H^1_0(\mu)$ that
        \[
            - \iip{h}{\mathcal Lh} \geq C(A) \lambda(\mu, \nu, \alpha) \iip{h}{h},
        \]
        where
        \begin{equation}
            \label{eq:rate_gl2}
            \lambda(\mu, \nu, \alpha) = \min
                \left( \gamma, \  \frac{1}{\gamma}, \  \frac{\gamma}{\nu^{4}}, \  \frac{\alpha^{4} \gamma}{\nu^{8}}, \  \frac{\alpha^{2}}{\gamma}, \  \frac{\alpha^{4}}{\gamma^{2} \nu^{2}}\right).
        \end{equation}
    \item
        For any $h \in L^2_0(\mu)$,
        it holds $\derivative*{1}[N_h]{t}(t) \leq 0$ for $t \in [0, 1]$,
        where $N_h(t)$ is defined by
        \begin{align*}
          \textstyle
          N_{h}(t)
          & =
          \textstyle \norm{\e^{t \mathcal L}h}^2
          + a_0 t \norm{\derivative{1}[\e^{t \mathcal L}h]{ {z_2} }}^2
          + a_1 t^3 \norm{\derivative{1}[\e^{t \mathcal L}h]{ {z_1} }}^2
          + a_2 t^5 \norm{\derivative{1}[\e^{t \mathcal L}h]{p}}^2
          + a_3 t^7 \norm{\derivative{1}[\e^{t \mathcal L}h]{q}}^2
          \nonumber \\
           & \ \ \textstyle
           - 2 b_0 t^2 \ip{\derivative{1}[\e^{t \mathcal L}h]{ {z_2} }}{\derivative{1}[\e^{t \mathcal L}h]{ {z_1} }}
           - 2 b_1 t^4 \ip{\derivative{1}[\e^{t \mathcal L}h]{ {z_1} }}{\derivative{1}[\e^{t \mathcal L}h]{p}}
           - 2 b_2 t^6 \ip{\derivative{1}[\e^{t \mathcal L}h]{p}}{\derivative{1}[\e^{t \mathcal L}h]{q}}.
        \end{align*}
      \end{itemize}
      This leads finally to~\eqref{eq:resolvent_bound_GL2}.

      \section{Technical results used in \texorpdfstring{\cref{sub:gle:the_underdamped_limit}}{the underdamped limit}}
\label{sec:estimates_underdamped_limit}

In this section,
we present technical results that are used in \cref{sub:gle:the_underdamped_limit}.
The following estimate is useful in motivating~\eqref{eq:phi_und}.

\begin{lemma}
  \label{lemma:auxiliary_underdamped}
  Assume that $c > 0$ is a constant,
  that $h$ is a smooth function on the torus $\torus$
  and that $g$ is a smooth, everywhere positive function on $\torus$.
  Then there exists a unique smooth solution~$f$ to the equation
  \begin{equation}
    \label{eq:underdamped_ode}
    g(q) \derivative*{1}[f]{q}(q) - c \, f(q) = h(q), \qquad q \in \torus.
  \end{equation}
  In addition,
  \begin{equation}
    \label{eq:underdamped_estimate_ode}
    \max_{q \in \torus} \abs{f(q) + \frac{h(q)}{c} + \frac{g(q) \derivative*{1}[h]{q}(q)}{c^2}}
    \leq \frac{1}{c^3} \max_{q \in \torus} \abs{g(q) \, (g \, \derivative*{1}[h]{q})'(q)}.
  \end{equation}
\end{lemma}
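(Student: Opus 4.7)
The plan is to construct the unique smooth periodic solution by an explicit integral representation along the flow of the autonomous ODE $\dot q = g(q)$, and then obtain the quantitative estimate by introducing a two-term ansatz $\widetilde f := -h/c - g h'/c^2$ and applying the same representation to the error $f - \widetilde f$.

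For existence and uniqueness, let $\phi_t(q)$ denote the flow of $\dot q = g(q)$ on $\torus$, which is well-defined and smooth for all $t \in \real$ since $g$ is smooth and uniformly positive on the compact set $\torus$. If $f$ is a smooth periodic solution of~\eqref{eq:underdamped_ode}, then $F(t) := f(\phi_t(q))$ satisfies $\dot F = g(\phi_t(q)) f'(\phi_t(q)) = c F(t) + h(\phi_t(q))$. Multiplying by $e^{-ct}$, integrating from $0$ to $T$, and letting $T \to \infty$ (using that $F$ is bounded because $f$ is periodic and $c>0$) yields the representation
\[
  f(q) = -\int_0^{\infty} e^{-ct} h(\phi_t(q)) \, \d t.
\]
Conversely, this formula defines a smooth periodic function of $q$, and a direct computation (differentiating under the integral sign and using $\frac{\d}{\d t}\phi_t = g \circ \phi_t$) shows it satisfies~\eqref{eq:underdamped_ode}. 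Uniqueness is immediate: if $u = f_1 - f_2$ is a difference of two periodic solutions, then $u \circ \phi_t = e^{ct} u$, and since the left side is uniformly bounded in $t$ while $e^{ct} \to \infty$, we must have $u \equiv 0$.

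For the estimate~\eqref{eq:underdamped_estimate_ode}, a direct computation gives
\[
  g \, \widetilde f' - c \, \widetilde f = -\frac{g h'}{c} - \frac{g (g h')'}{c^2} - c\left(-\frac{h}{c} - \frac{g h'}{c^2}\right) = h - \frac{g (g h')'}{c^2},
\]
so the residual is $g \widetilde f' - c \widetilde f - h = -g(g h')'/c^2$. Consequently $u := f - \widetilde f$ satisfies an equation of the same type as~\eqref{eq:underdamped_ode}, namely $g u' - c u = g(g h')'/c^2$, and applying the representation formula of the first step to $u$ we obtain
\[
  |u(q)| \leq \int_0^{\infty} e^{-ct} \, \d t \cdot \max_{\torus}\left| \frac{g(g h')'}{c^2} \right| = \frac{1}{c^3} \max_{\torus} \abs{g \, (g h')'},
\]
which is exactly the desired inequality. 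There is no real obstacle here; the only point requiring a little care is justifying smoothness and differentiation under the integral sign in the representation, which follows from the smoothness of $h$ and $\phi_t$ together with the exponential decay of $e^{-ct}$.
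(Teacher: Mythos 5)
Your proof is correct, and its central algebraic step --- introducing the ansatz $\widetilde f = -h/c - g h'/c^2$ and checking that the error $u = f - \widetilde f = f + h/c + gh'/c^2$ solves an equation of the same type with right-hand side $g(gh')'/c^2$ --- is exactly the paper's step (the paper calls this error $r$). Where you genuinely diverge is in the surrounding machinery. For existence and uniqueness, the paper solves the initial value problem on $[-\pi,\pi]$ with the integrating factor $\exp\bigl(-\int_{-\pi}^q c/g\bigr)$ and then selects the unique initial value $a^*$ for which the solution is periodic, whereas you use the Duhamel-type representation $f(q) = -\int_0^\infty e^{-ct}\, h(\phi_t(q))\, \d t$ along the flow of $\dot q = g(q)$; your route has the advantage of delivering existence, uniqueness and the quantitative bound in one stroke, since applying the same formula to $u$ immediately gives $|u| \le c^{-1}\cdot c^{-2}\max|g(gh')'|$. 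The paper instead concludes the estimate with a maximum-principle-type observation: at an extremum of $r$ the term $g\,r'$ vanishes, so $-c\,r = g(gh')'/c^2$ there, which requires no representation formula at all, only that $r$ is smooth and periodic (this is the same trick the paper uses to bound $s_\nu$ via~\eqref{eq:proof_bounds_underdamped}). One point you should make explicit in your version: differentiating under the integral sign requires the $q$-derivatives of $h(\phi_t(q))$ to be bounded \emph{uniformly in $t$}, not merely finite for each $t$; here this holds thanks to the identity $\partial_q \phi_t(q) = g(\phi_t(q))/g(q)$ (and its differentiated versions), valid because the dynamics is one-dimensional with $g > 0$ on the compact torus --- the decay of $e^{-ct}$ alone would not suffice if the derivatives of the flow grew like $e^{Lt}$ with $L > c$, as they can in general.
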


\begin{proof}
Let $f_a(q)$ denote the solution to
  \[
    \forall q \in [-\pi, \pi], \quad g(q) \derivative*{1}[f_a]{q}(q) - c \, f_a(q) = h(q), \qquad f_a(- \pi) = a.
  \]
  Since $g$ is smooth and everywhere positive,
  this equation admits a unique smooth solution on the interval $[-\pi, \pi]$,
  by the standard theory of ordinary differential equations.
  Rewriting the equation for $f_a$ as
  \[
    g(q) \exp \left( \int_{-\pi}^{q} \frac{c}{g(x)} \, \d x \right) \, \derivative*{1}{q} \left[f_a(q) \,  \exp \left( - \int_{-\pi}^{q} \frac{c}{g(x)} \, \d x \right) \right] = h(q),
  \]
  it is clear that
  \begin{equation}
    f_a(q) = \exp \left( \int_{-\pi}^{q} \frac{c}{g(x)} \, \d x \right) \left[a + \int_{-\pi}^{q} \, \exp \left( - \int_{-\pi}^{y} \frac{c}{g(x)} \, \d x \right) \, \frac{h(y)}{g(y)} \, \d y \right].
  \end{equation}
  The unique value~$a^*$ for which the function is periodic (namely $f_{a^*}(\pi) = a^*$) is
  \[
    a^* = -\left[1- \exp\left(-\int_{-\pi}^{\pi} \frac{c}{g(x)} \, \d x\right)\right]^{-1}\int_{-\pi}^{\pi} \, \exp \left( - \int_{-\pi}^{y} \frac{c}{g(x)} \, \d x \right) \, \frac{h(y)}{g(y)} \, \d y.
  \]
  It is easily checked that all \revision{derivatives} of~$f_{a^*}$ are also continuous,
  so that $f_{a^*} \in C^{\infty}(\torus)$.

  To obtain the estimate~\eqref{eq:underdamped_estimate_ode},
  we introduce $r(q) := f(q) + c^{-1} h(q) + c^{-2} g(q) \, \derivative*{1}[h]{q}(q)$ and note by that, by~\eqref{eq:underdamped_ode},
  \[
    g(q) \, \derivative*{1}[r]{q}(q) - c \, r(q) = \frac{1}{c^2} \, g(q) \, (g \, \derivative*{1}[h]{q})'(q).
  \]
  The term $\derivative*{1}[r]{q}(q)$ vanishes at the extrema of $r(q)$,
  which leads then to~\eqref{eq:underdamped_estimate_ode}.
\end{proof}

\revision{We next} prove another technical result,
which we then employ to show that the leading order term $\overline{\phi}_0$ in the underdamped limit,
defined in~\eqref{eq:expression_psi0}, indeed belongs to~$H^1(\mu)$ when~$V$ is not constant.

\begin{lemma}
  \label{lemma:auxiliary_behavior_Snu}
  If $q \mapsto V(q)$ is not constant,
  then there exists $M > 0$ such that
  \begin{equation}
    \label{eq:bounds_sn}
    \forall E \in (E_0, \infty), \qquad S_{\nu}(E) \geq \frac{\nu^2}{1 + \nu^2 M} \int_\torus \sqrt{2(E_0-V(q))} \, \d q.
  \end{equation}
  In particular, $S_{\nu}(E)$ is bounded below by a positive constant \revision{uniformly in~$E\in(E_0,\infty)$}.
\end{lemma}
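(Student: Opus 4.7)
My plan is to apply Lemma~\ref{lemma:auxiliary_underdamped} to the ODE~\eqref{eq:ode_for_s_nu} satisfied by $s_\nu(\cdot,E)$, with the choices $g = P(\cdot,E)$, $h = -P(\cdot,E)$, and $c = 1/\nu^2$. Using the identity $P\,\partial_q P = -V'$ (which follows by differentiating $P^2 = 2(E-V)$), one computes $h/c = -\nu^2 P$ and $g h'/c^2 = \nu^4 V'$. The estimate~\eqref{eq:underdamped_estimate_ode} therefore yields the pointwise bound
\[
\bigl|s_\nu(q,E) - \nu^2 P(q,E) + \nu^4 V'(q)\bigr| \leq \nu^6 \|P(\cdot,E)\, V''\|_\infty.
\]

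Integrating over $\torus$ and using that $\int_\torus V'(q)\,dq = 0$ (since $V'$ is the derivative of a periodic function), together with the monotonicity estimate $S_{\rm und}(E) \geq I := \int_\torus \sqrt{2(E_0 - V(q))}\,dq$ for $E \geq E_0$ (which follows from $\sqrt{2(E-V)} \geq \sqrt{2(E_0-V)}$ pointwise), I deduce
\[
S_\nu(E) \geq \nu^2 I - 2\pi\, \nu^6\, \|V''\|_\infty \sqrt{2E}. \qquad (\ast)
\]
Because the error term in $(\ast)$ grows with $E$, I will combine $(\ast)$ with the trivial lower bound $S_\nu(E) \geq 2\pi \nu^2 \sqrt{2(E-E_0)}$ obtained in~\eqref{eq:proof_bounds_underdamped}. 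A two-regime case analysis then does the job: in the region where $2\pi \nu^4 \|V''\|_\infty \sqrt{2E} \leq I/2$, estimate $(\ast)$ directly gives $S_\nu(E) \geq \nu^2 I / 2$, which is stronger than the target; in its complement, either $E$ is large (in which case $\sqrt{2(E-E_0)} \geq c\sqrt{2E}$ for some $c>0$ by the inequality $\sqrt{2E} \leq \sqrt{2E_0} + \sqrt{2(E-E_0)}$, and the trivial bound together with the constraint defining this regime yields a lower bound proportional to $1/\nu^2$) or $\nu$ is bounded away from zero. A careful book-keeping of constants should then produce the target inequality with $M$ depending only on $\|V''\|_\infty$, $E_0$, and $I$.

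The main obstacle is the corner of parameter space where $\nu$ is large and $E$ is close to $E_0$: neither $(\ast)$ (whose error is not small) nor the trivial bound (which vanishes as $E\to E_0^+$) suffices there. In this regime, however, the hierarchy $T(E)/\nu^2 \to \infty$ as $E \to E_0^+$ (since the period $T(E)$ of the homoclinic orbit diverges while $\nu$ stays fixed) forces $s_\nu(\cdot,E)$ to equilibrate to the local stationary profile $\nu^2 P(\cdot,E)$, so that $S_\nu(E) \to \nu^2 S_{\rm und}(E_0) = \nu^2 I$. Making this quantitative (for instance by iterating Lemma~\ref{lemma:auxiliary_underdamped} once more, and noting that $\int_\torus P V''\,dq = \int_\torus (V')^2/P\,dq$ is uniformly bounded in $E \in (E_0,\infty)$ thanks to the matching orders of vanishing of $V'$ and $P$ at the critical points of $V$) should close the argument and yield the claim with a single constant $M$.
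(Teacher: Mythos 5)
There is a genuine gap, and it is precisely the one you flag yourself: the corner where $\nu$ is large (or just not small) and $E$ is close to $E_0$. The lemma must hold with a single constant $M$ for \emph{all} $\nu>0$ and all $E>E_0$ — this uniformity in $\nu$ is exactly what is used right after the lemma, where $\nu^2/S_\nu(E)\le (1+\nu^2M)/S_{\rm und}(E_0)$ is invoked for every $\nu>0$ — and your two tools cannot reach it. The expansion $(\ast)$ has an error of size $\nu^6\sqrt{E}$, which swamps the main term $\nu^2 I$ as soon as $\nu$ is of order one or larger; the trivial bound $S_\nu(E)\ge 2\pi\nu^2\sqrt{2(E-E_0)}$ vanishes as $E\to E_0^+$, while the claimed lower bound tends to the positive constant $I/M$ as $\nu\to\infty$. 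The equilibration heuristic you offer to close this corner relies on $T(E)/\nu^2\to\infty$ "while $\nu$ stays fixed", but no such hierarchy holds uniformly: for any fixed $E>E_0$ one may take $\nu^2\gg T(E)$, and then the damping $\nu^{-2}s_\nu$ is too weak for $s_\nu$ to relax to the local profile $\nu^2 P$ within one period. Iterating \cref{lemma:auxiliary_underdamped} only makes matters worse, since each iteration adds higher powers of $\nu$ ($\nu^8$, $\nu^{10}$, \dots); the expansion is intrinsically a small-$\nu$ device and cannot produce a bound uniform in $\nu\in(0,\infty)$. (A smaller, fixable slip: in your first regime the bound $\nu^2 I/2$ is \emph{weaker} than the target $\frac{\nu^2}{1+\nu^2M}I$ whenever $\nu^2M<1$, so even there the bookkeeping needs re-cutting.)

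The missing idea is to avoid expansions altogether and exploit an exact identity. Integrating \eqref{eq:ode_for_s_nu} over $\torus$ and integrating the transport term by parts gives $\int_\torus\bigl(\nu^{-2}+\partial_q P(q,E)\bigr)s_\nu(q,E)\,\d q=\int_\torus P(q,E)\,\d q$. One then shows $\bigl|\partial_q P(q,E)\bigr|=|V'(q)|/\sqrt{2(E-V(q))}\le M$ uniformly in $(q,E)\in\torus\times(E_0,\infty)$ (monotonicity in $E$ reduces this to the limit $E\to E_0^+$, handled by L'H\^opital at the points where $V=E_0$, using smoothness of $V$), and concludes with $s_\nu\ge 0$ and $P(\cdot,E)\ge P(\cdot,E_0)$ that $S_\nu(E)\ge\frac{\nu^2}{1+\nu^2M}\int_\torus\sqrt{2(E_0-V)}\,\d q$ in one stroke, for every $\nu>0$ and $E>E_0$. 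Your pointwise estimate from \cref{lemma:auxiliary_underdamped} and the identity $\int_\torus PV''\,\d q=\int_\torus (V')^2/P\,\d q$ are correct as far as they go, but they belong to the small-$\nu$ expansion of $S_\nu$ used elsewhere in that section, not to this uniform lower bound.
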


\begin{proof}
  Fix $E > E_0$. We integrate~\eqref{eq:ode_for_s_nu} over~$\torus$ and use integration by parts for the first term,
  which gives
  \begin{align}
    \label{eq:lemma_intermediate}
    \int_{\torus} \left( \frac{1}{\nu^2} + \derivative{1}[P]{q}(q, E) \right) s_{\nu}(q, E) \, \d q = \int_{\torus} P(q, E) \, \d q.
  \end{align}
  We now show that that there exists a positive constant~$M$ such that
  \begin{equation}
    \label{eq:bound_derivative}%
    \forall (q, E) \in \torus \times (E_0, \infty), \qquad \abs{\derivative{1}[P]{q}(q, E)} = \frac{\abs{\derivative*{1}[V]{q}(q)}}{\sqrt{2(E - V(q))}} \leq M.
  \end{equation}
  To this end,
  notice that $\abs{\derivative{1}[P]{q}(q, E)}$ is a decreasing function of $E$ for fixed $q$,
  so it is sufficient to show that
  \[
    \sup_{q \in \torus} \left( \lim_{E \to E_0^+} \abs{\derivative{1}[P]{q}(q, E)} \right) < \infty.
  \]
  Since $V$ is smooth,
  it holds that $\derivative*{1}[V]{q}(q) = 0$ for any $q$ such that $V(q) = E_0$, so
  \[
    L(q) := \lim_{E \to E_0^+} \abs{\derivative{1}[P]{q}(q, E)} =
   \begin{cases}
     0 & \text{ if $V(q) = E_0$, } \\
     \dps \frac{\abs{\derivative*{1}[V]{q}(q)}}{\sqrt{2(E_0 - V(q))}} & \text{ otherwise. }
   \end{cases}
  \]
  By using L'H\^opital's rule, we notice that, for values of~$q$ in a neighborhood of any $q^* \in V^{-1}\{ E_0 \}$ with $V'(q) \neq 0$,
  \begin{align*}
    \lim_{q \to q^*} \frac{\abs{\derivative*{1}[V]{q}(q)}}{\sqrt{2(E_0 - V(q))}}
    = \sqrt{\lim_{q \to q^*} \frac{\abs{\derivative*{1}[V]{q}(q)}^2}{2(E_0 - V(q))}}
    = \sqrt{ -\lim_{q \to q^*} \frac{\derivative*{1}[V]{q}(q) \, \derivative*{2}[V]{q^2}(q)}{\derivative*{1}[V]{q}(q)}}
    = \sqrt{ -\lim_{q \to q^*} \derivative*{2}[V]{q^2}(q)},
  \end{align*}
  so $L(q)$ is bounded uniformly from above,
  implying that~\eqref{eq:bound_derivative} holds.
  This concludes the proof because, by~\eqref{eq:lemma_intermediate}, and recalling that $s_\nu \geq 0$ by~\eqref{eq:proof_bounds_underdamped},
  \begin{align*}
    \int_{\torus} s_{\nu}(q, E) \, \d q
    &\geq \int_{\torus} \frac{\nu^{-2} + \abs{\derivative{1}[P]{q}(q, E)}}{ \nu^{-2} + M} s_{\nu}(q, E) \, \d q \\
    &\geq \frac{\nu^2}{1 + \nu^2 M} \, \int_{\torus} \left( \frac{1}{\nu^2} + \derivative{1}[P]{q}(q, E) \right) s_{\nu}(q, E) \, \d q
      = \frac{\nu^2}{1 + \nu^2 M} \, \int_{\torus} P(q, E) \, \d q \\
    & \geq \frac{\nu^2}{1 + \nu^2 M} \, \int_{\torus} P(q, E_0) \, \d q,
  \end{align*}
  \revision{and} the last integral is positive because~$V$ is not constant.
\end{proof}


\begin{lemma}
  If $q \mapsto V(q)$ is not constant, then the function~$\overline{\phi}_0$ defined in~\eqref{eq:phi_und} belongs to~$H^1(\mu)$.
\end{lemma}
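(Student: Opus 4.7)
Since $\overline{\phi}_0(q,p,z)$ does not depend on~$z$, we have $\partial_{z}\overline{\phi}_0=0$, and it suffices to establish that $\overline{\phi}_0 \in L^2(\mu_{\rm Lang})$ together with $\partial_q \overline{\phi}_0, \partial_p \overline{\phi}_0 \in L^2(\mu_{\rm Lang})$ (where $\mu_{\rm Lang}$ is the marginal of~$\mu$ in the $(q,p)$ variables). The strategy has three steps: first, show that $\overline{\phi}_0$ is Lipschitz as a function of $(q,p)$ despite the presence of $\sign(p)$; second, use \cref{lemma:auxiliary_behavior_Snu} to obtain a uniform upper bound on $\psi_0'$; third, bound $\overline{\phi}_0$ and its weak derivatives by functions in $L^2(\mu_{\rm Lang})$.

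For the first step, I would observe that by definition $\psi_0(E)=0$ whenever $E\leq E_0$. Since $H(q,0)=V(q)\leq E_0$ for every $q\in\torus$, the trace of $\psi_0(H(q,p))$ on the hyperplane $\{p=0\}$ is identically zero. Consequently the potentially problematic jump of $\sign(p)$ at $p=0$ is multiplied by a factor that vanishes there, and $\overline{\phi}_0$ is continuous on $\torus\times\real$. Writing $\overline{\phi}_0(q,p)=\mathbbm{1}_{p\geq 0}\psi_0(H(q,p))-\mathbbm{1}_{p<0}\psi_0(H(q,p))$, one checks the weak derivatives on each half-space and observes that the boundary terms along $\{p=0\}$ cancel in the distributional sense, again because of the vanishing trace. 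This yields
\[
  \partial_q \overline{\phi}_0(q,p) = \sign(p)\,\psi_0'(H(q,p))\,V'(q),
  \qquad
  \partial_p \overline{\phi}_0(q,p) = \sign(p)\,\psi_0'(H(q,p))\,p,
\]
where $\psi_0'(E) = 2\pi\nu^2 S_\nu(E)^{-1}\mathbbm{1}_{[E_0,\infty)}(E)$, understood in the weak sense (and pointwise away from the homoclinic level set $\{H=E_0\}$).

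The second step is immediate from \cref{lemma:auxiliary_behavior_Snu}, which under the assumption that $V$ is nonconstant provides a positive constant $c_\nu$ such that $S_\nu(E)\geq c_\nu$ for all $E>E_0$. Therefore $\|\psi_0'\|_\infty \leq 2\pi\nu^2/c_\nu$, and integrating yields the linear growth estimate $0\leq \psi_0(E)\leq (2\pi\nu^2/c_\nu)(E-E_0)_+$.

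The third step is then routine. For the $L^2$ bound we use $|\overline{\phi}_0(q,p)|^2 \leq (2\pi\nu^2/c_\nu)^2 (H(q,p)-E_0)_+^2$, which is integrable against the Gibbs measure $Z_\beta^{-1}\e^{-\beta H(q,p)}\,\d q\,\d p$ because $H^2 \e^{-\beta H}$ has finite integral. For the gradient, the bound $|\partial_q\overline{\phi}_0|\leq (2\pi\nu^2/c_\nu)\|V'\|_\infty$ is uniform in $(q,p)$ since $V$ is smooth and $q$ lives on the compact torus, while $|\partial_p\overline{\phi}_0|\leq (2\pi\nu^2/c_\nu)|p|$ is in $L^2(\mu_{\rm Lang})$ because $p$ has a Gaussian marginal. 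Combining the three steps gives $\overline{\phi}_0 \in H^1(\mu)$.

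The only mildly delicate point is step one, namely justifying the chain-rule formula for the weak derivatives across $\{p=0\}$ and across the homoclinic orbit $\{H=E_0\}$. The vanishing of $\psi_0$ at $E_0$ together with the vanishing of the trace on $\{p=0\}$ ensures that $\overline{\phi}_0$ is globally Lipschitz on $\torus\times\real$, so by Rademacher's theorem the classical derivatives (which agree with the expressions above almost everywhere) coincide with the distributional ones.
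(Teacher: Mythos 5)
Your proposal is correct and follows essentially the same route as the paper: identify the weak derivatives of $\overline{\phi}_0$ (legitimate because $\psi_0$ vanishes for $E\leq E_0$, so the trace on $\{p=0\}$ and on the homoclinic set vanishes), bound $\psi_0'$ uniformly via the lower bound on $S_\nu$ from \cref{lemma:auxiliary_behavior_Snu}, and check square-integrability of $\overline{\phi}_0$, $\partial_q\overline{\phi}_0$ and $\partial_p\overline{\phi}_0$ against the Gibbs weight, which is Gaussian in $p$. One minor slip: since $\partial_p\overline{\phi}_0=\psi_0'(H)\,|p|$ grows linearly in $p$, the function is only locally (not globally) Lipschitz, but this does not affect the argument, as local Lipschitz regularity already identifies the distributional derivatives with the a.e.\ classical ones.
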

\begin{proof}
  It is easy to see that the function~$\overline{\phi}_0$ defined in~\eqref{eq:phi_und} belongs to~$L^2(\mu)$.
  We next consider the distributional derivatives $\derivative{1}[\overline \phi_0]{y}$ for $y \in \{q,p\}$, since derivatives in~$z$ vanish: for $H(q,p) < E_0$, these derivatives vanish;
  while for $H(q,p) > E_0$, it holds \revision{$\derivative{1}[\overline \phi_0]{q} = \sign(p) \psi_0'(H(q,p)) V'(q)$} and \revision{$\derivative{1}[\overline \phi_0]{p} =  \psi_0'(H(q,p)) |p|$}.
  Note that all these derivatives belong to $L^{\infty}_{\rm loc}(\torus \times \real \times \real)$ in view of the lower bound provided by~\eqref{eq:bounds_sn}.
  Moreover, $\abs{\nabla \overline \phi_0(q, p, z)}$ grows sufficiently slowly as $|p| \to \infty$.
  This allows to conclude that $\overline{\phi}_0 \in H^1(\mu)$, as claimed.
\end{proof}

\bibliographystyle{plain}
\bibliography{paper}
\end{document}